\newcommand{\poly}{\textup{poly}}
\newcommand{\eps}{\ensuremath{\varepsilon}\xspace}
\newcommand{\ch}{\textsc{Consensus-Halving}\xspace}
\newcommand{\ich}{\ensuremath{I_{\textsc{CH}}}\xspace}
\newcommand{\dfms}{\ensuremath{I_{\textsc{CH}}^{\textsc{DFMS}}}\xspace}
\newcommand{\scut}{\textsc{Square}\xspace}
\newcommand{\isc}{\ensuremath{I_{\textsc{SC}}}\xspace}
\newcommand{\pizza}{\textsc{Pizza-Sharing}\xspace}
\def\bu/{\textup{\textsf{BU}}}
\def\linBU/{\textup{\textsf{LinearBU}}}
\def\bbu/{\textup{\textsf{BBU}}}
\def\fixp/{\textup{\textsf{FIXP}}}
\def\tfnp/{\textup{\textsf{TFNP}}}
\def\linearfixp/{\textup{\textsf{Linear-FIXP}}}
\def\ppad/{\textup{\textsf{PPAD}}}
\def\ppa/{\textup{\textsf{PPA}}}
\def\etr/{\textup{\textsf{$\exists\mathbb{R}$}}}
\def\fetr/{\textup{\textsf{FETR}}}
\def\tfetr/{\textup{\textsf{TFETR}}}
\def\betr/{\textup{\ensuremath{\textsf{ETR}_{[0,1]}}}}
\def\np/{\textup{\textsf{NP}}}
\def\fnp/{\textup{\textsf{FNP}}}
\def\pspace/{\textup{\textsf{PSPACE}}}
\def\p/{\textup{\textsf{P}}}
\newcommand{\bfeas}{\ensuremath{\textsc{Feasible}_{[0,1]}}\xspace}
\newcommand{\tucker}{\textsc{Tucker}\xspace}
\newcommand{\calI}{\ensuremath{\mathcal{I}}\xspace}
\newcommand{\calL}{\ensuremath{\mathcal{L}}\xspace}
\newcommand{\reals}{\ensuremath{\mathbb{R}}\xspace}
\newcommand{\naturals}{\ensuremath{\mathbb{N}}\xspace}
\newcommand{\area}{\ensuremath{\text{area}}}
\newcommand{\lplus}{``$+$''\xspace}
\newcommand{\lminus}{``$-$''\xspace}
\newcommand{\rplus}{\ensuremath{R^+}\xspace}
\newcommand{\rminus}{\ensuremath{R^-}\xspace}
\newcommand{\spizza}{\textsc{Straight-Pizza-Sharing}\xspace}
\newcommand{\dspizza}{\textsc{Discrete-Straight-Pizza-Sharing}\xspace}
\newcommand{\dpizza}{\textsc{Discrete-Square-Pizza-Sharing}\xspace}
\providecommand{\ceil}[1]{\ensuremath{\left \lceil #1 \right \rceil }}
\providecommand{\floor}[1]{\ensuremath{\left \lfloor #1 \right \rfloor }}
\DeclareSymbolFont{yhlargesymbols}{OMX}{yhex}{m}{n}
\DeclareMathAccent{\trngl}{\mathord}{yhlargesymbols}{"E6}
\newcommand{\epsborsuk}{\ensuremath{\eps\textsc{-Borsuk-Ulam}}\xspace}
\newcommand{\myparagraph}[1]{\vskip0.5\baselineskip \noindent {\bf #1}}
\newtheorem{theorem}{Theorem}
\newtheorem{lemma}[theorem]{Lemma}
\newtheorem{definition}[theorem]{Definition}
\newtheorem{proposition}[theorem]{Proposition}
\newtheorem{remark}[theorem]{Remark}
\newtheorem{claim}[theorem]{Claim}
\title{Pizza Sharing is PPA-hard}
\author{Argyrios Deligkas\thanks{Royal Holloway University of London, UK. 
		email: \texttt{argyrios.deligkas@rhul.ac.uk}} 
	\and John Fearnley\thanks{University of Liverpool, UK. 
		email: \texttt{john.fearnley@liverpool.ac.uk}} 
	\and Themistoklis Melissourgos\thanks{University of Essex, UK. 
		email: \texttt{themistoklis.melissourgos@essex.ac.uk}}
}
\date{\vspace{-1.0cm}}
\begin{document}
	
	\maketitle
	
	\begin{abstract}
		We study the computational complexity of finding a solution for the straight-cut and square-cut pizza sharing problems. We show that computing an \eps-approximate solution is \ppa/-complete for both problems, while finding an exact solution for the square-cut problem is \fixp/-hard. Our \ppa/-hardness results apply for any $\eps < 1/5$, even when all mass distributions consist of non-overlapping axis-aligned rectangles or when they are point sets, and our \fixp/-hardness result applies even when all mass distributions are unions of squares and right-angled triangles. We also prove that the decision variants of both approximate problems are \np/-complete, while the decision variant for the exact version of square-cut pizza sharing is \etr/-complete.
	\end{abstract}
	
	\newpage
	
	\tableofcontents
	
	\newpage
	
	\section{Introduction}
	
	\emph{Mass partition problems} ask to fairly divide measurable objects that
	are embedded into Euclidean space \cite{RS20}. Perhaps the most popular mass
	partition problem is the \emph{ham sandwich} problem, in which three masses
	are given in three-dimensional Euclidean space, and the goal is to find a single
	plane that cuts all three masses in half. Recently, there has been interest in 
	\emph{pizza sharing} problems, which are mass partition problems in the 	
	two-dimensional plane, and in this work we study the computational complexity
	of such problems.
	
	In the \emph{straight-cut} pizza sharing problem, we are given $2n$
	two-dimensional masses in the plane, and we are asked to find straight
	lines (see \cref{fig:cuts-and-paths-a} for a depiction) that simultaneously bisect all of the masses. It has been shown that this problem always has a solution when we have $n$ straight lines available: the first result on the topic showed that solutions always exist when $n = 2$ \cite{BPS19}, and this was subsequently extended to show existence for all $n$ \cite{HK20}.

	\begin{figure}
		\begin{subfigure}[b]{0.3\textwidth}
			\centering
			\includegraphics[width=0.7\linewidth]{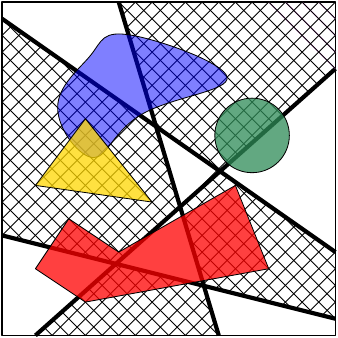}
			\caption{A set of straight-cuts with four lines.} \label{fig:cuts-and-paths-a}
		\end{subfigure}%
		\hspace*{\fill}   
		\begin{subfigure}[b]{0.3\textwidth}
			\centering
			\includegraphics[width=0.7\linewidth]{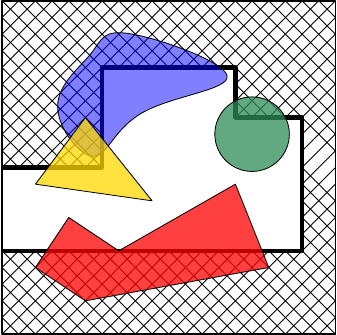}
			\caption{A square-cut-path with six turns (not $y$-monotone).} \label{fig:cuts-and-paths-b}
		\end{subfigure}%
		\hspace*{\fill}   
		\begin{subfigure}[b]{0.3\textwidth}
			\centering
			\includegraphics[width=0.7\linewidth]{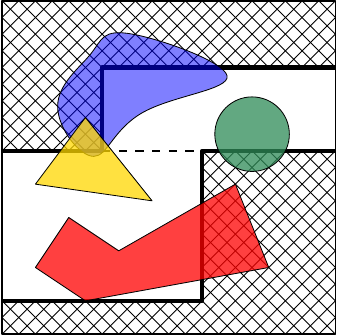}
			\caption{A $y$-monotone square-cut-path with four turns.} \label{fig:cuts-and-paths-c}
		\end{subfigure}%
		\caption{An example with 4 masses and various partitions of the plane into two regions, namely the shaded and non-shaded one. In a solution, each region contains half the area of each mass.}
		\label{fig:cuts-and-paths}
	\end{figure}	
	
	Another related problem is the \emph{square-cut} pizza sharing. In this problem,
	there are $n$ masses in the plane, and the task is to simultaneously
	bisect all masses using cuts, but the method of generating the cuts is
	different. Specifically, we seek a \emph{square-cut}, which consists of a single
	path that is the union of horizontal and vertical line segments. See
	\cref{fig:cuts-and-paths-b} and \cref{fig:cuts-and-paths-c} for two
	examples of square-cuts. Intuitively, we can imagine that a pizza cutter is
	placed on the plane, and is then moved horizontally and vertically without being
	lifted in order to produce the cut. Note that the path is allowed to wrap around
	in the horizontal axis: if it exits the left or right boundary, then it
	re-appears on the opposite boundary. So the cut in
	\cref{fig:cuts-and-paths-c} is still considered to be a single square-cut.
	
	It has been shown by \cite{KRS16} that, given $n$ masses, there always exists a \emph{square-cut-path} (termed \emph{\scut-path}) which makes at most $n-1$ turns and simultaneously bisects all of the
	masses. This holds even if the \scut-path is required to be
	\emph{$y$-monotone}, meaning that someone moving on the path would either never head South or never head North (e.g., \cref{fig:cuts-and-paths-c}).
	
	Two-dimensional fair division is usually called \textit{land division} in the literature. Land division is a prominent topic of interest in the Economics and AI communities that studies ways of fairly allocating two-dimensional objects among $n$ agents \cite{C05, SNHA17, SNHA20, ESS21, AD15, IH09, H11}. Popular mathematical descriptions of fair division problems first appeared in \cite{S48}, and since then, the existence of allocations under various fairness criteria have been extensively studied, together with algorithms that achieve them. These problems find applications from division of resources on land itself, to the Law of the Sea \cite{SS03-Consensus}, to redistricting \cite{L09, LS14}.
	
	Consensus halving is a problem that asks us to split a one-dimensional resource into two parts such that $n$ agents have equal value in both parts. Here, we study the same fairness criterion for $n$ agents, but for a two-dimensional resource. One can see that when we have the same fairness criterion at hand for any $k$-dimensional resource, $k \geq 2$, we can always translate the problem into its one-dimensional version, by integrating each agent's measure to a single dimension. Then a solution can be given by applying consensus halving. However, the solutions we get by doing so, are not taking into account the dimensionality of the problem, and as a result they might produce very unnatural solutions to a high-dimensional problem. For example, in land division, applying consensus halving would produce two parts, each of which can possibly be a union of $\ceil{n/2}$ disjoint land strips. \emph{Can we get better solutions by exploiting all the dimensions of the problem?}
	
	In this work we investigate different cutting methods of the two-dimensional objects, and in particular, two pizza sharing methods for which a solution is guaranteed. While based on intuition one might assume that exploiting the two dimensions would allow the complexity of finding a solution to be lower, our results show that this is not the case. We present polynomial time reductions from the one-dimensional problem to the two-dimensional problems showing that the latter ones are at least as hard as the former, i.e., \ppa/-hard.

	\myparagraph{Computational complexity of fair division problems.}
	There has been much interest recently in the computational complexity of fair
	division problems. In particular, the complexity class \ppa/ has risen to
	prominence, because it appears to naturally capture the complexity
	of solving these problems. For example, it has recently been shown by \cite{FRG18-Consensus, FRG18-Necklace} that the consensus halving problem, the ham
	sandwich problem, and the well-known necklace splitting problem are all
	\ppa/-complete. 
	
	More generally, \ppa/ captures all problems whose solution is verifiable in polynomial time and is guaranteed by the Borsuk-Ulam theorem. Finding an approximate solution to a Borsuk-Ulam function, or finding an exact solution to
	a linear Borsuk-Ulam function are both known to be \ppa/-complete
	problems \cite{Papadimitriou94-TFNP-subclasses, DFMS21}. The
	existence of solutions to the ham sandwich problem, the necklace splitting
	problem, and indeed the square-cut pizza sharing problem can all be proved via
	the Borsuk-Ulam theorem\footnote{It has also been shown by \cite{KA17} that the Borsuk-Ulam theorem is equivalent to the \emph{ham sandwich theorem} which states that the volumes of any $n$ compact sets in $\reals^n$ can always be simultaneously bisected by an $(n-1)$-dimensional hyperplane.}.
	
	\begin{theorem}[Borsuk-Ulam]
		\label{thm:borsuk-ulam}
		Let $f : S^d \to \reals^d$ be a continuous function, where $S^d$ is a
		$d$-dimensional sphere. Then, there exists an $x \in S^d$ 
		such that $f(x)=f(-x)$.
	\end{theorem}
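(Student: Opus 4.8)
The plan is to reduce the statement to the non-existence of a continuous antipodal map $S^d \to S^{d-1}$, and then establish that non-existence combinatorially via \tucker's lemma. First I would define $g \colon S^d \to \reals^d$ by $g(x) = f(x) - f(-x)$. This map is continuous, it is \emph{odd} in the sense that $g(-x) = -g(x)$, and $f(x) = f(-x)$ holds precisely when $g(x) = 0$; so it suffices to show that $g$ has a zero. Suppose for contradiction that $g(x) \neq 0$ for every $x \in S^d$. Then $h(x) := g(x)/\|g(x)\|$ is a well-defined continuous map $h \colon S^d \to S^{d-1}$, and it inherits oddness: $h(-x) = -h(x)$.

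The heart of the argument is the claim that no continuous odd map $h \colon S^d \to S^{d-1}$ exists. I would prove this with \tucker's lemma. Fix an antipodally symmetric triangulation $T$ of $S^d$, refined enough (using uniform continuity of $h$ on the compact set $S^d$) that the $h$-image of every simplex of $T$ has Euclidean diameter less than a constant $\delta > 0$ to be chosen below. Define a labeling $\lambda$ of the vertices of $T$ with values in $\{\pm 1, \dots, \pm d\}$ by $\lambda(v) = \sigma(v)\cdot k(v)$, where $k(v) \in \{1,\dots,d\}$ is the index of a coordinate of $h(v) \in \reals^d$ of maximum absolute value (ties broken by smallest index), and $\sigma(v) \in \{+1,-1\}$ is the sign of that coordinate. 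Since $h(-v) = -h(v)$ we get $k(-v)=k(v)$ and $\sigma(-v)=-\sigma(v)$, so $\lambda(-v) = -\lambda(v)$, i.e.\ $\lambda$ is an admissible antipodal \tucker labeling.

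By \tucker's lemma there is a complementary edge $\{u,v\}$ of $T$, say $\lambda(u)=+k$ and $\lambda(v)=-k$. As $h(u)$ is a unit vector whose $k$-th coordinate is the largest in absolute value, that coordinate is at least $1/\sqrt{d}$; likewise the $k$-th coordinate of $h(v)$ is at most $-1/\sqrt{d}$, so $\|h(u)-h(v)\| \ge 2/\sqrt{d}$. But $u$ and $v$ lie in a common simplex of $T$, hence $\|h(u)-h(v)\| < \delta$; choosing $\delta \le 2/\sqrt{d}$ at the outset gives a contradiction. This rules out the odd map $h$, so $g$ must vanish somewhere, which is exactly the claim $f(x)=f(-x)$.

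The main obstacle is \tucker's lemma itself, which is the genuinely non-trivial ingredient: a self-contained proof needs a parity / path-following argument over the simplices of $T$ — essentially the combinatorial phenomenon that makes \borsuk a \ppa problem in the first place. An alternative that sidesteps \tucker is to prove the no-odd-map claim by induction on $d$, using that an odd self-map of $S^{d-1}$ has odd (hence nonzero) degree and is therefore not null-homotopic, which contradicts the fact that $h$ restricted to an equatorial $S^{d-1}$ extends continuously over a hemispherical $d$-disc; but this merely relocates the difficulty into the degree computation, which is the homological counterpart of the same combinatorial core.
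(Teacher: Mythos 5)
The paper does not prove Theorem~\ref{thm:borsuk-ulam}; it is the classical Borsuk--Ulam theorem (Borsuk, 1933), stated as a black box and used as the existence principle behind the class \bu and the totality of \scut-\pizza. There is therefore no ``paper proof'' to compare against, only the question of whether your argument is correct.

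It is, and it is essentially the canonical combinatorial proof. Passing from $f$ to the odd map $g(x)=f(x)-f(-x)$ and then, assuming no zero, to the normalized odd map $h\colon S^d\to S^{d-1}$ is standard. The application of \tucker's lemma with the max-coordinate labeling is correct, and the quantitative contradiction is sound: for a complementary edge $\{u,v\}$ with labels $\pm k$, one has $1=\|h(u)\|^2\le d\,|h(u)_k|^2$, hence $h(u)_k\ge 1/\sqrt{d}$ and $h(v)_k\le -1/\sqrt{d}$, giving $\|h(u)-h(v)\|\ge 2/\sqrt{d}$, which contradicts the mesh bound once $\delta\le 2/\sqrt{d}$. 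Two details you handled implicitly but should make explicit in a polished write-up: the tie-breaking rule for $k(v)$ must be negation-invariant (smallest index of maximal absolute value is, since $|{-a}|=|a|$), so that $\lambda$ is genuinely an odd labeling; and you are invoking the form of \tucker's lemma for antipodally symmetric triangulations of $S^d$ (equivalently, triangulations of $B^d$ with antipodal boundary labeling), which deserves a citation or a one-line reduction. Your closing remark that the real content sits inside \tucker's lemma is exactly the right perspective in this paper's context, since \tucker is the combinatorial lemma whose search version is \ppa-complete and is thus the same parity phenomenon the paper exploits computationally.
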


    The other class of relevance here is the class \fixp/, defined by Etessami and Yannakakis \cite{EY10}. This is the class of problems that can
	be reduced in polynomial time to the problem of finding an exact fixed point of a Brouwer
	function. It is known by the aforementioned work, that $\fixp/$ contains the problem Square Root Sum, which has as input positive integers $a_1, \dots, a_n$ and $k$, and asks whether $\sum_{i=1}^{n} \sqrt{a_i} \leq k$. The question of whether Square Root Sum is in \np/ has been open for more than 40 years (\cite{GGJ76, P77, T92}). Furthermore,
	since there exist Brouwer functions that only have irrational fixed points, it
	is not expected that \fixp/ will be contained in \fnp/. In \cite{DFMS21}, it was shown that exact consensus halving is \fixp/-hard. 
	
	\myparagraph{Our contribution.} 
	We study the computational complexity of the straight-cut and square-cut pizza
	sharing problems,
	and we specifically study the cases where (i) all mass distributions are unions of weighted polygons (continuous version), and (ii) we are given unweighted point sets (discrete version). 
	We show that it is \ppa/-complete to find approximate solutions for the continuous and discrete versions of the two problems, while their decision variants are \np/-complete. Also, for the continuous version of the square-cut pizza sharing problem, we show that finding an exact solution is \fixp/-hard, while its decision variant is \etr/-complete.
	
	To the best of our knowledge, currently, there are no problems in computational geometry with \ppa/-hardness results other than discrete ham sandwich \cite{FRG18-Necklace}. We also note that pizza
	sharing problems do not need a circuit as part of the input, which makes them in
	some sense more ``natural'' than problems that are specified by circuits. Other
	known \ppa/-hard problems of this kind are one-dimensional, such as consensus
	halving~\cite{FRHSZ2020consensus-easier} and necklace splitting~\cite{FRG18-Necklace}, or problems with unbounded dimensions, such as discrete ham sandwich.
	Here we show the first \ppa/-hardness result for a ``natural'' two-dimensional
	problem. It is worth mentioning that shortly after the appearance of our result, Schnider in \cite{S21} proved the following: (a) the discrete version of straight-cut pizza sharing where each mass is represented by unweighted points is PPA-complete, and (b) for a more general input representation than ours, to find an exact solution in its continuous version is \fixp/-hard, and the decision variant is \etr/-complete.
	
	For both the straight-cut and the square-cut pizza sharing problems, namely \eps-\spizza and \eps-\scut-\pizza, we show that it is \ppa/-complete to find an
	$\eps$-approximate solution for any constant $\eps \in (0,1/5)$. This holds even when $n + n^{1-\delta}$ lines are permitted in a straight-cut pizza sharing instance with $2n$ mass distributions, and when $n - 1 + n^{1-\delta}$ turns of the square-cut path are permitted in a square-cut pizza sharing instance with $n$ mass distributions, for any constant $\delta \in (0,1]$. Furthermore, the \ppa/-hardness holds even when each mass distribution is uniform over polynomially many axis-aligned rectangles, and there is no overlap between any two mass distributions. The inapproximability for such high values of \eps is possible due to a recent advancement in the inapproximability of consensus halving \cite{DFHM25}. 
    
    The \ppa/ membership of straight-cut and square-cut pizza sharing holds even for inverse polynomial and inverse exponential \eps, respectively, and for weighted polygons with holes (arguably, a very general type of allowed input). To prove the \ppa/ membership of the square-cut pizza sharing problem, we first turn the original topological proof by \cite{KRS16} into an algorithmic one.
    Furthermore, we show that there is a constant $\eps > 0$ such that it is \np/-complete to decide whether an $\eps$-approximate solution of straight-cut pizza sharing with at most $n-1$ lines (resp. an $\eps$-approximate solution of square-cut pizza sharing with at most $n-2$ turns) exists. All of these results hold also for the discrete version of the problems. 

	We then turn our attention to the computational complexity of finding an
	\emph{exact} solution to the square-cut problem. We show that the problem
	of finding a \scut-path with at most $n-1$ turns that exactly bisects $n$
	masses is \fixp/-hard. This hardness result applies even if all
	mass distributions are unions of weighted axis-aligned squares and right-angled
	triangles. In order to prove this, we reduce from the problem of finding an exact \ch solution \cite{DFMS21}. Regarding the decision variant of the square-cut problem, we show that deciding whether there exists an exact solution with at most
	$n-2$ turns is \etr/-complete, where \etr/ consists of every decision problem that
	can be formulated in the existential theory of the reals (see \cref{sec:preliminaries} for its definition). All of our hardness results are summarized in \cref{tbl:straight-results} and \cref{tbl:results}.

	From a technical viewpoint, our \ppa/ membership result for straight-cut pizza sharing is based on a reduction that transforms mass distributions to point sets in general position and then employs a recent result by \cite{S21}. For the membership results of square-cut pizza sharing, our proof strategy is different, since we are able to directly reduce it to the \epsborsuk problem (see \cref{def: eps-Borsuk-Ulam_prob}). Our
	hardness results are obtained by reducing from the consensus halving
	problem, historically the first fair-division problem shown to be \ppa/-complete \cite{FRG18-Consensus}.

	\myparagraph{Further related work.}
	Since mass partitions lie in the intersection of Topology, Discrete Geometry, and Computer Science, there are several surveys on the topic; \cite{blagojevic2018topology,de2019discrete,Mat03BorsukUlam,ziv17} focus on the topological point of view, while \cite{agarwal1999geometric,edelsbrunner2012algorithms, kaneko2003discrete,matousek2013lectures} focus on computational aspects. Consensus halving \cite{SS03-Consensus} is the mass partition problem that received the majority of attention in the area of Economics and Computation so far \cite{DFRH,filos2018hardness,FRG18-Necklace,FRHSZ2020consensus-easier,filos2020topological}. Recently, Haviv \cite{H22} showed \ppa/-completeness of finding fair independent sets on cycle graphs, having as a starting point the latter problem.
	
	\begin{table}
		\centering
		\scalebox{1.0}{
                \begin{tabular}{lccccc}
                \hline
                \multicolumn{1}{c}{Hardness} & \eps & Lines            & Pieces    & Overlap & Theorem \\ \hline
                \multicolumn{6}{c}{Point sets}                                                               \\ \hline
                \ppa/                          & 1/5        & $n+n^{1-\delta}$ & -         & -       & \ref{thm: discr-straight-pizza}       \\
                \np/                           & $c$        & $n-1$            & -         & -       & \ref{thm: discr-straight-pizza-np-hard}       \\
                \multicolumn{6}{l}{}                                                                         \\ \hline
                \multicolumn{6}{c}{Mass distributions}                                                       \\ \hline
                \ppa/                          & 1/5        & $n+n^{1-\delta}$ & $\poly(n)$ & $1$     & \ref{thm:straight-pizza-PPA}       \\
                \np/                           & $c$        & $n-1$            & $\poly(n)$ & $1$     & \ref{thm:straight-np-h-overlap}      
                \end{tabular}
                }
		\caption{A summary of our hardness results for \eps-\spizza. Here, $c$ and $\delta$ are absolute, positive constants. 
			``Lines'' refers to the number of cut-lines allowed in a solution. 
			``Pieces'' refers to the maximum number of distinct polygons that define every mass distribution.
			``Overlap'' refers to the maximum number of different mass distributions allowed to contain any point of $[0,1]^2$.}
		\label{tbl:straight-results}
	\end{table}

        \begin{table}
		\centering
		\scalebox{1.0}{
            \begin{tabular}{lccccc}
            \hline
            \multicolumn{1}{c}{Hardness} & \eps & Turns              & Pieces    & Overlap & Theorem \\ \hline
            \multicolumn{6}{c}{Point sets}                                                                 \\ \hline
            \ppa/                          & 1/5        & $n-1+n^{1-\delta}$ & -         & -       & \ref{thm: discr-sc-pizza}       \\
            \np/                           & $c$        & $n-2$              & -         & -       & \ref{thm: discr-sc-pizza-np-hard}       \\
            \multicolumn{6}{l}{}                                                                           \\ \hline
            \multicolumn{6}{c}{Mass distributions}                                                         \\ \hline
            \ppa/                          & 1/5        & $n-1+n^{1-\delta}$ & $\poly(n)$ & $1$     & \ref{thm:hvu-pizza-ppa-h}       \\
            \np/                           & $c$        & $n-2$              & $\poly(n)$ & $1$     & \ref{thm:square-np-h-overlap}       \\ \hline
            \fixp/                         & 0          & $n-1$              & $6$       & $3$     & \ref{thm:fixp-h}       \\
            \etr/                          & 0          & $n-2$              & $6$       & $3$     & \ref{thm:etr-h}      
            \end{tabular}
		}
		\caption{A summary of our hardness results for \eps-\scut-\pizza. Here, ``turns'' refers to the number of turns a solution (\scut-path) is allowed to have. The definitions of $c$, $\delta$ and the semantics of ``pieces'', and ``overlap'' are the same as those of \cref{tbl:straight-results}.}
		\label{tbl:results}
	\end{table}

	\section{Preliminaries}\label{sec:preliminaries}
	
	\myparagraph{Mass distributions and point sets.}
	A \emph{mass distribution} $\mu$ on $[0,1]^2$ is a measure on the plane such that all open subsets of $[0,1]^2$ are measurable, $0 < \mu\left([0,1]^2\right) < \infty$, and $\mu(S) = 0$ for every subset of $[0,1]^2$ with dimension lower than 2. For any given $d \in \naturals^*$ we denote $[d] := \{ 1, 2, \dots, d \}$, and we denote by $\bigsqcup$ the union of disjoint sets. For every $S \in [0,1]^2$ we denote by $\area(S)$ the Lebesgue measure of $S$ on $\reals^2$, i.e., the area of $S$.
	Let a mass distribution $\mu$ be described by a finite set of non-overlapping regions $a_1, a_2, \ldots, a_d$, i.e., $\bigsqcup_{j = 1}^{d} a_j = [0,1]^2$, such that $\sum_{j=1}^d \mu(a_j) = \mu \left( [0,1]^2 \right)$. Then, $\mu$ is \emph{piece-wise uniform} if for every $j$ and every
	$S \subseteq a_j$ it holds that $\mu(S) = w_j \cdot \area(S)$
	for some \emph{weight} $w_j > 0$ independent of $S$. When additionally $w_j = w_k$ for all $j,k \in [d]$ then the mass distribution is called \emph{uniform}. The \emph{support} of mass distribution $i \in [n]$, denoted by $supp(i)$, is the area $A_i \subseteq [0,1]^2$ which has the property that for every $S \subseteq A_i$ with $\area(S) > 0$ we have $\mu_{i}(S) > 0$.
	Let $N := \{ I \subseteq [n] : \bigcap_{i \in I} supp(i) \neq \emptyset \} $. A set of mass distributions $\mu_{1}, \dots, \mu_{n}$, or \emph{colours}, has \emph{overlap} $k$ if $\max_{I \in N} | I | = k $. 
        Finally, a mass distribution is \emph{normalised} if $\mu([0,1]^2) = 1$. For ease of presentation, all our additive approximation results on the continuous versions of the problems assume that all mass distributions are normalised, which is without loss of generality.

    A point set $P = (p_1, p_2, \dots, p_d)$ on $[0,1]^2$ consists of $d \in \naturals^*$ many non-overlapping point masses. Throughout this work, the points that will be considered in the discrete versions of our problems have the same finite weight, so when we partition them (by partitioning $[0,1]^2$), it suffices to measure the cardinality of the points in each part.

	\myparagraph{Set of straight-cuts.}
	A set of \emph{straight-cuts}, or \emph{cut-lines}, or simply \emph{lines} defines subdivisions of the plane $R$. \cref{fig:cuts-and-paths-a} shows an example of a set of straight-cuts. Each line creates two half-spaces, and arbitrarily assigns number ``$0$'' to one and ``$1$'' to the other. A subdivision of $R$ is labeled ``$+$'' (and belongs to \rplus) if its parity is odd (according to the labels given to the half-spaces) and ``$-$'' (and belongs to \rminus) otherwise. Observe that by flipping the numbers of two half-spaces defined by a line, we flip all the subdivisions' labels. Thus, there are only two possible labelings of the subdivisions. 
	
	\myparagraph{Square-cut-path.}
	A \emph{square-cut-path}, denoted for brevity \scut-path, is a non-crossing
	directed path that is formed only by horizontal and vertical line segments
	and in addition it is allowed to ``wrap around'' in the horizontal dimension.  
	\cref{fig:cuts-and-paths-b} and \cref{fig:cuts-and-paths-c} show two examples of \scut-paths.
	A \emph{turn} of the path is where a horizontal segment meets with a vertical 
	segment. A \scut-path is \emph{$y$-monotone} if
	all of its horizontal segments are monotone with respect to the $y$ axis. Any \scut-path partitions the plane $R$ into two regions, namely, \rplus and \rminus, so that the following holds: for any two points of the plane, if the straight line that connects them intersects once the path, then the two points have opposite labels.\footnote{Notice that a path can pass multiple times from the same point.}

	\myparagraph{Pizza sharing.}
	A set of lines (resp. a \scut-path) \emph{\eps-bisects} a mass
	distribution $\mu$, if $|\mu(\rplus) - \mu(\rminus)| \leq \eps$. It
	\emph{simultaneously \eps-bisects} a set of mass distributions $\mu_1, \dots, \mu_n$ if $|\mu_i(\rplus) - \mu_i(\rminus)| \leq \eps$ for every $i \in [n]$.

	\begin{mdframed}[backgroundcolor=white!90!gray,
		leftmargin=\dimexpr\leftmargin-20pt\relax,
		innerleftmargin=4pt,
		innertopmargin=4pt,
		skipabove=5pt,skipbelow=5pt]
		\begin{definition}
			\label{def:eps-pizza} For any $n \geq 1$, the problem \emph{\eps-\spizza} is defined as follows:
			\begin{itemize}
				\item \textbf{Input:} $\eps \geq 0$, and mass distributions $\mu_1, \mu_2, \ldots, \mu_{2n}$ on $[0,1]^2$.
				\item \textbf{Output:} A partition of $[0,1]^2$ to \rplus and \rminus using at most $n$ lines such that for each $i \in [2n]$ it holds that $|\mu_i(\rplus) - \mu_i(\rminus)| \leq \eps$.
			\end{itemize}
		\end{definition}
	\end{mdframed} \vspace{5pt}
	\begin{mdframed}[backgroundcolor=white!90!gray,
		leftmargin=\dimexpr\leftmargin-20pt\relax,
		innerleftmargin=4pt,
		innertopmargin=4pt,
		skipabove=5pt,skipbelow=5pt]
		\begin{definition}
			\label{def:eps-hv-pizza} For any $n \geq 1$, the problem \emph{\eps-\scut-\pizza} is defined as follows:
			\begin{itemize}
				\item \textbf{Input:} $\eps \geq 0$, and mass distributions $\mu_1, \mu_2, \ldots, \mu_{n}$ on $[0,1]^2$.
				\item \textbf{Output:} A partition of $[0,1]^2$ to \rplus and \rminus using a \scut-path with at most $n-1$ turns such that for each $i \in [n]$ it holds that $\left| \mu_i(\rplus) - \mu_i(\rminus) \right| \leq \eps$.
			\end{itemize}
		\end{definition}
	\end{mdframed} \vspace{5pt}
	
	In \cite{HK20} and \cite{KRS16} it was proved that \eps-\spizza and \eps-\scut-\pizza, respectively, always admit a solution for arbitrary absolutely continuous masses with respect to the Lebesgue measure (i.e., area), and for any $\eps \geq 0$ (see Theorem 1 of the former, and Theorem 2.4 of the latter work). While the aforementioned results hold for such general measures, for the computational problems \eps-\spizza and \eps-\scut-\pizza we need a standardized way to describe the input, and therefore restrict to particular classes of measures. We consider the class of mass distributions that are defined by weighted simple polygons with holes. This class consists of mass distributions that can be succinctly represented in the input of a Turing machine, while at the same time provide great expressive power. 
	
	In particular, we will use the standard representation of 2-dimensional simple polygons in computational geometry problems, that is, a directed chain of points\footnote{From this point on, whenever we refer to polygons we will implicitly assume that they are simple polygons.}. Consider a polygon that is defined by $k$ points $p_{i} = (x_i,y_i)$, where $x_i, y_i \in [0,1] \cap \mathbb{Q}$, for $i \in [k]$, which form a directed chain $C = (p_1, \dots, p_k)$. This chain represents a closed boundary defined by the line segments $(p_i, p_{i+1})$ for $i \in [k-1]$ and a final one $(p_{k}, p_{1})$. Since we consider polygons with holes, we need a way to distinguish between the polygons that define a boundary whose interior has strictly positive weight and polygons that define the boundary of the holes (whose interior has zero weight). We will call the former \textit{solid} and the latter \textit{hollow} polygon. To distinguish between the two, we define a solid polygon to be represented by directed line segments with counterclockwise orientation, while a hollow polygon to be represented similarly but with clockwise orientation. Furthermore, each solid polygon $C_s$, its weight $w$ and its $r \geq 0$ holes $C_{h_1}, C_{h_{2}}, \dots, C_{h_{r}}$ in the interior, are grouped together in the input to indicate that all these directed chains of points represent a single polygon $(w, C_s, C_{h_1}, \dots, C_{h_r})$.
	
	Although it is not hard to construct instances of \eps-\spizza (resp. \eps-\scut-\pizza) where $n$ lines (resp. $n-1$ turns for any \scut-path) are necessary in order to constitute a solution, there might be cases where a solution can be achieved with fewer lines (resp. a \scut-path with fewer turns). Hence, we also study the decision variant of these problems, in which we ask whether we can find a solution with at most $k$ lines (resp. $k$ turns), where $k < n$ (resp. $k < n-1$). Note also that, due to the normalization assumption of the considered measures, $\eps \in [0,1]$.

	\myparagraph{Consensus halving.}
	The main hardness results of this work are proved by
	reductions from the consensus halving problem. 
	
	In the \eps-\ch\ problem, there is a set of $n$ agents with \emph{valuation density functions} $v_i : [0,1] \to \reals_{\geq 0}$, $i \in [n]$. For any given interval $[a,b]$, let us denote $v_i ([a,b]) := \int_{a}^{b} v_i (x) \, dx$. The goal is to find a partition of the $[0,1]$ interval into subintervals labelled either \lplus or \lminus, using at most $n$ cuts. This partition should satisfy that for every agent $i$, the total value for the union of subintervals $\calI^{+}$ labelled \lplus and the total value for the union of subintervals $\mathcal{I}^{-}$ labelled \lminus is the same up to \eps, i.e., $|v_i(\calI^{+}) - v_i(\calI^{-})| \leq \eps$. Furthermore, in order for \eps to be meaningful, we consider normalized valuation functions, that is, $v_i ([0,1]) = 1$ for all $i \in [n]$, which implies that $\eps \in [0,1]$. In our results, we will use the following types of valuation functions (see \cref{fig:ch-vals} for a depiction).
	\begin{itemize}
		\item $k$-\emph{block}: consists of at most $k$ non-overlapping (but possibly adjacent) intervals $[a^\ell_{1}, a^r_{1}], \ldots, [a^\ell_{k}, a^r_{k}]$ where interval $[a^\ell_{j}, a^r_{j}]$ has density $c_{j} > 0$, and 0 otherwise. So, $v([a^\ell_{j},x]) = (x-a^\ell_{j})\cdot c_{j}$ for every $x \in [a^\ell_{j},a^r_{j}]$.
		\item \emph{$k$-block uniform}: $k$-block, where the density of every interval is $c > 0$ (same for all blocks).
		\item \emph{$k$-block-triangle}: union of a $k$-block valuation function and an extra interval $[a^\ell_{1}, a^r_{1}]$, where interval $[a^\ell_{1}, a^r_{1}]$ has density $2 \cdot (x - a^\ell_{1}) \cdot c_1$ for some $c_1 > 0$, therefore $v([a^\ell_{1},x]) = (x - a^\ell_{1})^2 \cdot c_1$ for every $x \in [a^\ell_{1}, a^r_{1}]$. Also, $(a^\ell_{1}, a^r_{1}) \cap [a^\ell_{j}, a^r_{j}] = \emptyset$ for every $j \in [k]$. 
	\end{itemize}

	\begin{figure}[htbp]
		\begin{subfigure}{0.31\textwidth}
			\centering
			\includegraphics[width=\linewidth]{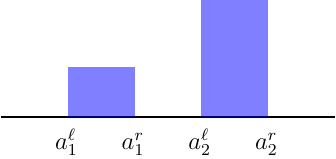}
			\caption{2-block valuation} \label{fig:ch-vals-a}
		\end{subfigure}%
		\hspace*{\fill}   
		\begin{subfigure}{0.31\textwidth}
			\centering
			\includegraphics[width=\linewidth]{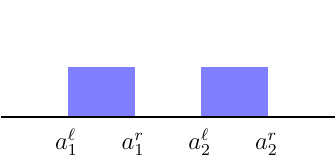}
			\caption{2-block uniform valuation} \label{fig:ch-vals-b}
		\end{subfigure}%
		\hspace*{\fill}   
		\begin{subfigure}{0.31\textwidth}
			\centering
			\includegraphics[width=\linewidth]{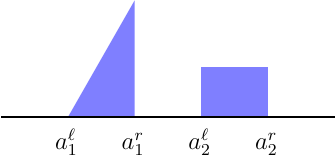}
			\caption{1-block-triangle valuation} \label{fig:ch-vals-c}
		\end{subfigure}%
		\caption{Some examples of valuation functions.}
		\label{fig:ch-vals}
	\end{figure}

	\myparagraph{Complexity classes.} 
	\eps-\spizza and \eps-\scut-\pizza are examples of \emph{total} problems, which are problems
	that always have a solution.
	The complexity class \tfnp/ (\textup{\textsf{Total Function}} \np/) defined in \cite{Megiddo1991},
	contains all total problems whose solutions can be verified in polynomial time. 
 
    In this work, we will focus on a well-known subclass of \tfnp/, namely \ppa/, defined by Papadimitriou \cite{Papadimitriou94-TFNP-subclasses}. This class captures problems whose totality is guaranteed by the \emph{parity argument} on undirected graphs: if there is an odd-degree vertex then there is another one. In the typical \ppa/ problem, \ensuremath{\textsc{EndOfUndirectedLine}}, we are given a Boolean circuit $N$ with input of size $n$ and output of size $2n$, and the circuit has a $\poly(n)$ size description. The input represents the identity of a vertex and the output represents the identities of (at most) two other vertices. If for two vertices $i, j$ we have $j \in N(i)$ and $i \in N(j)$, then we consider an undirected edge between them. This implies an undirected graph structure where the maximum degree of any vertex is 2. The problem is, given a vertex of degree 1, to find any other vertex of degree 1. Now notice that the graph size is $2^n$, whereas the input is $\poly(n)$ large, therefore, common algorithms that would solve the problem in case the graph was described explicitly are no longer useful. As discussed earlier, since the definition of \ppa/, many problems have been shown to be complete for the class, yet most of them require a circuit description in their input. The more interesting cases of \ppa/-completeness are for problems with more ``natural'' inputs, in the sense that they require no such circuit description. The pizza sharing problems we study here are among those ones.
	
	The complexity class \etr/ consists
	of all decision problems that can be formulated in the \emph{existential theory of the reals (ETR)} \cite{Matousek14, scha09}. In other words, problems that can be written in ETR form: $\exists \vec{P} \in \reals^{m} \cdot \Phi$, where $\Phi$ is a Boolean formula using connectives $\{\land, \lor, \lnot\}$ over polynomials with domain $\reals^{m}$ for some $m \in \mathbb{N}$ compared with the operators $\{ <, \leq, =, \geq, > \}$. It is known that $\np/ \subseteq \etr/ \subseteq
	\pspace/$ \cite{C88}, and it is generally believed that \etr/ is distinct from the
	other two classes. The class \fetr/ (\textup{\textsf{Function}} \etr/) consists of all search problems whose decision variant is in \etr/. The class $\tfetr/$ is the subclass of \fetr/ which contains only problems that admit a solution (i.e. all the instances of their decision variant are ``yes'' instances). Both \fetr/ and \tfetr/ were introduced in \cite{DFMS21} as the natural analogues of \fnp/ and \tfnp/ in the real RAM model of computation. For a definition of the real RAM model we refer the reader to the detailed work of Erickson, van der Hoog, and Miltzow \cite{EHM20}.

	In this work, our focus regarding complexity classes of \tfetr/ will be on the class \fixp/.
    \fixp/ was defined in \cite{EY10} and captures problems whose totality is guaranteed by Brouwer's fixed point theorem \cite{brouwer1911abbildung}. An instance of a typical problem in \fixp/ consists of the description of a continuous function $g : D \to D$, where $D$ is a nonempty, compact, and convex set. $g$ is represented by an \emph{algebraic circuit}, and a solution of the instance is any $x \in D$ such that $g(x) = x$. An algebraic circuit is a circuit that operates on real
	numbers, and uses gates from the set $\{c, +, -, \times c, \times, \max,
	\min\}$; a $c$-gate outputs the constant $c$, a $\times c$-gate multiplies
	the input by a constant $c$, and all other gates behave according to their
	standard definitions, where $c \in \mathbb{Q}$. It is worth noting that, since each of these gates' output is a continuous function of its input, any function $g$ constructed using those gates is continuous on $D$.

	\section{Hardness results}
	Here we show all hardness results regarding the exact and approximate versions of our pizza sharing problems for mass distributions, as well as for point sets. For the \ppa/- and \np/-hardness results on mass distributions, the instances we construct are such that there is no overlap between any two mass distributions. Notice that the case of non-overlapping mass distributions is the most simple type of an instance, since we can easily reduce it to one where an arbitrarily large number of masses overlap.\footnote{Given an \eps-\spizza (resp. \eps-\scut-\pizza) instance with $2n$ (resp. $n$) mass distributions, we can pick an arbitrary distribution $i$ and create an extra $(2n+1)$-st (resp. $(n+1)$-st) distribution by copying $i$. Then, a solution to the resulting instance is a solution to the initial instance, and vice versa. Notice that in the latter instance at least $2$ mass distributions overlap, and we can repeat this ``copying'' procedure as many times as needed to achieve any number of overlapping distributions.}

    Our \ppa/-hardness proofs of the continuous versions of \eps-\spizza and \eps-\scut-\pizza are via reductions from the \eps-\ch problem (\cref{thm:straight-pizza-PPA} and \cref{thm:hvu-pizza-ppa-h}, respectively). Consequently, by a general construction (\cref{lem: cont-to-discr}), we reduce those to their discrete versions to get \ppa/-hardness. We also show that the decision variants of the approximation problems are \np/-hard by using our \ppa/-hardness constructions to reduce from the respective decision variant of \ch from \cite{filos2018hardness} (\cref{thm:straight-np-h-overlap} and \cref{thm:square-np-h-overlap}, respectively), and the \np/-hardness is attained in the discrete problems too. Finally, for the exact version of \scut-\pizza, via reductions from exact \ch, we show that the problem is \fixp/-hard (\cref{thm:fixp-h}), while its decision variant is \etr/-hard (\cref{thm:etr-h}).
	
	\subsection{Hardness of approximate \spizza}
	\label{sec:straight-pizza}
	
	We start by proving that \eps-\spizza is \ppa/-hard for any $\eps < 1/5$, even for very simple mass distributions. We prove our result via a reduction from \eps-\ch with $k$-block valuations, which for the special case of $3$-block uniform valuations has been shown to be \ppa/-complete \cite{DFHM25}. In addition, we explain how to combine the machinery of our reduction with that of \cite{filos2018hardness} in order to get \np/-hardness for \eps-\spizza, where $\eps > 0$ is a small constant.

	\paragraph{\bf The reduction.} We reduce from \ch with $2n$ agents, and for each agent we create a corresponding mass in \spizza. Firstly, we finely discretize the $[0,1]$ interval into blocks and we place the blocks on $y=x^2$, where $x \geq 0$. So, the $[0,1]$ interval corresponds to a part of the quadratic equation. This guarantees that every line can cut this ``bent'' interval at most twice and in addition the part of each mass that is in $\rplus$ is almost the same as value of the corresponding agent for the piece of $[0,1]$ labelled with \lplus. 
 
    Next we show how to construct an instance $I_P$ of $(\eps - \eps')$-\spizza with $2n$ mass distributions, for any constant $r \geq 1$, and $1/n^r \leq \eps' < \eps \leq 1$, given an instance \ich of \eps-\ch with $2n$ agents with $k$-block valuations. 
	
	Let $c_{\max} := \max_{i \in [2n], m \in [k]} c_{im}$, where $c_{im}$ is the value density of agent $i$'s $m$-th block in \ich, and observe that $c_{\max} \geq 1$ since the total valuation of any agent over $[0,1]$ is $1$. In what follows, it will help us to think of the interval $[0,1]$ in \ich as being discretized in increments of $d := \frac{1}{\ceil{ 8 \cdot n \cdot c_{\max} / \eps'}}$. We refer to the subinterval $\left[ (j-1) \cdot d, j \cdot d \right]$ as the \emph{$j$-th $d$-block} of interval $[0,1]$ in \ich, for $j \in [1/d]$.

    We now describe the instance $I_P$. We consider two kinds of square tiles; $1/d$ large square tiles of size $\frac{d^2}{24} \times \frac{d^2}{24}$, each of which contains $2n$ smaller square tiles of size $\frac{1}{2n} \frac{d^2}{24} \times \frac{1}{2n} \frac{d^2}{24}$ on its diagonal. We will call the former type \textit{big-tile} and denote it by $t_j$ and the latter one \textit{small-tile} and denote it by $t_{ij}$ for some $i \in [2n]$, $j \in [1/d]$.

    For every agent $i \in [2n]$ of \ich we will create a uniform mass distribution $\mu_i$ that consists of at most $1/d$ many axis-aligned small-tiles. Each big-tile $t_j$ is centered at $(\frac{j d}{2}, \frac{j^2 d^2}{4})$, $j \in [1/d]$, and in it, each small-tile $t_{ij}$, $i \in [2n]$, belonging to mass distribution $\mu_i$ has its bottom left corner at $\left( \frac{j d}{2} - \frac{d^2}{48} + \frac{(i-1) d^2}{48 n}, \frac{j^2 d^2}{4} - \frac{d^2}{48} + \frac{(i-1) d^2}{48 n} \right)$. Each small-tile $t_{ij}$ contains total mass (belonging to $\mu_i$) of $v_{ij} \cdot \frac{2}{n} \left( \frac{d^2}{24} \right)^2$, where $v_{ij}$ is the total value that agent $i$ has for the $j$-th $d$-block in \ich. 
    Observe that, by definition, we have $v_{ij} \leq d \cdot c_{\max} \leq \frac{\eps'}{8 n} < \frac{1}{8 n}$, where the last inequality comes from the fact that $\eps' < 1$, and therefore $v_{ij} \cdot \frac{2}{n} \left( \frac{d^2}{24} \right)^2$ fits inside the small tile of size $\frac{1}{2n} \frac{d^2}{24} \times \frac{1}{2n} \frac{d^2}{24}$. In particular, the mass inside $t_{ij}$ has width $ \frac{1}{2n} \frac{d^2}{24}$ and height $v_{ij} \cdot \frac{4 d^2}{24}$. Finally, it is easy to check that all big-tiles are in $[0,1]^2$: by construction, the big-tiles are placed such that the $1$-st big-tile's bottom-left corner has the smallest $x$- and $y$-coordinates, while the $1/d$-th big-tile's top-right corner has the largest $x$- and $y$-coordinates among points that belong to mass distributions of $I_P$. The aforementioned points' coordinates are $\left( \frac{d}{2} - \frac{d^2}{48}, \frac{d^2}{4} - \frac{d^2}{48} \right) $, and $\left( \frac{1}{2} - \frac{d^2}{48} + \frac{d^2}{24}, \frac{1}{4} - \frac{d^2}{48} + \frac{d^2}{24} \right) = \left( \frac{1}{2} + \frac{d^2}{48}, \frac{1}{4} + \frac{d^2}{48} \right)$, respectively, and both are in $[0,1]^2$ since $d \leq 1$. \cref{fig:curve} and \cref{fig:construction} depict our construction.

        \begin{figure}
		\centering
		\includegraphics[scale=1.0]{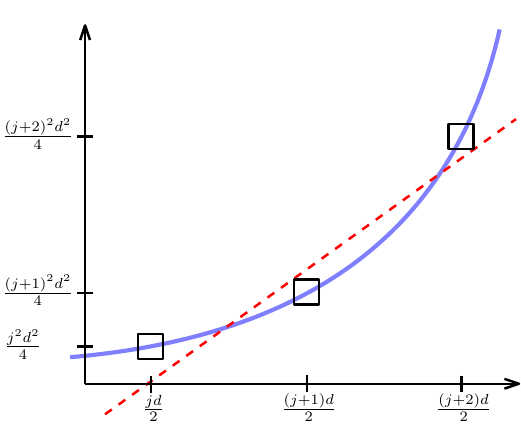}
		\caption{Placing the big-tiles on the $y=x^2$ curve. The $j$-th, $(j+1)$-st and $(j+2)$-nd big-tiles are centered on the curve. Their size is small enough to prevent any straight line (red/dashed) from intersecting more than two big-tiles.}
		\label{fig:curve}
	\end{figure}

	\begin{figure}[htbp]
		\begin{subfigure}[b]{0.35\textwidth}
			\centering
			\includegraphics[width=0.9\linewidth]{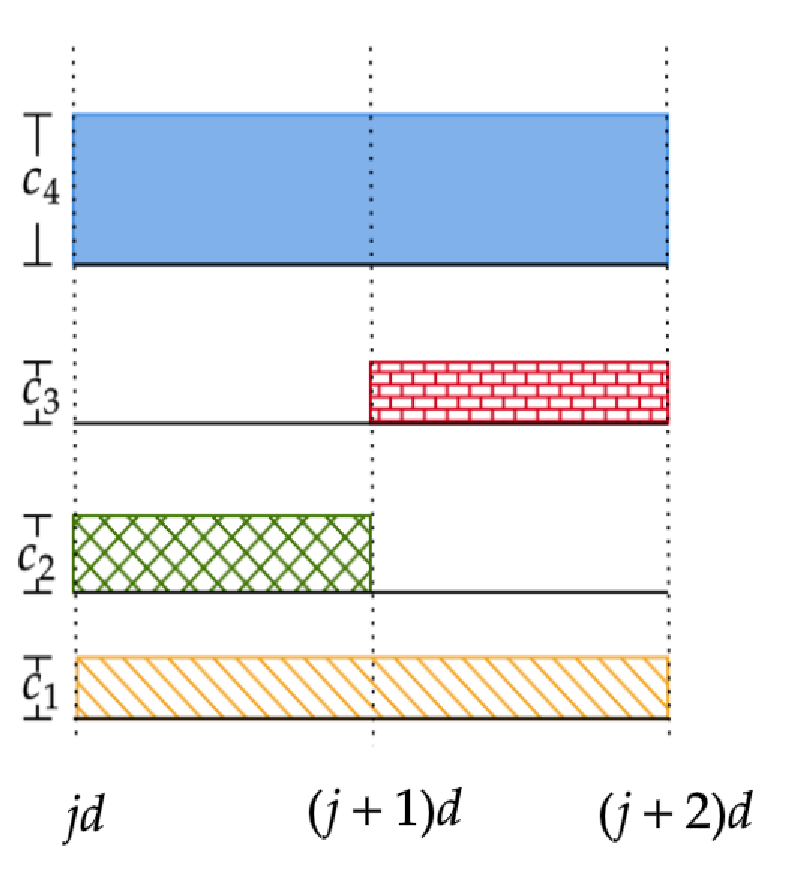}
			\caption{The $(j+1)$-st (left) and $(j+2)$-nd $d$-block (right) of the \ch instance. \\ \\ } \label{fig:construction-a}
		\end{subfigure}%
		\hspace*{\fill}   
		\begin{subfigure}[b]{0.6\textwidth}
			\centering
			\includegraphics[width=0.75\linewidth]{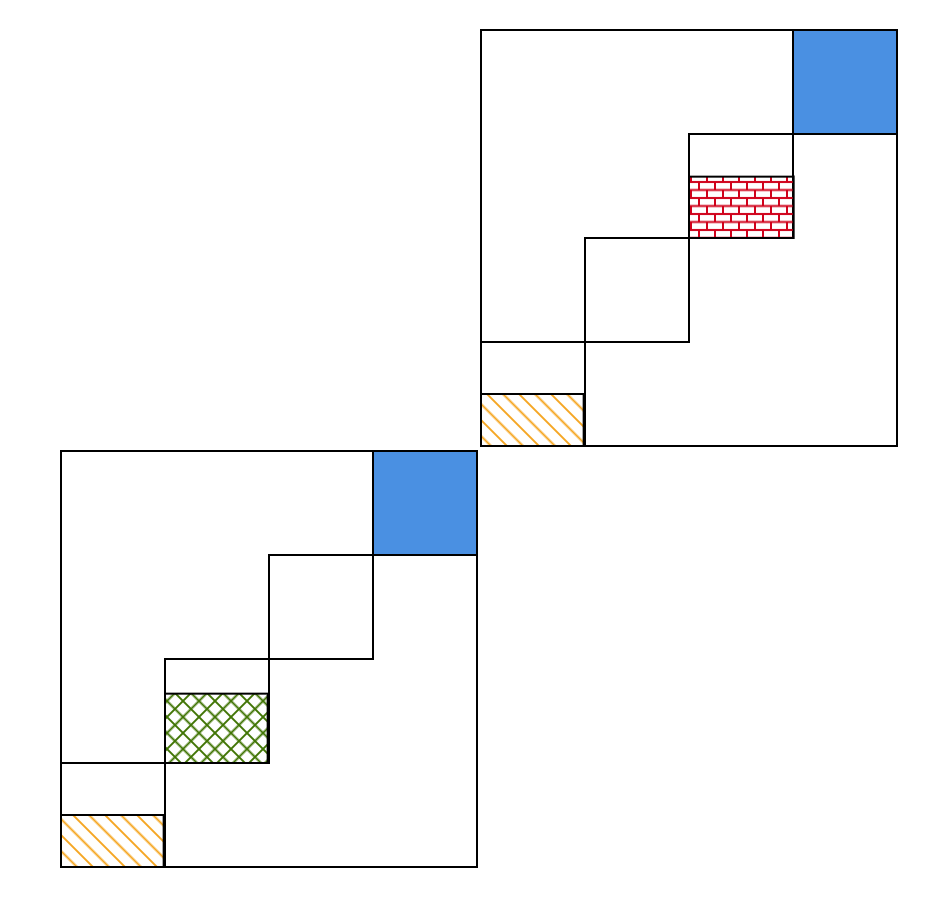}
			\caption{The $(j+1)$-st (left) and the $(j+2)$-nd (right) big-tile of the \spizza and the \scut-\pizza instance with their small-tiles. The small-tiles contain the mass distributions of the $(j+1)$-st and $(j+2)$-nd $d$-block, respectively.} \label{fig:construction-b}
		\end{subfigure}%
		\caption{The construction for a part of an instance with four mass distributions.}
		\label{fig:construction}
	\end{figure}

    Next, we prove the following auxiliary claim.

    \begin{claim}\label{clm: line-dist-three-points}
        Any straight line in $[0,1]^2$ cannot have distance at most $\frac{d^2}{24}$ with more than two centers of big-tiles.
    \end{claim}

    \begin{proof}
        For the sake of contradiction, suppose there are three tiles, $t_a , t_b , t_c$, with centers $p_j = (x_j , y_j)$, $j \in \{ a, b, c \}$, such that every $p_j$ has distance at most $\frac{d^2}{24}$ from a line $\ell$. Then, let us move $\ell$ in parallel until it passes through $p_a$. Next, rotate the line around $p_a$ until it passes through $p_b$ as well. These two movements of $\ell$ resulted in a new line $\ell'$ whose distance from $p_c$ is at most $3 \cdot \frac{d^2}{24} = \frac{d^2}{8}$, since each movement costed an extra distance of at most $\frac{d^2}{24}$.

        The distance between $\ell'$ and $p_c$ is
        \begin{align*}
		\frac{ \left| (y_{b} - y_{a})(x_{a} - x_{c}) - (y_{a} - y_{c})(x_{b} - x_{a}) \right| }{ \sqrt{(x_{b} - x_{a})^2 + (y_{b} - y_{a})^2}} = \frac{ \left| (d^2 / 4) (a-c) (b - c) \right| }{ \sqrt{1 + (d^2 / 4) (a + b)}},
	\end{align*}
        where the equality comes after substituting the coordinates of the centers of $t_a , t_b$ and $t_c$ as defined by our construction and simplifying the expression. Now recall that $a, b, c \in [1/d]$, which means that they are integers, and additionally, pairwise different, since otherwise they would be less than three distinct points. This means that we can bound from below the minimum distance between $\ell'$ and $p_c$ for $|a-c| = |b-c| = 1$ and $a = b = 1/d$. In other words, the minimum distance between $\ell'$ and $p_c$ is at least
        \begin{align*}
		\frac{ d^2 / 4 }{ \sqrt{1 + (d^2 / 4) (2/d)}} = \frac{ d^2 / 4 }{ \sqrt{1 + d / 2}} > \frac{d^2}{8} ,
	\end{align*}
        a contradiction.
    \end{proof}

        Now we are ready to prove the following.
        
	\begin{lemma}
		\label{lem:lines-to-cuts}
		Fix constants $\delta \in (0,1]$, $r \geq 1$, and let $1/n^r \leq \eps' < \eps \leq 1$.
		Let $\calL = \{\ell_{1}, \ldots,  \ell_{m}\}$ be a set of lines, where $m \leq n+n^{1-\delta}$. 
		If $\calL$ is a solution to $(\eps-\eps')$-\spizza instance $I_P$, then we can find in polynomial time a solution to \eps-\ch instance \ich with at most $2(n+n^{1-\delta})$ cuts.
	\end{lemma}

        \begin{proof}
            We will first prove that there is no line that intersects more than two big-tiles of $I_P$. This comes almost directly from \cref{clm: line-dist-three-points}. In particular, recall that each big-tile has size $\frac{d^2}{24} \times \frac{d^2}{24}$, therefore it fits inside a circle with radius $\frac{d^2}{24}$, whose center is the barycenter of the big-tile. If any line could intersect more than two big-tiles, then it would intersect also their corresponding circles, which means that it would have distance at most $\frac{d^2}{24}$, contradicting \cref{clm: line-dist-three-points}.

            Now, given a solution of $I_P$, we define the cuts and labels for the \ch instance, \ich. We consider the big-tiles of $I_P$ in sequential order and we add one cut at $j\cdot d$ whenever we find two big-tiles $t_j$ and $t_{j+2}$ that belong to different regions, i.e. \lplus, \lminus, or vice versa. This change of region can happen at most $2(n+n^{1-\delta})$ times. Hence, we have at most $2(n+n^{1-\delta})$ cuts in the instance \ich. Each $d$-block of \ich follows the label of its corresponding big-tile of the solution of $I_P$, except for those that correspond to intersected big-tiles. The latter $d$-blocks are arbitrarily given one of the two labels. 
            
            The aforementioned arbitrary labeling of the intersected big-tiles will cause some extra discrepancy in the solution of \ich. In particular, each such big-tile will be adding to each valuation
            $v_i$, $i \in [2n]$ of \ich, discrepancy $2 \cdot v_{ij} \leq 2 \cdot d \cdot c_{\max} \leq \frac{\eps'}{4 n}$, by construction of \ich. Since $|\calL| \leq n+n^{1-\delta}$, and each line of \calL can intersect two big-tiles, all lines of \calL collectively add discrepancy $v_i$ of value at most $2 (n+n^{1-\delta}) \cdot \frac{\eps'}{4 n} \leq \eps'$.

	Let us denote by $\calI^+$ and $\calI^-$ the regions in \ich as translated from the solution of $I_P$ according to the aforementioned process. Then, $|v_i(\calI^+) - v_i(\calI^-)| \leq |\mu_i(\rplus) - \mu_i(\rminus)| + \eps' \leq (\eps - \eps') + \eps' = \eps$. Therefore, this is a solution to \eps-\ch.
	
    Finally, notice that the construction of $I_P$ from \ich is a polynomial time reduction. In particular, since $\eps' \geq 1/n^r$ for some constant $r \geq 1$, we have that $1/d = \ceil{ 8 \cdot n \cdot c_{\max} / \eps' } \leq \ceil{ 8 \cdot n \cdot c_{\max} \cdot n^r }$ is a value polynomial in the input size, and the creation of each of the $1/d$ big-tiles can be done in polynomial time. 
        \end{proof}
	
    This, together with the fact that \eps-\ch is \ppa/-hard for any $\eps < 1/5$, due to \cite{DFHM25}, implies the main theorem of this section.

	\begin{theorem}\label{thm:straight-pizza-PPA}
		\eps-\spizza with $2n$ mass distributions is \ppa/-hard for any constant $\eps < 1/5$, even when $n+n^{1-\delta}$ lines are allowed for any given constant $\delta \in (0,1]$, every mass distribution is uniform over polynomially many rectangles, and there is no overlap between any two mass distributions.
	\end{theorem}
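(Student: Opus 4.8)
The plan is to derive the theorem from Lemma~\ref{lem:lines-to-cuts} together with the \ppa-hardness of consensus halving. First I would recall that \eps-\ch with $2n$ agents having 2-block uniform valuations is \ppa-hard for $\eps = \frac{1}{\poly(n)}$ \cite{FRG18-Consensus, FRHSZ2020consensus-easier}, and moreover that this hardness is robust to a polynomial number of surplus cuts: \eps-\ch stays \ppa-hard even when $2(n+n^{1-\delta})$ cuts are allowed, for any fixed $\delta \in (0,1]$ (a polynomial excess of cuts cannot make the problem easier). Fixing such a hard instance $I_{CH}$, with its valuations scaled so that $\eps < \frac{1}{n^4 c_i}$ for all $i$, I would apply the reduction described above; it runs in polynomial time and produces the \eps-\spizza instance $I_P$.

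Next I would verify that $I_P$ has the claimed structure. Each $\mu_i$ is, by construction, a uniform distribution (Lebesgue density exactly $1$) supported on the $|S_i| = (a^r_{i1}-a^\ell_{i1}+a^r_{i2}-a^\ell_{i2})/d$ axis-aligned squares of side $\sqrt{c_i}\,\eps$ --- one square per $d$-block valued by agent $i$ --- so $|S_i| \le 2/d = \poly(n)$ and the total number of squares over all masses is $\poly(n)$; the factor $\sqrt{c_i}$ encodes the density $c_i$ while leaving each distribution uniform. There is no overlap: the square of agent $i$ always sits in the $(i,i)$-th squarelet of its tile, distinct agents use distinct diagonal squarelets, and distinct valued $d$-blocks of a fixed agent lie in distinct tiles $t_j$; hence every point of $[0,1]^2$ is covered by at most one mass and the overlap is $0$. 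Thus $I_P$ is a legitimate \eps-\spizza instance in which every mass distribution is uniform over polynomially many squares with zero overlap.

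For the correctness of the reduction, I would note that \eps-\spizza is total --- the existence of $n$ lines simultaneously bisecting $2n$ masses, hence a fortiori an \eps-solution using $n+n^{1-\delta}$ lines, is guaranteed by \cite{HK20} --- so $I_P$ always admits a solution. Given any solution of $I_P$ using $n+n^{1-\delta}$ lines with discrepancy at most $\eps/2$, Lemma~\ref{lem:lines-to-cuts} converts it in polynomial time into a solution of the \ch instance $I_{CH}$ with at most $2(n+n^{1-\delta})$ cuts. Since producing such a consensus-halving solution is \ppa-hard, so is \eps-\spizza under the stated restrictions; the constant gap between $\eps$ and $\eps/2$ is absorbed by reparametrising $\eps$, which remains $\frac{1}{\poly(n)}$.

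The only genuine obstacle is arithmetic bookkeeping rather than a new idea. One must pin down the discretisation step $d = \frac{1}{\poly(n)}$ small enough that the chain $|\mu_i(\rplus)-\mu_i(\rminus)| \le \eps/2 + (n+n^{1-\delta})\,c_i\, d \le \eps$ from Lemma~\ref{lem:lines-to-cuts} closes for every $i$ --- this is where $d < \eps^2$ and $\eps < \frac{1}{n^4 c_i}$ are used --- and one must ensure the ``surplus cuts'' form of consensus-halving hardness is available for $\Theta(n^{1-\delta})$ extra cuts. The geometric content --- straightening a line that crosses tiles by rotating it, at a cost of at most $2 c_i d$ per mass, and using the parabola $y=x^2$ to limit each line to two tile crossings --- is already packaged inside Lemma~\ref{lem:lines-to-cuts}, so the proof of the theorem itself needs no further geometric argument.
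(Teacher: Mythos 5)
Your proof is correct and follows exactly the route the paper intends: invoke the \ppa-hardness of \eps-\ch with $2$-block uniform valuations and a sublinear surplus of cuts, apply the parabola construction to get $I_P$, verify the structural claims (uniform density, polynomially many squares, zero overlap via distinct diagonal squarelets and distinct tiles), use \cite{HK20} for totality, and then close the loop with Lemma~\ref{lem:lines-to-cuts}, absorbing the factor-of-two loss by reparametrising \eps. This is exactly how the paper derives the theorem from the lemma.

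One small but genuine slip to flag: the parenthetical ``a polynomial excess of cuts cannot make the problem easier'' is false and should not be used as a justification. A surplus of $\Theta(n)$ or more cuts makes \ch trivial, so the robustness to extra cuts is \emph{not} automatic; it is a nontrivial theorem of \cite{FRHSZ2020consensus-easier} that holds only for a \emph{sublinear} surplus of $n^{1-\delta}$ cuts. You do state the correct form ($\Theta(n^{1-\delta})$ extra cuts) at the end, so the argument as a whole is sound, but the parenthetical undersells the role played by the cited hardness result and could mislead a reader into thinking a much stronger theorem is available for free. Also note, for bookkeeping, that $I_{CH}$ has $2n$ agents, so the relevant hardness threshold is $2n + (2n)^{1-\delta'}$ cuts; the reduction yields at most $2(n + n^{1-\delta})$ cuts, and one should observe that $2n^{1-\delta} = o\bigl((2n)^{1-\delta'}\bigr)$ for suitable $\delta' < \delta$, so the counts line up.
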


    We will now shift our attention to studying the decision variant of the problem, where we are asking to find a solution that uses at most $n-1$ straight lines, and notice that there is no guarantee for such a solution. We employ the \np/-hard instances of \eps-\ch for their constant \eps from \cite{filos2018hardness}, and reduce them according to the above reduction procedure to $(\eps - \eps')$-\spizza instances for some $\eps'$ that is inverse polynomial in the input size, e.g., $\eps' = 1/n^2$. This gives the following.

	\begin{theorem}\label{thm:straight-np-h-overlap}
		There exists a constant $\eps > 0$ for which it is \np/-hard to decide if an \eps-\spizza instance with $2n$ mass distributions admits a solution with at most $n-1$ lines, even when every mass distribution is uniform over polynomially many rectangles, and there is no overlap between any two mass distributions.
	\end{theorem}

	\subsection{Hardness of approximate \scut-\pizza} \label{sec: eps-sc-pizza_hardness}
	In this section, we prove hardness results for \eps-\scut-\pizza.
	We provide a reduction from \eps-\ch with $k$-block valuations, which was shown to be \ppa/-complete in \cite{DFHM25} for any constant $\eps < 1/5$ even for $3$-block uniform valuations. Also, the machinery that we present, combined with the reduction by \cite{filos2018hardness}, implies \np/-hardness for the decision variant of \eps-\scut-\pizza, where $\eps > 0$ is a small constant.
	
	\paragraph{\bf The reduction.} We reduce from a general \eps-\ch instance to an \eps-\scut-\pizza instance, and the idea is to create a mass for each agent. 
	For any constant $r \geq 1$, and $1/n^r \leq \eps' < \eps \leq 1$, given an instance \ich of \eps-\ch with $n$ agents and $k$-block valuations we will show a polynomial time construction to an $(\eps-\eps')$-\scut-\pizza instance \isc. 
	
	For our construction, we will use the same components as those in the proof of \cref{lem:lines-to-cuts}. In particular, let $c_{\max} := \max_{i \in [n], m \in [k]} c_{im}$, where $c_{im}$ is the value density of agent $i$'s $m$-th block in \ich (and again note that $c_{\max} \geq 1$ since the total valuation of the agent over $[0,1]$ is $1$). Similarly to the aforementioned proof, we will discretize the $[0,1]$ interval of \ich in increments of $d := \frac{1}{\ceil{4 \cdot n \cdot c_{\max} / \eps'}}$. Also, let us restate that for any given $j \in [1/d]$, the subinterval $\left[ (j-1) \cdot d, j \cdot d \right]$ is called $j$-th $d$-block of interval $[0,1]$ in \ich.
	
	We will be using the same gadgets that were constructed for the proof of \cref{lem:lines-to-cuts}, namely the big-tiles, which contain small-tiles. In particular, we have $1/d$ square big-tiles of size $d \times d$, each of which contains $n$ square small-tiles of size $\frac{d}{n} \times \frac{d}{n}$ on its diagonal. For any $i \in [n]$, $j \in [1/d]$, we denote them by $t_j$ and $t_{ij}$, respectively.
	
	In this construction, however, the positioning of big-tiles will be different than that of the aforementioned proof. In particular, we will be placing them on the diagonal of $[0, 1]^2$ as shown in \cref{fig:construction-b}.
	For every agent $i$ we will create a uniform mass distribution $\mu_i$ that consists of at most $1/d$ many axis-aligned small-tiles. For each $j \in [1/d]$, the bottom-left corner of big-tile $t_j$ is at $\left( (j-1)d, (j-1)d \right)$. In it, each small-tile $t_{ij}$, $i \in [n]$, belonging to mass distribution $\mu_i$ has its bottom left corner at $\left( (j-1)d + (i-1)\frac{d}{n}, (j-1)d + (i-1)\frac{d}{n} \right)$. 
	
	Inside a given big-tile $t_j$, each small-tile $t_{ij}$ contains total mass (belonging to $\mu_i$) of $v_{ij} \cdot \frac{4 d^2}{n}$, where $v_{ij}$ is agent $i$'s total value for the $j$-th $d$-block in \ich. 
	The mass inside the small-tile is rectangular, with width $\frac{d}{n}$ and height $v_{ij} \cdot 4 d$. Observe that $v_{ij} \leq d \cdot c_{\max} \leq \frac{\eps'}{4 n} < \frac{1}{4 n}$, where the first inequality is by definition of a $d$-block in \ich, the second one is by definition of $d$, and the last one is due to $\eps' < 1$. Therefore the total mass of $v_{ij} \cdot \frac{4 d^2}{n}$ fits inside the small-tile. By definition of the big-tiles positioning and size, it is straightforward that they are inside $[0,1]^2$.
	\cref{fig:construction} depicts our construction.
	
	Now we are ready to prove the following lemma.
	
	\begin{lemma}
		\label{lem:sc-paths-to-cuts}
		Fix constants $\delta \in (0,1]$, $r \geq 1$, and let $2/n^r \leq \eps' < \eps \leq 1$. Let a \scut-path with at most $n-1+n^{1-\delta}$ turns be a solution to $(\eps-\eps')$-\scut-\pizza instance \isc. Then we can find in polynomial time a solution to \eps-\ch instance \ich with at most $n+n^{1-\delta}$ cuts.
	\end{lemma}
	
	\begin{proof}
        We have to specify how a solution of \isc, i.e., a \scut-path, is translated back to a solution of \ich, i.e., a set of cuts. This is identical to the one used in the proof of \cref{lem:lines-to-cuts}. In particular, we consider again the big-tiles in sequential order and we add one cut at $j\cdot d$ whenever we find two big-tiles $t_j$ and $t_{j+2}$ that belong to different regions. Suppose that, following the aforementioned procedure, the next \ich cut falls at $j \cdot d'$ for some $d' > d$. If $t_j$ belongs to region \lplus (resp. \lminus) and $t_{j+2}$ belongs to \lminus (resp. \lplus), then the interval $[j \cdot d, j \cdot d']$ gets label \lplus (resp. \lminus), and vice versa. It is easy to see that this translation takes polynomial time.
        
        What remains is to prove that the translation of the aforementioned solution of a $(\eps-\eps')$-\scut-\pizza instance \isc into a solution of the \eps-\ch instance \ich is indeed correct. 
        Notice that, if the solution of \isc has $r \in \naturals$ many turns on the \scut-path, there can be at most $r+1$ small-tiles that are intersected by it (since there are $r+1$ line segments). For our reduction, let $r = n-1+n^{1-\delta}$ for any constant $\delta \in (0,1]$. Consider sequentially the big-tiles $t_1, \dots, t_{1/d}$, and without loss of generality, let $t'_1, \dots, t'_{r+1}$ be its subset, where $t'_j$ is the $j$-th big-tile that has an intersected small-tile. In the big-tile sequence of $t_1, \dots, t_{1/d}$, the change of region can happen at most $n+n^{1-\delta}$ times. Therefore, we have at most $n+n^{1-\delta}$ cuts in \ich with the corresponding labels as defined previously. Each $d$-block of \ich follows the label of the corresponding big-tile in \isc, except for those corresponding to intersected big-tiles. These $d$-blocks are given an arbitrary label.
        
		The above translation of the \scut-path to \ich cuts indicates that, each line segment of the \scut-path that intersects a big-tile, introduces a discrepancy between the \lplus and \lminus regions of \ich, of value at most $2 \cdot c_{\max}\cdot d \leq \frac{\eps'}{2n}$ for each valuation $v_i$, $i \in [n]$. Taking into account the entire \scut-path, this results in total discrepancy of at most $(n+n^{1-\delta}) \cdot \frac{\eps'}{2n} \leq \eps'$ for each agent $i$.
		
        We now denote by $\calI^+$ and $\calI^-$ the regions in \ich according to the above translation from a solution of \isc to a solution of \ich. We have
			$|v_i(\calI^+) - v_i(\calI^-)| \leq |\mu_i(\rplus) - \mu_i(\rminus)| + \eps' \leq (\eps - \eps') + \eps' = \eps$, so this is indeed a solution to \eps-\ch.

        Again, notice that the construction we described can be done in polynomial time, since the creation of each big-tile can be done in polynomial time, and we have $1/d$ many big-tiles, where $1/d = \ceil{4 \cdot n \cdot c_{\max} / \eps'} \leq \ceil{4 \cdot n \cdot c_{\max} \cdot n^r} $ for some constant $r \geq 1$.
	\end{proof}
	
	The above, together with the fact that \eps-\ch is \ppa/-hard for any $\eps < 1/5$ (\cite{DFHM25}) implies the main theorem of this section.
	
	\begin{theorem}\label{thm:hvu-pizza-ppa-h}
		\eps-\scut-\pizza with $n$ mass distributions is \ppa/-hard for any constant $\eps < 1/5$, even when $n-1+n^{1-\delta}$ turns are allowed in the \scut-path for any given constant $\delta \in (0,1]$, every mass distribution is uniform over polynomially many rectangles, and there is no overlap between any two mass distributions.
	\end{theorem}

        Similarly to the case of the decision variant of \eps-\spizza (\cref{thm:straight-np-h-overlap}), we can get the following result by reducing from the \np/-hard instances of \eps-\ch for constant \eps.

	\begin{theorem}\label{thm:square-np-h-overlap}
		There exists a constant $\eps > 0$ for which it is \np/-hard to decide if an \eps-\scut-\pizza instance with $n$ mass distributions admits a solution consisting of a \scut-path with at most $n-2$ turns, even when every mass distribution is uniform over polynomially many rectangles, and there is no overlap between any two mass distributions.
	\end{theorem}

	\subsection{Hardness of discrete \spizza and \scut-\pizza} \label{sec: discrete-pizza_hardness}
	
	In this section, we study the discrete versions of \spizza and \scut-\pizza.
	
	\begin{mdframed}[backgroundcolor=white!90!gray,
		leftmargin=\dimexpr\leftmargin-20pt\relax,
		innerleftmargin=4pt,
		innertopmargin=4pt,
		skipabove=5pt,skipbelow=5pt]
		\begin{definition}
			\label{def:discrete-straight-pizza} For any $n \geq 1$, the problem \eps-\dspizza is defined as follows:
			\begin{itemize}
				\item \textbf{Input:} $\eps \geq 0$, and $2n$ point sets $P_1, P_2, \dots, P_{2n}$ on $[0,1]^2$.
				\item \textbf{Output:} One of the following.
				\begin{enumerate}
					\item[(a)]\label{discr-straight-pizza-point:a} Three points that can be intersected by the same line. 
					\item[(b)] A partition of $[0,1]^2$ to \rplus and \rminus using at most $n$ lines such that for each $i \in [2n]$ it holds that
                    $\left| |P_i \cap \rplus| - |P_i \cap \rminus| \right| \leq \eps \cdot |P_i|$.
				\end{enumerate} 
			\end{itemize}
			A point that is intersected by a line does not belong to any of \rplus, \rminus.
		\end{definition}
	\end{mdframed} \vspace{5pt}

	\begin{mdframed}[backgroundcolor=white!90!gray,
		leftmargin=\dimexpr\leftmargin-20pt\relax,
		innerleftmargin=4pt,
		innertopmargin=4pt,
		skipabove=5pt,skipbelow=5pt]
		\begin{definition}
			\label{def:discrete-sc-pizza} For any $n \geq 1$, the problem \eps-\dpizza is defined as follows:
			\begin{itemize}
				\item \textbf{Input:} $\eps \geq 0$, and $n$ point sets $P_1, P_2, \dots, P_{n}$ on $[0,1]^2$.
				\item \textbf{Output:} One of the following. 
				\begin{enumerate}
					\item[(a)]\label{discr-sc-pizza-point:a} Two points with the same $x$- or $y$-coordinate.
					\item[(b)] A partition of $[0,1]^2$ to \rplus and \rminus using a $y$-monotone \scut-path with at most $n-1$ turns such that for each $i \in [n]$ it holds that $\left| |P_i \cap \rplus| - |P_i \cap \rminus| \right| \leq \eps \cdot |P_i|$.
				\end{enumerate} 
			\end{itemize}
			A point that is intersected by a line does not belong to any of \rplus, \rminus.
		\end{definition}
	\end{mdframed} \vspace{5pt}

	Notice that the first kind of allowed output for both problems (\cref{def:discrete-straight-pizza}(a), \cref{def:discrete-sc-pizza}(a)) is a witness that the input points are not in general position or that their $x$- or $y$-coordinates are not unique, respectively, which can be checked in polynomial time. The second kind of output (\cref{def:discrete-straight-pizza}(b), \cref{def:discrete-sc-pizza}(b)) is the one that is interesting and can encode the hard instances studied here. In case the first kind of output does not exist, the other one is guaranteed to exist due to \cite{S21} for \eps-\dspizza, while for \eps-\dpizza its existence is guaranteed for every $\eps \in [0,1]$ due to a reduction we present in \cref{sec: discrete-pizza_containment} which shows \ppa/ membership.
	
	The definition of \eps-\dspizza is a slightly modified form of the one that appears in \cite{S21}, where it is referred to as \textsc{DiscretePizzaCutting}. In our definition, we avoid having to ``promise'' an input of points that are in general position, by allowing as output a witness of an inappropriate input. Furthermore, the definition we present is more general, since it accommodates an approximation factor $\eps$; in particular, for $\eps < \min_{i} \{ 1/|P_i| \}$ we get the definition of the aforementioned paper. Similarly, we define \eps-\dpizza, which to the best of our knowledge, has not been stated in previous work.
	Note that, if the input of any of the discrete versions of the problems consists of points that are in general position and furthermore have pairwise different $x$- and $y$-coordinates (a property that the instances in our reductions have), in any \eps-\dspizza solution, a line can only intersect up to two points, while in any \eps-\dpizza solution, a line segment can intersect up to one point.
	
	Recall that in \eps-\scut-\pizza we are given $n$ mass distributions, while in \eps-\spizza we are given $2n$ mass distributions as input. In \cref{app: cont-to-discr}, we describe a general construction that takes as an input $q \in \{n, 2n\}$ mass distributions $\mu_1, \dots, \mu_q$ normalized on $[0,1]^2$, represented by weighted polygons with holes (see \cref{sec:preliminaries} for the detailed description), and turns it into $q$ sets of points $P_1, \dots, P_q$ on $[0,1]^2$. The points that constitute those sets' union are in general position, and additionally, they have unique $x$- and $y$-coordinates. We prove that, if a set of at most $n$ lines or a \scut-path of at most $n-1$ turns partitions $[0,1]^2$ into $\rplus$ and $\rminus$ such that $\left| |P_i \cap \rplus| - |P_i \cap \rminus| \right| \leq (\eps - \eps') \cdot |P_i|$, then the same set of lines or \scut-path, respectively, separates the mass distributions such that $\left| \mu_{i}(\rplus) - \mu_{i}(\rminus) \right| \leq \eps$.
	
	Let $N \geq 2n$ be the input size of any of our two pizza sharing problems, and let the smallest area triangle in the mass distributions' triangulation be $\alpha$. For any $\eps'<\eps$, where $\eps$ and $\alpha$ are at least inverse polynomial in $N$, the construction results in a polynomial time reduction from \eps-\spizza to $(\eps-\eps')$-\dspizza and from \eps-\scut-\pizza to $(\eps-\eps')$-\dpizza.
	The reduction can be performed in time polynomial in the input size and in $1/\alpha$. It consists of two parts: (i) First we ``pixelate'' the mass distributions finely enough so that they are represented by a sufficiently large number of pixels. This will ensure a high enough ``resolution'' of the pixelated distributions. (ii) The pixels will then be turned into points, which we have to perturb in order to guarantee they are in general position and with unique coordinates, as required. In particular, in \cref{app: cont-to-discr}, we prove the following.
	\begin{lemma}\label{lem: cont-to-discr}
		Let $N$ be the input size of an approximate pizza sharing problem (either \eps-\spizza or \eps-\scut-\pizza) whose triangulation has no triangle with area less than $\alpha > 0$. Also, let $\eps' \in \left[\frac{6}{N^c}, \eps \right)$, where $c > 0$ is a fixed constant, and $\frac{6}{N^c} < \eps < 1$. Then, the instance can be reduced in time $\poly(N, 1/\alpha)$ to its approximate discrete version, that is, $(\eps - \eps')$-\dspizza or $(\eps - \eps')$-\dpizza, respectively.   
	\end{lemma}
	
	Given \cref{thm:straight-pizza-PPA} and \cref{thm:hvu-pizza-ppa-h}, and since their instances are constructed such that $\alpha$ is an at least inverse polynomial function of the input size, \cref{lem: cont-to-discr} implies the following hardness results.
	
	\begin{theorem}\label{thm: discr-straight-pizza}
		\eps-\dspizza with $2n$ point sets is \ppa/-hard for any constant $\eps < 1/5$, even when $n+n^{1-\delta}$ lines are allowed for any given constant $\delta \in (0,1]$.
	\end{theorem}

	\begin{theorem}\label{thm: discr-sc-pizza}
		\eps-\dpizza with $n$ point sets is \ppa/-hard for any constant $\eps < 1/5$, even when $n-1+n^{1-\delta}$ turns are allowed in the \scut-path for any given constant $\delta \in (0,1]$.
	\end{theorem}

	We note that \ppa/-hardness for \eps-\dspizza was so far known only for any $\eps \in \left[ 0, \min_{i} \{ 1/|P_i| \} \right)$ (which is equivalent to $\eps = 0$), due to \cite{S21}. Since, in the aforementioned paper's constructions, $\min_{i} \{ 1/|P_i| \} \in O(1/\poly(N))$, our result strengthens the hardness of the problem significantly.

        \cref{lem: cont-to-discr} is general enough to allow us to derive \np/-hardness results for the decision variants of the two discrete versions of the pizza sharing problems. If we ask for a solution with at most $n-1$ straight lines or $n-2$ turns in $(\eps - \eps')$-\dspizza and $(\eps - \eps')$-\dpizza, respectively, then we can easily reduce to them from the instances of \cref{thm:straight-np-h-overlap} and \cref{thm:square-np-h-overlap}, picking $\eps'$ to be some inverse polynomial function of $N$, e.g., $\eps' = 1/N$. In particular, we get the following.

        \begin{theorem}\label{thm: discr-straight-pizza-np-hard}
		There exists a constant $\eps > 0$ for which it is \np/-hard to decide whether a solution of \eps-\dspizza with $2n$ point sets and at most $n-1$ lines exists.
	\end{theorem}

	\begin{theorem}\label{thm: discr-sc-pizza-np-hard}
		There exists a constant $\eps > 0$ for which it is \np/-hard to decide whether a solution of \eps-\dpizza with $n$ point sets and a \scut-path with at most $n-2$ turns exists.
	\end{theorem}

	\subsection{Hardness of exact \scut-\pizza}
	\label{sec:exact-pizza_hardness}
	
	In this section, we show hardness results for \emph{exact} \scut-\pizza, that is, \cref{def:eps-hv-pizza} for $\eps = 0$. We prove
	that solving \scut-\pizza is \fixp/-hard and that deciding whether there exists a
	solution for \scut-\pizza  with fewer than $n-1$ turns is \etr/-hard.
	
	As mentioned earlier, computing an exact solution of a \fixp/-hard problem may require computing an irrational number. To showcase this for \scut-\pizza, consider the following simple instance. Let us have a single mass distribution in the shape of a right-angled triangle, whose corners are on $(0,1)$, $(1,1)$, and $(1,0)$. It is normalised, i.e., its total mass is $1$, therefore its weight is $2$. An exact solution of \scut-\pizza is either a horizontal or a vertical straight line (with $0$ turns), which cuts the triangle such that each half-space has half of the mass, that is, $1/2$. One can easily check that the solution is either the horizontal line $y=\sqrt{2}/2$, or the vertical line $x=\sqrt{2}/2$.
	
	We provide a main reduction from (exact) \ch to (exact) \scut-\pizza, whose gadgets are then employed to show the \etr/-hardness of the decision variant. This time, as a starting point, we will use the instances of \ch constructed as end-points of the reductions in \cite{DFMS21}. In particular, when clear from context, \dfms will denote an arbitrary instance satisfying properties \ref{point:1}, \ref{point:2}, and \ref{point:3} (see next paragraph). When requesting a solution of \dfms with at most $n$ cuts, we get \fixp/-hardness, while when requesting to decide if it is solvable with $n-1$ cuts we get \etr/-hardness. Both of these results are due to \cite{DFMS21}, and hold even for $6$-block-triangle valuations. We note that here the input consists of sets of points with rational coordinates, i.e., we describe polygons by their vertices. For a detailed description of the input representation, see \cref{sec:preliminaries}. In the aforementioned work, the problems' input is $n$ algebraic circuits capturing the cumulative valuation of $n$ agents on $[0,1]$, which are piece-wise polynomials of maximum degree 2. In particular, since their (density) valuation functions are piece-wise linear, the input of \scut-\pizza suffices to consist of very simple shapes, namely, only rectangles and triangles.

	\myparagraph{The reduction.} 
	Here we show the main reduction, which will conclude with the proof that finding an exact solution to \scut-\pizza is \fixp/-hard. Then, this reduction's construction will be used to show that the problem's decision variant is \etr/-hard. We reduce from a \ch instance \dfms with $n$ agents and $6$-block-triangle valuations to a \scut-\pizza instance \isc with $n$ mass distributions.  

	The key difference between this reduction and the previous reductions on the approximate versions is that
	the starting point of the reduction, i.e., instance \dfms, besides rectangular-shaped (constant) pieces, also contains triangular-shaped (linear) pieces in the valuation density functions for some agents. 
	More specifically, all of the following properties hold for \dfms:
	\begin{enumerate}
		\item the valuation function of every agent is 4-block-triangle, or 6-block; \label{point:1} 
		\item for any given agent $i \in [n]$, every triangle (linear piece) of her valuation function has height 2 and belongs to exactly one interval of interest of the form $[x_{j}, x_{j+1}]$ for $j \in [m]$, where $m \leq 12n + 1$ (see below for the definition of those intervals); \label{point:2}
		\item for every agent $i \in [n]$ there exists an interval $[a_i, b_i]$ that contains more than half of their total valuation (i.e., more than $1/2$ cumulative valuation), and in addition, for every $i' \neq i$ we have $(a_i,b_i) \cap (a_{i'},b_{i'}) = \emptyset$. \label{point:3} 
	\end{enumerate}
	Also, in this reduction, the resulting \scut-\pizza instances will contain \emph{weighted} mass distributions (see definition in \cref{sec:preliminaries}).

	The first
	step is to partition $[0,1]$ of \dfms into subintervals that are defined
	by \emph{points of interest}.  We say that a point $x \in [0,1]$ is a point
	of interest if it coincides with the beginning or the end of a valuation block or triangle
	of an agent;  formally, $x$ is a point of interest if $x \in \{a^\ell_{ij},
	a^r_{ij}\}$ for some agent $i \in [n]$ and some valuation block $j$ or triangle of hers (for this notation see \cref{sec:preliminaries}). These points
	conceptually split $[0,1]$ into \emph{intervals of interest}, since in between any
	pair of consecutive points of interest, all agents have a non-changing
	valuation density. Let $0 =: x_1 < x_2 <
	\ldots < x_{m} < x_{m+1} := 1$ denote the points of interest, and each interval of interest $[x_{j},x_{j+1}]$, $j \in [m]$ is called the \emph{$j$-th subinterval}. Observe that $m \leq 12 n + 1$: as a base case consider a single block or triangle produces at most two points of interest and therefore at most three intervals, so $m \leq 3$; for any block or triangle we add, we increase by at most 2 the points of interest; each valuation function has either $4$ blocks and $1$ triangle, or $6$ blocks, therefore the total number of blocks and triangles is at most $n \cdot 6$ (by Property \ref{point:1} above). 

    Here it is important to mention that in our reduction we are allowed to only use \fixp/ gates (see \cref{sec:preliminaries} for details). As representation of the \dfms instance,\footnote{This is not the same input representation as the one defined in \cite{DFMS21} for exact \ch. However, it is easy to check that the \fixp/-hardness reduction of the aforementioned work goes through if we require this new input representation of \ch. } for each agent $i \in [n]$ we consider $\ell_{i}$ ordered pairs of points in $[0,1]$ interpreted as consecutive intervals' endpoints, together with their corresponding valuation density function in the form of a circuit: $\left( (r_{k}^{(i)}, r_{k+1}^{(i)}), f_{k}^{(i)} (x) \right)_{k \in [\ell_{i}]}$, where $r_{1}^{(i)} :=0$ and $r_{\ell_i + 1}^{(i)} := 1$ for all $i \in [n]$. In particular, according to \dfms, for any $i \in [n]$, $k \in [\ell_{i}]$, either $f_{k}^{(i)} (x) = c_{ik}$ or $f_{k}^{(i)} (x) = 2(x-r_{k}^{(i)})c_{ik}$, where $c_{ik} \geq 0$ is a constant.
    
    The next step is to do some preprocessing of the input in order to incorporate the intervals of interest. The points of interest can be identified using an algebraic circuit by a sorting network (e.g., \cite{Knuth98-sorting}) which takes as input all points $\left( r_{k}^{(i)} \right)_{i \in [n], k \in [\ell_i]}$ and outputs them in non-decreasing order $\left( x_{j} \right)_{j \in [m+1]}$. Then, we turn the \dfms instance representation into the following form for each agent $i \in [n]$: $\left( (x_{j,k}^{(i)}, x_{j+1,k}^{(i)}), f_{k}^{(i)} (x) \right)_{k \in [\ell_i], j \in [m]}$, where
    \begin{align*}
        x_{j,k}^{(i)} = \max \{ r_{k}^{(i)}, \min \{ x_{j}, r_{k+1}^{(i)} \} \}.
    \end{align*}
    Observe that, by definition, for any $i \in [n]$, $k \in [\ell_i]$, and $j \in [m]$, either $[x_{j}, x_{j+1}] \cap [r_{k}^{(i)}, r_{k+1}^{(i)}] = [x_{j}, x_{j+1}]$, or  $[x_{j}, x_{j+1}] \cap [r_{k}^{(i)}, r_{k+1}^{(i)}]$ is singleton or empty. So, we have the following cases:
    \begin{enumerate}
        \item[(a)] $x_{j} \leq r_{k}^{(i)}$ (and $x_{j+1} \leq r_{k}^{(i)}$): The first inequality implies $x_{j} \leq r_{k+1}^{(i)}$ and so, $x_{j,k}^{(i)} = r_{k}^{(i)}$. The second inequality implies $x_{j+1} \leq r_{k+1}^{(i)}$), and so, $x_{j+1,k}^{(i)} = r_{k}^{(i)}$. Therefore, $x_{j,k}^{(i)} = x_{j+1,k}^{(i)} = r_{k}^{(i)}$.
        \item[(b)] $x_{j} \geq r_{k+1}^{(i)}$ (and $x_{j+1} \geq r_{k+1}^{(i)}$): Similarly to above case, $x_{j,k}^{(i)} = x_{j+1,k}^{(i)} = r_{k+1}^{(i)}$.
        \item[(c)] $x_{j} \geq r_{k}^{(i)}$ and $x_{j+1} \leq r_{k+1}^{(i)}$: By definition, $x_{j} \leq x_{j+1}$, so we get $x_{j,k}^{(i)} = x_{j}$ and  $x_{j+1,k}^{(i)} = x_{j+1}$.
    \end{enumerate}
    This way, using \fixp/ gates we managed to copy the valuation density function $f_{k}^{(i)} (x)$ to all the intervals $[x_{j} , x_{j+1}]$ that are inside $[ r_{k}^{(i)} , r_{k+1}^{(i)} ]$, while for those that are outside of it, we created artificial singleton intervals which do not contribute to the cumulative valuation function (since $f_{k}^{(i)}$ is a density function).	
	
    Now we are ready to construct the \scut-\pizza instance \isc.
	For each subinterval $j \in [m]$ of \dfms, we will construct a tile $t_j$ of size $d_j \times d_j$ in which we will place our measures, where $d_j := x_{j+1} - x_{j}$, and notice that always $d_j > 0$. These tiles will be placed diagonally in $[0,1]^2$, starting from the bottom-left corner (see \cref{fig:triangles-b} for a depiction). The points describing $t_j$ are $LL := \left( x_{j}, x_{j} \right)$, $HL := \left( x_{j+1}, x_{j} \right)$, $HH := \left( x_{j+1}, x_{j+1} \right)$, and $LH := \left( x_{j}, x_{j+1} \right)$. Inside each tile $t_j$, we place triangles $z_{j,k}^{(i)}$ for all $i \in [n], k \in [\ell_i]$, each with vertices $HL$, $HH$, and $LH$. Triangle $z_{j,k}^{(i)}$ has weight $w_{j,k}^{(i)} = d_{j}^{-2} \cdot (x_{j+1,k}^{(i)} - x_{j,k}^{(i)}) \cdot f_{k}^{(i)} (x_{j+1,k}^{(i)})$. Notice that here we have also used a division gate.\footnote{The division gate is among those available in \fixp/ and can be used as long as there is no division with 0. However, it is known that the definition of the class does not need it since there are \fixp/-hard problems that do not use this gate. For a proof of this, see \cite{EY10}.} Also, we place triangles $z_{j,k}^{*(i)}$ for all $i \in [n], k \in [\ell_i]$, each with vertices $LL$, $HL$, and $HH$. Triangle $z_{j,k}^{*(i)}$ has weight $w_{j,k}^{*(i)} = d_{j}^{-2} \cdot (x_{j+1,k}^{(i)} - x_{j,k}^{(i)}) \cdot f_{k}^{(i)} (x_{j,k}^{(i)})$ (notice the change in the argument of $f_{k}^{(i)}$).

    By the above construction, if $f_{k}^{(i)} (x) = c_{ik}$, then $w_{j,k}^{(i)} = w_{j,k}^{*(i)} =  d_{j}^{-2} \cdot (x_{j+1,k}^{(i)} - x_{j,k}^{(i)}) \cdot c_{ik}$, and by multiplying all parts with $d_{j}^{2}$ we get that for every colour $i \in [n]$ the area of the $j$-th subinterval equals its measure inside tile $t_j$, and has square shape. Similarly, if $f_{k}^{(i)} (x) = 2 (x - r_{k}^{(i)}) c_{ik}$, then $w_{j,k}^{(i)} = d_{j}^{-2} \cdot (x_{j+1,k}^{(i)} - x_{j,k}^{(i)}) \cdot 2 (r_{k+1}^{(i)} - r_{k}^{(i)}) c_{ik}$, and $w_{j,k}^{*(i)} = 0$. By multiplying both sides of the former equation with $d_{j}^{2} / 2$ we get that for every colour $i \in [n]$ the area of the $j$-th subinterval equals its measure inside tile $t_j$, and has triangular shape (see \cref{fig:triangles}).
 
	\begin{figure}[htbp]
		\begin{subfigure}[b]{0.48\textwidth}
			\centering
			\includegraphics[width=0.75\linewidth]{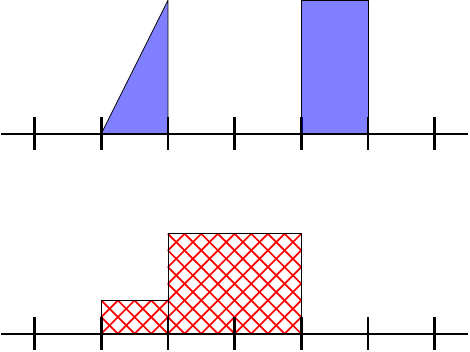}
			\caption{Part of the \ch instance with two agents and the corresponding regions of interest.} \label{fig:triangles-a}
		\end{subfigure}%
		\hspace*{\fill}   
		\begin{subfigure}[b]{0.47\textwidth}
			\centering
			\includegraphics[width=0.65\linewidth]{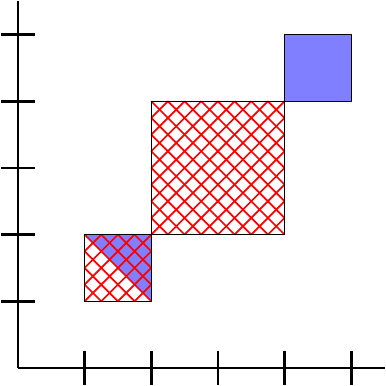}
			\caption{The corresponding part of the \scut-\pizza instance. \\ } \label{fig:triangles-b}
		\end{subfigure}%
		\caption{An example of the reduction from \ch to \scut-\pizza.}
		\label{fig:triangles}
	\end{figure}

	Now we need to show how an exact solution of \isc, that is, a \scut-path with $n-1$ many turns is mapped back to a solution of \dfms with $n$ cuts. This is straightforwardly done in the following way. Let the solution be represented by an ordered tuple of $(n-1)$ points $(p_1, \dots, p_{n-1})$ interpreted as the turns of the \scut-path. Then, $p_1$'s coordinates correspond to the first two cuts in $[0,1]$ of \dfms, and for the remaining $n-2$ points, those with even index encode a cut at their $y$-coordinate, while those with odd index encode it at their $x$-coordinate.
	
    It remains to prove that this is a solution of \dfms. 
    To do this, we will use the following crucial observation.
	
	\begin{claim}
		\label{clm:no-turns-triangles}
		In any solution of \isc created by \dfms, there can be no turn inside a tile.
	\end{claim}
	
	\begin{proof}
		The truth of the statement can become apparent if one considers Property \ref{point:3} of the above facts on \dfms, together with the fact that the endpoints of each $[a_i , b_i]$ are points of interest. It is implied then, that in any of the \dfms solutions, there needs to be at least one cut in each interval $[a_i, b_i]$ for each $i \in [n]$. And since we are allowed to draw at most $n$ cuts, there will be a single cut in each of those intervals. Also, due to the fact that $a_i, b_i$ are points of interest, each cut in $[a_i , b_i]$ belongs to a different subinterval, and therefore, there will be exactly $n$ cuts in $n$ distinct subintervals. Focusing now on our \isc construction, the \scut-path with $n-1$ turns consists of a total of $n$ horizontal and vertical line segments. If any of those does not intersect any tile, then this \scut-path will correspond to a set of at most $n-1$ cuts in \dfms, which cannot be a solution. Therefore, every line segment intersects some tile. 
		
		Notice that, due to the diagonal placement of tiles, any \scut-path that is a solution has to be positively $x$-monotone and positively $y$-monotone, i.e., to have a ``staircase'' form. Also, this diagonal placement of tiles dictates that, if there was a turn of the \scut inside a tile, then two line segments are used to intersect it instead of one. This means that at most $n-1$ tiles will be intersected, and therefore this translates to a set of at most $n-1$ cuts in \dfms, which cannot be a solution.
	\end{proof}
	Put differently, having a turn inside a tile would be a ``waste'', and in our instances, all turns are needed in order for a solution to exist. Now, if we focus on any tile $t_j$ for some $j \in [m]$, as shown in the construction, it will contain exactly the same measure that the corresponding colour has inside the $j$-th subinterval of \dfms. Furthermore, if the density is rectangular then the measure in the tile is square. Having a square on the entire region of the tile allows us to immediately translate a line segment of \scut-path that intersects the square into a \dfms cut, since we translate in the same manner both horizontal and vertical such segments. Similarly, if the density is triangular, then so is the measure in the tile half of whose area it occupies, and the right angle of the triangle is at the top-right of the tile. This again allows us to translate any line segment, horizontal or vertical, straightforwardly into a cut of \dfms, since the measure at the bottom or left, respectively, part of the triangle has exactly the same area as that of the corresponding triangle's part at the left of the \dfms cut. 
	
	The above analysis shows that, for any $i \in [n]$, in each individual tile the total \lplus mass is equal to the total \lplus value of the corresponding subinterval. Therefore, given a solution to \isc where the total mass of \rplus will equal that of \rminus for every $i \in [n]$, the induced cuts on \ich constitute a solution. Finally, by construction of \dfms, at any given point in $[0,1]$, no more than three agents have positive valuation density, therefore, \isc has overlap at most $3$. Due to the \fixp/-hardness of exact \ch shown in \cite{DFMS21}, we get the following.
    \begin{theorem}\label{thm:fixp-h}
		\scut-\pizza is \fixp/-hard even when every mass distribution consists of at most six pieces that can be unit-squares or right-angled triangles, and have overlap at most 3.
    \end{theorem}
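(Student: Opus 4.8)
The plan is to give a polynomial-time reduction from \dfms --- the family of \ch instances shown \fixp-hard in \citep{DFMS21} --- to \scut-\pizza that preserves exact solutions; since \fixp-hardness is preserved under such reductions, this yields the theorem. Concretely, I would follow the overlapping reduction: partition $[0,1]$ at the points of interest $x_1 \leq \dots \leq x_m$ and create the $m-1$ diagonally arranged unit-squares $s_1, \dots, s_{m-1}$, with $s_j$ occupying $[j,j+1]^2$. For each agent $i$ with block density $c_{ij}$ on the interval of interest $[x_j, x_{j+1}]$, give $\mu_i$ uniform density $c_{ij}\cdot(x_{j+1}-x_j)$ on $s_j$, exactly as in the overlapping construction. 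For the triangular part of a valuation, Point~\ref{point:2} guarantees that each triangle lives inside a single interval of interest of the form $[a,a+1]$ and has height $2$, so if agent $i$'s triangle sits inside $[x_j, x_{j+1}]$ we place inside $s_j$ an axis-aligned right-angled triangle with vertices $(j,j+1)$, $(j+1,j)$, $(j+1,j+1)$ and density $1$ (all triangles in \dfms carry the same value). Every resulting $\mu_i$ is then a union of at most six pieces --- at most $4$ block-squares plus a triangle, or at most $6$ block-squares --- each an axis-aligned unit-square or a right-angled triangle, and since in \dfms at most three agents are supported on any interval of interest, the overlap is $3$.

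The main obstacle is to rule out solution paths that waste a turn inside a square: if a path could cut a square while entirely avoiding the triangle embedded in it, the correspondence with \ch would break. This is handled by Lemma~\ref{lem:no-turns-triangles}, whose argument I would invoke: by Point~\ref{point:3} every agent $i$ owns an interval $[a,b]$ carrying strictly more than half of its total valuation, with these intervals pairwise disjoint across agents; hence any \scut-path that bisects $\mu_i$ must cross the square encoding $[a,b]$, so a solution path crosses at least $n$ distinct squares. Since consecutive squares are only diagonally adjacent, each transition between distinct squares costs at least one turn, and with only $n-1$ turns available (the first square being entered for free) no turn can be spent inside a square.

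Given a solution path that never turns inside a square, each visited square $s_j$ is crossed by a single axis-parallel segment (horizontal or vertical). Such a segment splits the unit-square into two sub-rectangles and, when a triangle is present, splits the right-angled triangle into a sub-triangle and its complement, where the area of the sub-triangle is a continuous, strictly monotone, quadratic function of the segment's position --- precisely the shape of the triangular \ch valuation $v_i([a^\ell,x])=(a^\ell-x)^2$ on $[a^\ell,a^r]$ (a horizontal crossing behaves identically by symmetry). I would therefore map the crossing of $s_j$ to a single cut inside the block $[x_j,x_{j+1}]$ of the \ch instance placed so as to remove the matching fraction of mass, exactly as in Lemma~\ref{lem:hv-to-ch}; an $(n-1)$-turn path visiting $n$ squares yields $n$ cuts. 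By the weighting one gets $\mu_i(\rplus)=v_i(\calI^+)$ and $\mu_i(\rminus)=v_i(\calI^-)$ for every $i$, so an exact simultaneous bisection of all $\mu_i$ is exactly an exact solution of \dfms; conversely any exact \ch solution is realized by a $y$-monotone \scut-path crossing the relevant diagonal squares with straight segments at the matching positions. Since the reduction is computable in polynomial time and \dfms is \fixp-hard \citep{DFMS21}, \scut-\pizza is \fixp-hard under the stated restrictions on the number of pieces (at most six unit-squares or right-angled triangles) and overlap (at most $3$).
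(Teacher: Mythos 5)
Your proposal follows essentially the same route as the paper: reduce from the \fixp-hard instances \dfms of \ch via the overlapping construction, encode blocks as weighted unit-squares and triangles as axis-aligned right-angled triangles inside the diagonal squares, and use Point~\ref{point:3} to argue (as in Lemma~\ref{lem:no-turns-triangles}) that any exact solution must cross $n$ pairwise-disjoint groups of squares and hence has no turns to spare inside any square, so every crossing is a single axis-parallel segment that maps back to a single \ch cut. The only cosmetic difference is that you speak of ``the square encoding $[a,b]$'' where the paper more carefully works with the triple (or quadruple) of consecutive unit-squares spanning that interval, but the counting argument you give is the same and the conclusion is unchanged.
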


	We can also show that deciding whether there exists an exact \scut-\pizza solution with $n-2$ turns is \etr/-hard. For this, we will use a result of \cite{DFMS21}, where it was shown that deciding whether there exists an exact \ch solution with $n$ agents and $n-1$ cuts is \etr/-hard. We give a reduction from this version of \ch to the decision problem for \scut-\pizza. The reduction uses the same ideas that we presented for the \fixp/-hardness reduction.

	Before we prove the theorem, let us give a brief sketch of the \etr/-hardness proof of the aforementioned \ch decision variant of \cite{DFMS21}. We are given an instance of the following problem which was shown to be \etr/-complete (Lemma 15 of the aforementioned paper).
	\begin{mdframed}[backgroundcolor=white!90!gray,
		leftmargin=\dimexpr\leftmargin-20pt\relax,
		innerleftmargin=4pt,
		innertopmargin=4pt,
		skipabove=5pt,skipbelow=5pt]
		\begin{definition}[\bfeas]
			\label{def:feas}
			Let $p(x_1, \dots, x_m)$ be a polynomial. 
			We ask whether there exists a point $(x_1, \dots, x_m) \in [0,1]^m$ that satisfies $p(x_1, \dots, x_m) = 0$.
		\end{definition}
	\end{mdframed} \vspace{5pt}

	Given the polynomial $p$, we first normalize it so that the sum of the absolute values of its terms is in $[0,1]$ (thus not inserting more roots), resulting in a polynomial $q$. Then, we separate the terms that have positive coefficients from those that have negative coefficients, thus creating two positive polynomials $q_1, q_2$ such that $q = q_1 - q_2$. Therefore, $p(\vec{x})=0$ for some $\vec{x} \in [0,1]^{m}$ if and only if $q_{1}(\vec{x}) = q_{2}(\vec{x})$. We then represent $q_1, q_2$ in a circuit form with gates that implement the operations $\{ c, +, \times c, \times \}$ (where $c \in [0,1] \cap \mathbb{Q}$ is a constant input, and $\times c$ is multiplication by constant). By the scaling we know that $q_1, q_2 \in [0,1]$, and in addition, the computation of the circuit using the aforementioned operations can be simulated by a \ch instance with $n-1$ agents, where $n-1 \in \poly(\# gates)$; the argument of $p$ becomes a set of ``input'' cuts and according to the circuit implementation by \ch, two output cuts encode $q_1$ and $q_2$. Finally, checking whether $q_1 = q_2$ is true is done by an additional $n$-th agent that can only be satisfied (have her total valuation split in half) if and only if $q_1(\vec{x}) = q_2(\vec{x})$ for some $\vec{x} \in [0,1]^{m}$. In other words, the \ch instance has a solution if and only if there are ``input'' cuts $\vec{x} = (x_1, \dots, x_m) \in [0,1]^m$ that force the rest of the cuts (according to the circuit implementation) that encode the values $v_{m+1}, \dots, v_{n-1}$ at the output of each of the circuit's gates such that they also cut the $n$-th agent's valuation in half without the need for an additional cut.    
	
    We are now ready to prove the following theorem.
    
        \begin{theorem}\label{thm:etr-h}
            It is \etr/-hard to decide if an exact \scut-\pizza instance admits a solution with a \scut-path with at most $n-2$ turns, even when every mass distribution consists of at most six pieces that can be unit-squares or right-angled triangles, and have overlap at most 3.
    \end{theorem}
    
    \begin{proof}
        We will use exactly the aforementioned technique from \cite{DFMS21} up to the point where we have a \ch instance that checks whether $q_1 = q_2$. Then, we use the gadgets described in the \fixp/-hardness reduction (proof of \cref{thm:fixp-h}) that reduce the valuation functions of $n$ agents in \ch into mass distributions of a \scut-\pizza instance with $n$ colours. According to \cref{clm:no-turns-triangles}, in any solution of the resulting \scut-\pizza instance, the \scut-path does not have turns inside any unit-square. This means that each horizontal/vertical segment of the \scut-path that cuts a unit square in \scut-\pizza has a 1-1 correspondence to a cut of a \ch solution, thus a \ch solution that uses $n-1$ cuts would correspond to a \scut-path with $n-1$ line segments, i.e., $n-2$ turns. Therefore, if and only if there is a \scut-path that solves \scut-\pizza with $n$ colours and $n-2$ turns, there is a $(n-1)$-cut that solves \ch with $n$ agents. Equivalently, there is a $\vec{x} \in [0,1]^m$ such that $p(\vec{x})=0$, making the \bfeas instance satisfiable. 
    \end{proof}

	\section{Membership results}
	\label{sec: existence}

    Up to this point, we have showed that \spizza and \scut-\pizza are \ppa/-hard for their approximate versions for any discrepancy $\eps < 1/5$, even when the input consists of point sets. We have also showed that the decision variants of the problems are NP-hard. Furthermore, we have studied the exact version of \scut-\pizza and proved that it is \fixp/-hard, while its decision variant is \etr/-hard.
 
	In this section, we present membership results for the exact and approximate versions of the aforementioned pizza sharing problems. 
	Our \ppa/ membership result for \spizza is achieved by reducing the problem to its discrete version, which was recently shown by \cite{S21} to be in \ppa/ (\cref{thm: straight-pizza-PPA-incl}). For \scut-\pizza, our results revolve around the original existence proof of \cite{KRS16} which, additionally to the Borsuk-Ulam theorem, uses other involved topological techniques. We present a new algorithmic proof based on the original one, where now the only topology tool used is the Borsuk-Ulam theorem (\cref{thm: sc-pizza_exists}). Then, by showing how to algorithmically compute the measure contained in the positive part of an arbitrary \scut-path (\cref{app: exact-computation-measures}), we turn this into a \ppa/ membership proof (\cref{thm: eps-sc-pizza-in_ppa}). Finally, we study the corresponding decision variants of the problems and acquire \np/ membership for the approximate continuous and discrete versions of the problems and \etr/ membership of exact \scut-\pizza.

	\subsection{Membership results for approximate \spizza}
	\label{sec:straight-pizza_inclusion}
	
	Here we prove that \eps-\spizza with $2n$ mass distributions is in \ppa/ for any $\eps \in \Omega(1/\poly(N))$ and $\alpha \in \Omega(1/\poly(N))$, where $N$ is the input size and $\alpha$ is the smallest area among the triangles of the triangulated mass distributions of \eps-\spizza. This answers a big open question left from \cite{DFM22}, where \ppa/ membership was elusive. We also show that deciding whether a solution with at most $n-1$ straight lines exists is in \np/. Both those results are derived by reducing the problem to its discrete version, namely \dspizza, where instead of mass distributions, the input consists of points, and the goal is to bisect (up to one point) each of the $2n$ point sets using at most $n$ straight lines. \dspizza, was recently shown by \cite{S21} to be in \ppa/.
	
    \myparagraph{\ppa/ membership.}
    We will use \cref{lem: cont-to-discr} in a straightforward way. In particular, given an \eps-\spizza instance for some $\eps \in \Omega(1/\poly(N))$, we will pick an $\eps' < \eps$ as prescribed in the aforementioned lemma, and reduce our problem to \dspizza in time $\poly(N,1/\alpha)$.

        \begin{theorem}
		\label{thm: straight-pizza-PPA-incl}
		\eps-\spizza with $2n$ weighted mass distributions with holes is in \ppa/ for any $\eps \in \Omega(1/\poly(N))$ and $\alpha \in \Omega(1/\poly(N))$, where $N$ is the input size and $\alpha$ is the smallest area among the triangles of the triangulated mass distributions.
	\end{theorem}

	\myparagraph{\np/ membership.}
        Observe that \cref{lem: cont-to-discr} shows how to turn a set of $2n$ mass distributions (weighted polygons with holes) into a set of (unweighted) points, which, if cut by at most $2n$ straight lines, will result to an approximate cut of the mass distributions relaxed by an extra additive $\eps' \in \Omega (1/N^c)$ for any $c > 0$. When the number of straight lines is at most $n-1$, this gives straightforwardly a reduction to $0$-\dspizza, since we can check how many of the polynomially many points of the latter instance are in \rplus and \rminus.

        \begin{theorem}\label{thm: straight-pizza-np-contain}
            Deciding whether \eps-\spizza with $2n$ weighted mass distributions with holes has a solution with at most $n-1$ straight lines is in \np/ for any $\eps \in \Omega(1/\poly(N))$ and $\alpha \in \Omega(1/\poly(N))$, where $N$ is the input size and $\alpha$ is the smallest area among the triangles of the triangulated mass distributions.
        \end{theorem}

	\subsection{Membership results for approximate \scut-\pizza} \label{sec: approx_inclusion}

        Here we show that the problem of finding a solution to \eps-\scut-\pizza is in \ppa/ even for exponentially small \eps, while deciding whether there exists a solution (a \scut-path) with at most $k \in \naturals$ turns is in \np/. The latter is almost immediate by the fact that any candidate solution is verifiable in polynomial time. The former is shown by reducing \eps-\scut-\pizza to \epsborsuk which is in \ppa/ (e.g., see \cite{DFMS21}).

    \myparagraph{Existence of a \scut-\pizza solution.} 
	We begin by proving that a solution to exact \scut-\pizza always exists (and therefore, a solution to the approximate version exists too).
	This proof holds for arbitrary mass distributions, but for our algorithmic results, we will only consider the case where the mass distributions are unions of polygons with holes. Our proof is based on that of Karasev, Rold{\'a}n-Pensado, and Sober{\'o}n \cite{KRS16}. However, they use more involved techniques from topology, which we would like to avoid since our goal is to give an algorithmic proof.
	
    \begin{theorem}[originally by \cite{KRS16}] \label{thm: sc-pizza_exists}
		Let $n$ be a positive integer. For any $n$ mass distributions in $\reals^2$, there is a path formed by only horizontal and vertical segments with at most $n-1$ turns that splits $\reals^2$ into two sets of equal size in each measure. Moreover, the path is $y$-monotone.
    \end{theorem}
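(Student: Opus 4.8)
The plan is to follow the embedding sketched in the overview and make it precise, then invoke Borsuk--Ulam. First I would fix a normalization so that all $n$ mass distributions are probability measures (this costs nothing and lets us speak of ``half'' as $1/2$). I would then make the parametrization of $y$-monotone \scut-paths explicit: a path is determined by a nonnegative vector of horizontal-strip heights $(h_1,\dots,h_n)$ with $\sum_i h_i = 1$, a vector of $x$-coordinates $(u_1,\dots,u_n) \in [0,1]^n$ giving the turning abscissas, and a sign for each strip telling us whether side $A$ lies to the left or to the right of the vertical segment bounding that strip. The key trick, exactly as in the overview, is to encode each strip height together with its sign in a single real variable $z_i \in [-1,1]$ via $h_i = |z_i|$ and $\operatorname{sign}(z_i)$ = the strip's sign, and to encode each $u_i$ as $|x_i|$ for $x_i \in [-1,1]$ (the sign of $x_i$ is a harmless redundant coordinate). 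After rescaling the whole vector by a positive constant so that the sum of absolute values is at most $1$, and adjoining one slack coordinate, we obtain a point $\vec P$ on the $L_1$-sphere $S^{n}$ in $\mathbb{R}^{n+1}$ built from the $z_i$'s, the $x_i$'s, and the slack. Conversely every $\vec P \in S^n$ with at least one nonzero $z_i$ decodes to a well-defined $y$-monotone \scut-path with at most $n-1$ turns, and I would note that the ``all $z_i = 0$'' locus is negligible / can be handled by continuity.

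Next I would define $f : S^n \to \mathbb{R}^{n}$ by letting $f(\vec P)_i$ be the $\mu_i$-mass lying on side $A$ of the cut determined by $\vec P$ (with $f_i$ set to $0$, or to $\mu_i(\mathbb{R}^2)/2$, on the degenerate locus so as to keep things continuous and odd). The two things to check are: (i) $f$ is continuous — this is where I expect the real work to be, since the cut region depends on the $z_i$ and $x_i$ through nested max/abs operations and the measures $\mu_i$ are absolutely continuous with respect to Lebesgue measure, so $\mu_i(R^A(\vec P))$ varies continuously; one has to be slightly careful at configurations where a strip height collapses to zero or two turning abscissas coincide, but in each case the symmetric difference of the corresponding regions has Lebesgue measure tending to zero, so continuity survives; and (ii) $f$ is odd in the relevant sense: negating $\vec P$ flips every $z_i$, hence flips every strip sign, hence swaps sides $A$ and $B$ globally, while leaving the turning coordinates $|x_i|$ and $|z_i|$ (i.e. the geometry of the path) unchanged — so $f(-\vec P)_i = \mu_i(\mathbb{R}^2) - f(\vec P)_i$.

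With these two properties in hand, Borsuk--Ulam (Theorem~\ref{thm:borsuk-ulam}) supplies a point $\vec P^\star \in S^n$ with $f(\vec P^\star) = f(-\vec P^\star)$, i.e. $f(\vec P^\star)_i = \mu_i(\mathbb{R}^2) - f(\vec P^\star)_i$ for all $i$, hence $f(\vec P^\star)_i = \mu_i(\mathbb{R}^2)/2$: the cut associated with $\vec P^\star$ bisects every $\mu_i$. Finally I would argue this point is not on the degenerate locus: if all $z_i$ were zero then $\vec P^\star$ would consist only of the $x_i$ and slack coordinates, and then $f(\vec P^\star) = f(-\vec P^\star)$ forces, by the convention chosen there, a contradiction with the normalization (the strips cover nothing, so no mass is on side $A$, giving $0 = \mu_i(\mathbb{R}^2)/2 > 0$) — so $\vec P^\star$ genuinely decodes to a $y$-monotone \scut-path with at most $n-1$ turns that equibisects all $n$ distributions, which is the theorem.

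The main obstacle, as indicated above, is carrying out the continuity argument for $f$ cleanly: the decoding map from sphere points to subsets of the plane has to be shown to be continuous \emph{as a map into measures}, and the honest way to do this is to bound, for nearby $\vec P, \vec P'$, the Lebesgue measure of the symmetric difference $R^A(\vec P) \triangle R^A(\vec P')$ by something that goes to $0$, using that each $\mu_i$ has a bounded density (or is at least absolutely continuous) on the bounded domain $[0,1]^2$; the boundary cases (vanishing strip, coalescing abscissas, the wrap-around on the horizontal axis) each need a short separate check but all reduce to the same symmetric-difference estimate.
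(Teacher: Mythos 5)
Your high-level plan — encode a $y$-monotone \scut-path plus a side-labelling as a point of an $L_1$-sphere so that the antipodal map swaps sides $A$ and $B$, define $f$ to be the side-$A$ masses, verify continuity, and invoke Borsuk--Ulam — is exactly the route the paper takes in Appendix~\ref{app: sc-pizza_exists}. However there are two concrete gaps.

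First, your dimension count is off by a factor of two, and this is not cosmetic. You parametrize by $n$ strip heights $(h_1,\dots,h_n)$, $n$ turning abscissas $(u_1,\dots,u_n)$, and one slack, which is $2n+1$ real coordinates; that lives on $S^{2n}$, not on $S^n$ as you claim. Worse, a $y$-monotone path built from $n$ strips and $n$ vertical segments, alternating $H$--$V$--$H$--$V$, has on the order of $2n-1$ turns, so you would only prove the theorem with roughly twice the allowed budget of turns. To land on $S^n$ with an $f:S^n\to\reals^n$ \emph{and} keep at most $n-1$ turns, you need about $\lceil n/2\rceil$ strips and $\lfloor n/2\rfloor$ abscissas. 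The paper does exactly this: for $n=2m$ it uses $m$ horizontal cuts $y_1\le\cdots\le y_m$ (so $m+1$ stripes and variables $z_1,\dots,z_{m+1}$), $m$ vertical cuts $x_1,\dots,x_m$, and packs $(Z_1,\dots,Z_m,R,X_1,\dots,X_m)$ onto $S^{2m}=S^n$, with the sign of $R$ playing the role of the sign of $z_{m+1}$ so that the remainder coordinate is not wasted. The odd case $n=2m-1$ is handled by dropping the top vertical cut. You need this accounting to get the statement as claimed.

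Second, your treatment of the degenerate ``all $z_i=0$'' locus is not sound. That locus is antipodally invariant (negating the slack and the $x_i$'s does not change $\{z_i=0\text{ for all }i\}$), so whatever constant value you assign $f$ there will automatically satisfy $f(\vec P)=f(-\vec P)$, and Borsuk--Ulam is then free to hand you a point on that locus. Your closing argument that such a $\vec P^\star$ ``forces a contradiction'' conflates what you would \emph{like} $f$ to do with what your chosen convention actually makes it do: if you set $f_i\equiv 0$ there you break continuity (the limit of $\mu_i(R^A)$ as the strips collapse need not be $0$), and if you set $f_i\equiv\mu_i/2$ there you manufacture spurious solutions. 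The paper avoids this entirely by defining the decoding map with nested $\min/\max$ clipping so that $\sum_i|z_i|=1$ holds \emph{identically}: concretely $|z_i|=\min\{\sum_{j\le i}|Z_j|,1\}-\sum_{j<i}|z_j|$ and $|z_{m+1}|=1-\sum_{i\le m}|z_i|$. Under this decoding a vanishing $\vec Z$-part simply yields one full-height strip of sign $\sign(R)$, which is a perfectly ordinary (trivial) cut and is generically not an equibisection, so Borsuk--Ulam will not get stuck on it. Adopting this clipping is also what makes the continuity argument routine — your symmetric-difference estimate is the right instinct, but it only goes through cleanly once the decoding map itself is continuous, which the clipping guarantees.
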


    \begin{proof}
	Let $S^n$ denote the $L_1$ sphere in $n+1$ dimensions. The Borsuk-Ulam theorem states that if $f : S^n \to \reals^n$ is a continuous function, then there exists a point $\vec{z} \in S^n$ such that $f(\vec{z}) = f(-\vec{z})$. Consider $n$ mass distributions in $\reals^2$.
	For some given $d \in \naturals^*$, we will show how to decode \scut-paths from points in $S^d$, and then we will construct a function $f: S^d \to \reals^n$ for which $f(\vec{z}) = f(-\vec{z})$ implies that the \scut-path corresponding to $\vec{z}$ is a solution to the \scut-\pizza problem. In particular, we will show a general way to decode \scut-paths (with $d-1$ turns) from points in $S^d$ for a suitable dimension $d \in \naturals^*$ to be determined later. It will turn out that, to guarantee a \scut-\pizza solution given $n$ measures, the Borsuk-Ulam theorem will require $d = n$. However, we will have $d$ undetermined for as long as we can, making the proof transparent enough to help in the understanding of the cases where $d \neq n$ (see \cref{thm: NP-incl}, \cref{thm: ETR-incl}, and their proofs).

 	\begin{figure}[htbp]
        \centering
        \includegraphics[scale=0.67]{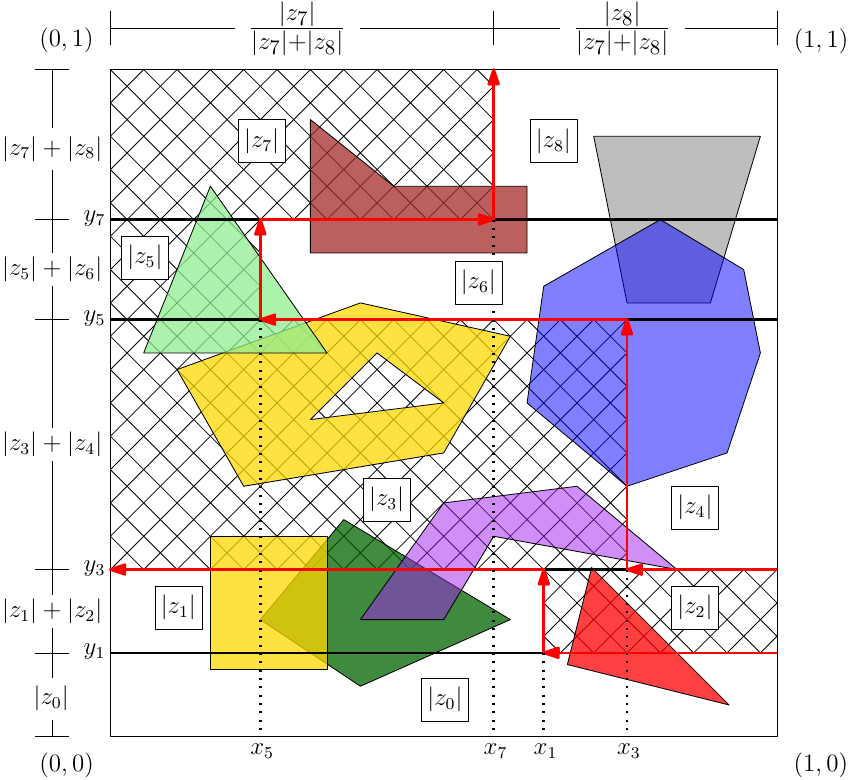}
        \caption{An instance with $n = 8$ polygon-shaped mass distributions and $d=8$. A vector $(z_0, \dots, z_8) \in S^8$ corresponds to horizontal slices and vertical cuts, which define a $y$-monotone \scut-path with $7$ turns. Here, we have $z_0, z_1, z_4, z_6, z_8 > 0$ and $z_2, z_3, z_5, z_7 < 0$. For an example of the figure's notation, notice that in the top strip $[y_7, 1]$, the area to the left of $x_7$ is $|z_7|$ and to the right of $x_7$ is $|z_8|$.}
        \label{fig: BU-incl_1}
	\end{figure}
    
    For the sake of simplicity, we normalize them so that they fit inside $[0,1]^2$ by scaling them down if needed, preserving their relative positions.
    \cref{fig: BU-incl_1} gives an overview of our decoding to \scut-paths, which actually results in $y$-monotone \scut-paths. First, we will consider the case of even $d$, and then explain how the odd $d$ case is proved. We will decode $\vec{z} := (z_0, z_1, \dots, z_d) \in S^d$ into horizontal slices $y_1, y_3, \dots, y_{d-1}$ and vertical cuts $x_1, x_3, \dots, x_{d-1}$. For ease of presentation, let us also define $y_0 := 0$ and $y_{d+1} := 1$. We set $y_1 := | z_0 |$, and $y_j := y_{j-2} + | z_{j-2}| + | z_{j-1} |$, for $j \in \{ 3, 5, \dots, d+1 \}$. It is immediate that $y_{d+1} = \sum_{j=0}^{d} | z_j |$, which equals 1 as it should (since $\vec{z} \in S^d$). So, the bottom \emph{strip} $[y_0, y_1]$ has width $|z_0|$, and the strips $[y_j , y_{j+2}]$ for $j \in \{ 1, 3, \dots, d-1 \}$ have width $| z_j | + | z_{j+1} |$. 
    
    Let us focus on one such strip $[y_j, y_{j+2}]$ to define its respective $x_j$. We have the following cases. (i) If $|z_j| + |z_{j+1}| > 0$ and $z_j \cdot z_{j+1} \geq 0$, then $x_j \in \{ 0, 1 \}$; in particular, if $z_j + z_{j+1} > 0$ then $x_j = 1$, and if $z_j + z_{j+1} < 0$ then $x_j = 0$. (ii) Otherwise, $x_j$ has to satisfy the equation $(|z_j| + |z_{j+1}|) \cdot x_j = | z_j |$; in other words, if $|z_j| + |z_{j+1}| > 0$ and $z_j \cdot z_{j+1} < 0$, then $x_j = \frac{| z_j |}{|z_j| + |z_{j+1}|}$, whereas if $|z_j| + |z_{j+1}| = 0$, then $x_j$ can take any value in $[0,1]$. Notice that in the latter subcase, $z_j = z_{j+1} = 0$, and the width of the strip is 0, so no matter what the value of $x_j$ is, it will not contribute to the \scut-path's structure. 
    
    Finally, we need to label each of the (at most two) parts of each strip. In case (i) above, if $z_j + z_{j+1} > 0$ (resp. $z_j + z_{j+1} < 0$), then the whole strip belongs to \rplus (resp. \rminus), depicted with a non-shaded (resp. shaded) area. In case (ii), if $z_j \geq 0$ and $z_{j+1} \leq 0$ (resp. $z_j \leq 0$ and $z_{j+1} \geq 0$), then the part of the strip to the left of $x_j$ belongs to \rplus (resp. \rminus) and that to the right of $x_j$ belongs to \rminus (resp. \rplus). Similarly, for the strip $[y_0, y_1]$, if $z_0 > 0$ (resp. $z_0 < 0$) then the whole strip belongs to \rplus (resp. \rminus), while if $z_0 = 0$ its width is zero and there is no need to specify where it belongs.

    Now, having the slices $y_j$, the cuts $x_j$, and the labels (\rplus or \rminus) of the slices they define, we can use \cref{alg: cuts_to_paths} to recover the underlying \scut-path.

\begin{algorithm}[h!]
    \caption{Mapping $(z_0, z_1, \dots, z_d)$ to a \scut-path with $d-1$ turns}  
    \label{alg: cuts_to_paths}  
    \begin{algorithmic}[1]
        \REQUIRE{A vector $(z_0, z_1, \dots, z_d) \in S^d$.}
        \ENSURE{A \scut-path with $d-1$ turns.}
        
        \medskip

        \STATE{Let $s_1 := |z_0|$, and $s_j := |z_{j-2}| + |z_{j-1}|$ for $j \in \{ 3, 5, \dots, d+1 \}$ (resp. $j \in \{ 3, 5, \dots, d+2 \}$) when $d$ is even (resp. odd). }
        \STATE{Find the set $T = \{t_1 \dots, t_{r}\} \subseteq \{1, 3, \dots, d+1\}$ (resp. $\{ 1, 3, \dots, d+2 \}$) when $d$ is even (resp. odd), where $t_1 < \dots < t_r$, and for each $\ell \in [r]$ it holds that $s_{t_\ell} > 0$.}
        \IF{$z_0 > 0$}
            \STATE{create an artificial cut $x_0 = 1$ in strip $[y_0, y_1]$, and set $t_0 := 0$}
        \ENDIF
        \IF{$z_0 < 0$}
            \STATE{create an artificial cut $x_0 = 0$ in strip $[y_0, y_1]$, and set $t_0 := 0$}
        \ENDIF
        \IF{$d$ odd \AND $z_{d} > 0$}
            \STATE{create an artificial cut $x_{d} = 1$ in strip $[y_d, y_{d+2}]$}
        \ENDIF
        \IF{$d$ odd \AND $z_{d} < 0$}
            \STATE{create an artificial cut $x_{d} = 0$ in strip $[y_d, y_{d+2}]$}
        \ENDIF
        \STATE{Give an upward direction to all cuts within strips.}
        \STATE{For any given $\ell \in [r]$, let $x_{t_{\ell - 1}} y_{t_{\ell}} x_{t_{\ell}}$ denote the directed horizontal line segment belonging to slice $y_{t_{\ell}}$ that connects the head of cut $x_{t_{\ell - 1}}$ and the tail of cut $x_{t_{\ell}}$. Also, let $\overline{x_{t_{\ell - 1}} y_{t_{\ell}} x_{t_{\ell}}}$ denote its complementary directed line segment with the same start and end points that wraps around the $x$-axis.}
        \STATE{$\ell \gets 1$}
        \WHILE{$\ell \leq r$}
            \IF{$x_{t_{\ell - 1}}, x_{t_{\ell}} \in (0, 1) $ \AND $z_{t_{\ell - 1}} \cdot z_{t_{\ell}} > 0 $}
                \STATE{create $x_{t_{\ell - 1}} y_{t_{\ell}} x_{t_{\ell}}$}
            \ENDIF
            \IF{$x_{t_{\ell - 1}}, x_{t_{\ell}} \in (0, 1) $ \AND $z_{t_{\ell - 1}} \cdot z_{t_{\ell}} < 0 $}
                \STATE{create $\overline{x_{t_{\ell - 1}} y_{t_{\ell}} x_{t_{\ell}}}$}
            \ENDIF

            \IF{$x_{t_{\ell - 1}}, x_{t_{\ell}} \in \{ 0, 1 \} $ \AND $x_{t_{\ell - 1}} \neq x_{t_{\ell}}$}
                \STATE{create $x_{t_{\ell - 1}} y_{t_{\ell}} x_{t_{\ell}}$}
            \ENDIF

            \IF{$x_{t_{\ell - 1}} \in \{ 0, 1 \}$ \AND $x_{t_{\ell}} \in (0, 1) $}
                \IF{$z_{t_{\ell}} > 0 $}
                    \STATE{create $x_{t_{\ell - 1}} y_{t_{\ell}} x_{t_{\ell}}$}
                \ENDIF
                \IF{$z_{t_{\ell}} < 0 $}
                    \STATE{create $\overline{x_{t_{\ell - 1}} y_{t_{\ell}} x_{t_{\ell}}}$}
                \ENDIF
            \ENDIF

            \IF{$x_{t_{\ell - 1}} \in (0, 1)$ \AND $x_{t_{\ell}} \in \{ 0, 1 \} $}
                \IF{$z_{t_{\ell - 1}} > 0 $}
                    \STATE{create $x_{t_{\ell - 1}} y_{t_{\ell}} x_{t_{\ell}}$}
                \ENDIF
                \IF{$z_{t_{\ell - 1}} < 0 $}
                    \STATE{create $\overline{x_{t_{\ell - 1}} y_{t_{\ell}} x_{t_{\ell}}}$}
                \ENDIF
            \ENDIF
            
            \STATE{$\ell \gets \ell+1$}
        \ENDWHILE
        \STATE{Remove artificial cuts from strips $[y_0, y_1]$ and $[y_d, y_{d+2}]$ (odd $d$ case), if any.}
    \end{algorithmic}
\end{algorithm}

    What remains is the definition of the required Borsuk-Ulam function. For any given point $\vec{z} = (z_0, z_1, \dots, z_d) \in S^{d}$, the Borsuk-Ulam function is defined to be the total ``$+$'' (positive) measure on $[0,1]^2$ induced by $\vec{z}$, and we denote $f_{i}(\vec{z}) = \mu_{i}(\rplus;\vec{z})$ for $i \in [n]$. The total positive measure $\mu_{i}(\rplus;\vec{z})$ is a continuous function of its variables: by our definition of slices, cuts, and labelling, the area of \rplus is the sum of individual areas $| z_j |$ for those $j \in \{ 0, 1, 2, \dots, d \}$ for which $z_j > 0$ (see non-shaded area of \cref{fig: BU-incl_1}); the slices $y_j$ and the cuts $x_j$ are continuous functions of the $z_j$'s;\footnote{It is true that, e.g., for some fixed $z_{j+1} > 0$ and some moving $z_j \to 0^-$ we have $x_j \to 0^+$, while when $z_j = 0$, immediately $x_j = 1$. However, this is considered a continuous behaviour of $x_j$ since it is allowed to wrap around in the horizontal dimension. Also, notice that this has no effect on the sign of any other individual area, creating no discontinuities. Moreover, the only case where the cut of a strip can arbitrarily take values in $(0,1)$ is when for its components $z_j , z_{j+1}$ we have $z_j = z_{j+1} = 0$, in which case the strip's width is also 0, and so it does not contribute to $\mu(\rplus;\vec{z})$.} in each such individual area, the contained measure $j \in \{ 0, 1, \dots, d \}$ changes continuously with the boundaries of the area; also, one can easily check that in order for an individual area corresponding to $z_j$ to change sign, $z_j$ will have to pass from 0, and the magnitude $|z_j|$ of the area is a continuous function of $z_j$.

    When $d$ is odd, the decoding of $\vec{z} = (z_0 , z_1, \dots, z_d)$ into a \scut-path is similar. The horizontal slices are $y_1, y_3, \dots, y_d$, and we set $y_0 := 0$ and $y_{d+2} := 1$. These define strips similarly to the even $d$ case. The vertical cuts are $x_1, x_3, \dots, x_{d-2}$, meaning that the bottom strip $[y_0, y_1]$ \emph{and} the top strip $[y_d, y_{d+2}]$ are not vertically cut. 
    
    Also, it is easy to see that one could consider the path to be again $y$-monotone but in the opposite direction, meaning that there is no line segment pointing upwards.
	
    Given $n$ measures, if $d = n$ the Borsuk-Ulam theorem applies on $f$, ensuring that there exist two antipodal points $\vec{z}^*, -\vec{z}^* \in S^{n}$ such that $f(\vec{z}^*) = f(-\vec{z}^*)$. Notice that for any $i \in [n]$, $f_{i}(-\vec{z}) = \mu_{i}(\rminus;\vec{z})$, since by flipping the signs of the variables of $\vec{z}$, we consider the ``$-$'' (negative) measure of $[0,1]^2$ induced by (the \scut-path of) $\vec{z}$. Therefore, when $f(\vec{z}^*) = f(-\vec{z}^*)$ we will have $\mu_{i}(\rplus;\vec{z}^*) = \mu_{i}(\rminus;\vec{z}^*)$ for every $i \in [n]$, that is, in each of the $n$ measures, the positive total measure equals the negative one. Therefore, the \scut-path corresponding to $\vec{z}^*$ (see \cref{alg: cuts_to_paths}) is a solution to \scut-\pizza. The total number of turns of the directed path is $d-1 = n-1$.
    \end{proof}

    \begin{remark}
        Note that a symmetric proof exists, where the slices are vertical instead of horizontal, and the cuts within the strips are horizontal instead of vertical. The analysis is similar to the one we give here, and it guarantees the existence of a \scut-path which is allowed to wrap around in the vertical dimension, it bisects all $n$ measures and is $x$-monotone with either no line segment heading left or no line segment heading right.
    \end{remark}
	
	\myparagraph{\ppa/ membership.} 
	The following theorem shows \ppa/ membership of \eps-\scut-\pizza via a reduction to the \epsborsuk problem which is in \ppa/. The latter problem was introduced and shown to be in \ppa/ by \cite{Papadimitriou94-TFNP-subclasses} with its definition involving essentially a polynomial-time algorithm for the computation of the Borsuk-Ulam function. In \cite{DFMS21}, the definition of the problem uses an algebraic circuit as the representation of that function. The \ppa/ membership is shown via a reduction to \tucker (see \cite{Papadimitriou94-TFNP-subclasses}), and for both versions of the \epsborsuk problem the reduction goes through. Here we state the most inclusive version of the problem.
	
	\begin{mdframed}[backgroundcolor=white!90!gray,
		leftmargin=\dimexpr\leftmargin-20pt\relax,
		innerleftmargin=4pt,
		innertopmargin=4pt,
		skipabove=5pt,skipbelow=5pt]
		\begin{definition}[\epsborsuk]\label{def: eps-Borsuk-Ulam_prob} \
			\begin{itemize}
				\item \textbf{Input:} 
				$\eps > 0$, and a continuous function $f: S^d \to \reals^d$ whose Lipschitz constant is claimed to be $\lambda$. The function can be presented as either an algebraic circuit or a polynomial-time algorithm.
				\item \textbf{Task:} Find one of the following. 
				\begin{enumerate}
                        \item[(a)] Two points $x, y \in S^d$ such that $\|f(x) - f(y)\|_\infty > \lambda
					\cdot \|x - y\|_\infty$. 
					\item[(b)] A point $x \in S^d$ such that $\|f(x) - f(-x)\|_\infty \le \eps$.
				\end{enumerate}
			\end{itemize}
		\end{definition}
	\end{mdframed} \vspace{5pt}
	If the first task is accomplished, then we have found witnesses $x, y \in S^d$ that function $f$ is not $\lambda$-Lipschitz continuous in the $L_{\infty}$-norm as required.\footnote{The $\lambda$-Lipschitzness of $f$ is necessary for the correctness of the reduction from \epsborsuk to \tucker in \cite{DFMS21}, where the latter problem is known to be in \ppa/. In particular, the reduction triangulates the $S^d$ sphere such that the triangulation's vertices have distance at most $O(\eps/\lambda)$, therefore, if $f$ is not Lipschitz continuous (i.e., $\lambda$ is unbounded), the solutions of \tucker are not guaranteed to correspond to \epsborsuk solutions.} But if the second task is accomplished then we have an approximate solution to the Borsuk-Ulam problem.
	
	\begin{theorem} \label{thm: eps-sc-pizza-in_ppa}
		\eps-\scut-\pizza for weighted polygons with holes is in \ppa/.
	\end{theorem}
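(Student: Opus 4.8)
The plan is to reduce \eps-\scut-\pizza to \epsborsuk, which is in \ppa by \citep{DFMS21}. The main ingredient is already available: the proof of Theorem~\ref{thm: BU-inclusion} constructs, from a set of weighted polygons with holes $\mu_1,\dots,\mu_n$, a polynomial-size arithmetic circuit computing a continuous function $f:\reals^{n+1}\to\reals^{n}$, together with the embedding of $y$-monotone \scut-paths (with at most $n-1$ turns) as points of $S^{n}$, such that $f(\vv{P})_i$ equals the weighted mass of $\mu_i$ lying on side A of the cut encoded by $\vv{P}$. The key structural fact, already used in Theorem~\ref{thm: sc-pizza_exists}, is that $-\vv{P}$ encodes the \emph{same} geometric path with sides A and B interchanged, so $f(-\vv{P})_i$ equals the weighted mass of $\mu_i$ on side B. Hence for every $\vv{P}\in S^{n}$ and every $i$ we have $|f(\vv{P})_i - f(-\vv{P})_i| = |\mu_i(\rplus) - \mu_i(\rminus)|$, where $\rplus,\rminus$ are the regions of the decoded path.

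Given this, I would build the \epsborsuk instance $(f,\eps,\lambda)$ where $f$ is the circuit above, the approximation parameter is the same \eps as in the input, and $\lambda$ is an explicitly computed Lipschitz constant for $f$ in the $L_\infty$ norm. Every point of $S^{n}$ is associated by the embedding with some $y$-monotone \scut-path with at most $n-1$ turns, and this association is invertible in polynomial time (one reads off the values $y_{i+1}=y_i+|z_i|$, the $x$-values $|x_i|$, and the strip signs $\sign(z_i)$ from the coordinates). So a type-1 output of \epsborsuk, i.e.\ a point $x\in S^{n}$ with $\|f(x)-f(-x)\|_\infty\le\eps$, decodes in polynomial time to an \scut-path that $\eps$-bisects every $\mu_i$, by the identity above. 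A type-2 output, i.e.\ points $x,y$ with $\|f(x)-f(y)\|_\infty>\lambda\|x-y\|_\infty$, would contradict the Lipschitz bound and therefore cannot be a valid witness, so the oracle must return a type-1 solution. Thus the reduction always yields a genuine \scut-\pizza solution, establishing membership in \ppa.

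The step that needs real work, and the main obstacle, is proving that $f$ is $\lambda$-Lipschitz for a $\lambda$ that is computable in polynomial time and has a polynomial-size representation. I would argue this in two layers. First, the map from a point of $S^{n}$ to the path parameters (the $y$-coordinates, the $x$-coordinates, and the strip signs), together with the rescaling used to land the variables in $[-1,1]$, is piecewise affine with an easily bounded Lipschitz constant; the only possible blow-up is the rescaling factor, which is controlled by the total height of the path, a quantity bounded by the input. Second, moving a single vertical segment or horizontal boundary of the path by $\delta$ changes the weighted area of any $\mu_i$ on side A by at most $\delta$ times a geometric quantity (a strip height or a segment length) times the maximum density $c_{\max}$ occurring in $\mu_i$; summing over the $O(n)$ turning points gives the overall bound. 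Since all coordinates and densities in the input are rationals of polynomial bit-length, the resulting $\lambda$ is at most $2^{\poly}$, which has a polynomial-size representation and is a legal input to \epsborsuk; normalising the $\mu_i$ so that $\mu_i([0,1]^2)=1$ makes $\lambda$ outright polynomial if one prefers. The continuity subtlety — that $f$ is genuinely continuous despite the sign flips, because $\sign(z_i)$ can change only when the $i$-th strip has width zero — is exactly the continuity already established in Theorem~\ref{thm: sc-pizza_exists}, so nothing new is required there.
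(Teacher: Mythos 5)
Your proposal is correct and follows essentially the same route as the paper: construct the \borsuk function via the circuit from Theorem~\ref{thm: BU-inclusion}, reduce to \epsborsuk, and argue that $f$ is Lipschitz so that the violation branch of \epsborsuk can never be returned. Your two-layer derivation of the Lipschitz bound is somewhat more explicit than the paper's one-line observation that the partial derivatives of the degree-$\leq 2$ polynomial pieces are bounded by the input rationals, but the underlying argument and the overall reduction are the same.
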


    \begin{proof}
    We will turn our existence proof of \cref{thm: sc-pizza_exists} into a polynomial time reduction from \eps-\scut-\pizza to \epsborsuk in which the Borsuk-Ulam function is computable via a polynomial-time algorithm. Given the \eps-\scut-\pizza instance with $n$ mass distributions, we just have to construct the Borsuk-Ulam function $f: S^n \to \reals^n$ using the procedure described in \cref{app: exact-computation-measures}, for $d = n$. The entire procedure involving the preprocessing part and the construction of $f$ is a polynomial-time algorithm. 
    
    Furthermore, the function captures the \rplus part of each involved colour $i \in [n]$ by creating a \scut-path as described in the proof of \cref{thm: sc-pizza_exists}, where we showed that $f$ is continuous. Also, it is easy to verify from the final step of the construction in \cref{app: BU-func_comp} that $f$ is piece-wise polynomial with respect to $\vec{z}$, and therefore it is $\lambda$-Lipschitz continuous for $\lambda = \max_{j=0}^{n}\left\{ \sup_{\vec{z}} \left\|\frac{\partial f(\vec{z})}{\partial z_j}\right\|_{\infty}\right\}$ (and note that points where $f_{i}(\vec{z})$, $i \in [n]$ is non-differentiable do not matter for Lipschitzness). By the construction of $f$ (\cref{app: BU-func_comp}), one can see that $\lambda$ is constant: the polynomial pieces of $f$ are of degree at most 2, and the partial derivative of each $f_{i}$ is determined by the rational points (given in the input) that define the polygons. 
    
    So far we have showed how to formulate any given instance of \eps-\scut-\pizza as an \epsborsuk instance in polynomial time. What remains is to show how to turn a solution $\vec{z}^*$ of the latter to a solution of the former again in polynomial time. As we showed in the proof of \cref{thm: sc-pizza_exists}, any $\vec{z} \in S^n$ can be efficiently translated into a \scut-path using \cref{alg: cuts_to_paths}. The \scut-path corresponding to $\vec{z}^*$ (for which we have $\| f(\vec{z}^*) - f(-\vec{z}^*) \|_{\infty} \leq \eps$) is the solution to \eps-\scut-\pizza. To see this, notice that from the aforementioned proof we have $f_{i}(\vec{z}^*) = \mu_{i}(\rplus; \vec{z}^*)$ and $f_{i}(-\vec{z}^*) = \mu_{i}(\rminus; \vec{z}^*)$, therefore, $\| \mu_{i}(\rplus; \vec{z}^*) - \mu_{i}(\rminus; \vec{z}^*) \|_{\infty} \leq \eps$. Finally, since $\vec{z}^* \in S^n$, the \scut-path has at most $n-1$ turns as required by an \eps-\scut-\pizza solution.
    \end{proof}

    \myparagraph{\np/ membership.} 
    Here we show that for any $k, n \in \mathbb{N}$, deciding whether there exists a solution for \eps-\scut-\pizza with $n$ measures and at most $k$ turns is in \np/. Notice that, for any such instance, we can verify a candidate solution in polynomial time. In particular, suppose we are given a \scut-path with at most $k$ turns that splits $[0,1]^2$ into \rplus and \rminus regions. Let the path be represented by the starting and ending points of its line segments, and the regions be defined by labels to the left and right of each vertical segment. Each measure $i \in [n]$ consists of a set of polygons with holes which can be preprocessed in polynomial time as described in \cref{app: axis-aligned-tr} to result in only axis-aligned right-angled triangles. Then, using the \scut-path, the measures $\mu_i(\rplus)$ and $\mu_i(\rminus)$ (in a similar way) can be computed in polynomial time as described in \cref{app: BU-func_comp}, where now $d = k+1$. Finally, for the given \eps, we can check whether $\left| \mu_i(\rplus) - \mu_i(\rminus) \right| \leq \eps$ is true for all $i \in [n]$. Therefore, we get the following.
    \begin{theorem} \label{thm: NP-incl}
        Deciding whether there exists a \scut-path with at most $k \in \naturals$ turns that is a solution of \eps-\scut-\pizza with $n \in \naturals$ mass distributions is in \np/.
    \end{theorem}

        \subsection{Membership results for discrete \scut-\pizza} \label{sec: discrete-pizza_containment}

        It has already been shown in \cite{S21} that \eps-\dspizza is in \ppa/ even for $\eps = 0$. We complete the picture regarding inclusion of discrete pizza sharing problems, by showing that \eps-\dpizza is also in \ppa/ for $\eps = 0$, and therefore, for every $\eps \in [0,1]$.
        We will reduce \eps-\dpizza to $\eps'$-\scut-\pizza for $\eps = 0$ and $\eps' = 1/2N$, where $N$ is the input size. Finally, as one would expect from the discrete version, its decision variant is in \np/ since its candidate solutions are verifiable in polynomial time.

        \myparagraph{\ppa/ membership.}
        Consider an instance of \eps-\dpizza with point sets $P_1 , \dots, P_n$, and denote $P := P_1 \cup \dots \cup P_n$. We intend to turn each point into a mass of non-zero area. To do so, we need to first scale down the landscape of the points, to create some excess free space as a ``frame'' around them. It suffices to scale down by an order of 3 and center it in the middle of $[0,1]^2$, that is, to map each point $(x,y)$ to $\left(\ \frac{1}{3} + \frac{x}{3}, \frac{1}{3} + \frac{y}{3} \right)$. Now all our points are in $[1/3, 2/3]^2$. For convenience, for each $i \in [n]$, we will be still denoting by $P_i$ the new set of points after scaling and centering.
        
        Now, we check whether for any pair $i \neq j$ we have $P_{i,j} := P_i \cap P_j \neq \emptyset$, which means that two points of two point sets have identical positions. Consider all $P_{i,j} \neq \emptyset$ and let their union be $P'$. Now consider all points that do not belong in $P'$, that is $R := P \setminus P'$. We want to find the minimum positive difference in the $x$- and $y$-coordinates between any pair of points in $R$. Let a point of $R$ be denoted $p_t = (x_t , y_t)$, and let us denote 
        \begin{align*}
            x_{\min} := \min_{\substack{p_a, p_b \in R\\ x_a \neq x_b}} |x_a - x_b| , \qquad \text{and by} \qquad y_{\min} := \min_{\substack{p_a, p_b \in R\\ y_a \neq y_b}} |y_a - y_b|,
        \end{align*}
        and finally, $d := \min \{ x_{\min}, y_{\min} \}$.

        We now turn each point of $P_i$ into an axis-aligned square of size $\frac{d}{3} \times \frac{d}{3}$, with its bottom-left corner having the point's coordinates. Notice that the total area of the squares is $|P_i| \cdot \frac{d^2}{9}$, therefore, by setting the weight of each square to $\frac{9}{|P_i| d^2}$ we have the full description of a mass distribution $\mu_i$. Notice that, since $d \leq 1/3$, all of the mass distributions are in $[1/3 - 1/3 \cdot 3, 2/3 + 1/3 \cdot 3]^2 = [2/9, 7/9]^2$.
        \begin{lemma}\label{lem: discr-to-cont-sc-pizza}
            Any \scut-path that is a solution to the resulting \eps-\scut-\pizza instance for $\eps = 1/2N$, can be turned into a solution of $\eps'$-\dpizza for $\eps' = 0$ in polynomial time.
        \end{lemma}

        \begin{proof}
            If a horizontal (resp. vertical) line segment of the \scut-path solution intersects two squares (of any two mass distributions), this means that their corresponding points in \dpizza had the same $y$- (resp. $x$-) coordinate. To see this, without loss of generality, suppose that the two squares are intersected by the same horizontal line segment, and that their corresponding points do not have the same $y$-coordinate. Then, the distance between their bottom-left corners is positive but no greater than $d/3$, which implies that $d \leq d/3$ (by definition of $d$), a contradiction. A symmetric argument holds for two squares that are intersected by the same vertical line segment. In both the above cases, we return as a solution to $0$-\dpizza the two corresponding points of the squares, which are of the kind of ``Output (a)'' in \cref{def:discrete-sc-pizza}.

            Let $P_{\max} := \max_{i \in [n]} |P_i|$. 
            Suppose $|P_i|$ is odd for every $i \in [n]$. Then, since we are asking for an $1/2N$-\scut-\pizza solution, its \scut-path cannot be non-intersecting with any of the squares, otherwise $|\mu_i(\rplus) - \mu_i(\rplus)| \geq 1/|P_i| \geq 1/P_{\max} > 1/2P_{\max} \geq 1/2N$, a contradiction. Therefore, at least one square of $P_i$ is intersected, and this holds for every $i \in [n]$. 
            If no line segment of the \scut-path intersects two squares, we conclude that the \scut-path with $n-1$ turns and $n$ line segments will intersect at most $n$ squares. But since, as discussed above, at least one square of each mass distribution has to be intersected by \scut, we get that \scut intersects exactly one square of each mass distribution. Each side, \rplus and \rminus of the \scut-path includes at least $\floor{|P_i| / 2}$ entire squares for every $i \in [n]$, and therefore, their bottom-left corners, i.e., the corresponding points of \dpizza. This is a solution to the $0$-\dpizza instance. 

            Now suppose $|P_i|$ is even for some $i \in [n]$. We can remove an arbitrary square from all mass distributions that come from point sets with even cardinality, and perform the aforementioned reduction to $1/2N$-\scut-\pizza. Then, let us call ``$i$-th segment'' the one that intersects one square of $P_i$, called $i$-th square, and let it be a vertical segment, without loss of generality. By placing back the removed square, its bottom-left corner will be: (i) either on opposite sides with that of the $i$-th square, (ii) or on the same side (notice that due to the allowed discrepancy $1/2N < 1/2P_{\max}$, the $i$-th segment cannot fall on the bottom-left corner of the $i$-th square). Then, in case (i) each side contains exactly $|P_i|/2$ bottom-left corners of squares (i.e., points). By scaling up the positions of \scut-path's segments (recall that we have scaled down), this is a solution to the $0$-\dpizza instance. In case (ii), suppose without loss of generality that both the inserted square and the bottom-left corner of the $i$-th square are on the left side of the $i$-th segment. We modify the \scut-path by shifting the $i$-th segment to the left such that it is now located $d/3$ to the left of $i$-th square's bottom-left corner. Notice that this position is to the right of the inserted square since there is at least $2d/3$ distance between the two squares. We do the same for every $i \in [n]$ has even number of points/squares. Then, each side of the \scut-path for every $i \in [n]$ has exactly $|P_i|/2$ bottom-left corners of squares which represent the initial points. After scaling up the positions of the \scut-path's segments, this is a solution to $0$-\dpizza.

            Finally, it is clear that the aforementioned operations can be performed in $\poly(N)$ time.
        \end{proof}

        By the \ppa/ membership of $1/2N$-\scut-\pizza (see \cref{thm: eps-sc-pizza-in_ppa}), we get the following.
        \begin{theorem}\label{thm: discr-sc-pizza-ppa-contain}
            \eps-\dpizza is in \ppa/ for any $\eps \in [0,1]$.
        \end{theorem}

        \myparagraph{\np/ membership.}
        It is also easy to see that, by checking whether each of the points of each $P_i$ is in \rplus or \rminus as defined by a candidate \scut-path solution, we can decide in polynomial time if indeed it is a solution or not to $0$-\dpizza (since the points are polynomially many in $N$, by definition).
        \begin{theorem}\label{thm: discr-sc-pizza-np-contain}
            Deciding whether there exists a \scut-path solution to \eps-\dpizza is in \np/ for any $\eps \in [0,1]$.
        \end{theorem}

	\subsection{\etr/ membership for the decision variant of exact \scut-\pizza} \label{sec: exact_inclusion}

    Here we show that deciding whether there exists an exact \scut-\pizza solution (\scut-path) with at most $k$ turns and $n$ measures is in \etr/, for any $k, n \in \naturals$. To do so, we turn our decision problem into an ETR formula in polynomial time. As discussed in \cref{sec:preliminaries}, an ETR expression has the form: $\exists \vec{P} \in \reals^{m} \cdot \Phi$, where $\Phi$ is a Boolean formula using connectives $\{\land, \lor, \lnot\}$ over polynomials with domain $\reals^{m}$ for some $m \in \naturals$ compared with the operators $\{ <, \leq, =, \geq, > \}$. The ETR problem is to decide whether there is a truth assignment $\vec{P} \in \reals^{m}$ that satisfies $\Phi$. 
    
    We will use the proof of \cref{thm: sc-pizza_exists} and the explicit construction of the Borsuk-Ulam function from \cref{app: exact-computation-measures}. Recall that the aforementioned sections provide a polynomial time algorithm that takes the problem's input, i.e., $n$ sets of polygons with holes, and computes a Borsuk-Ulam function, $f : S^d \to \reals^n$, for some given $d \in \naturals^*$. This is done by first mapping any point $\vec{z} \in S^d$ to a \scut-path with at most $d-1$ turns, and then computing $f_i$, $i \in [n]$, which captures the $i$-th measure's intersection with the \rplus region (where the \rplus, \rminus regions have been determined by the \scut-path). In \cref{app: BU-func_comp} we showed how to explicitly construct $f$ in polynomial time and showed that it is piece-wise polynomial.

    Now notice that for every $i \in [n]$, $f_{i}(\vec{z}) = \mu_i(\rplus ; \vec{z})$, therefore, $f_{i}(\vec{z}) = f_{i}(-\vec{z})$ is equivalent to $\mu_i(\rplus ; \vec{z}) = \mu_i(\rplus ; -\vec{z}) = \mu_i(\rminus ; \vec{z})$, where the last equality comes by the definition of the \scut-path decoded from $\vec{z}$. In other words, $\vec{z} \in S^d$ is a solution to $f(\vec{z}) = f(-\vec{z})$ if and only if its corresponding \scut-path is a solution to $\mu(\rplus ; \vec{z}) = \mu(\rminus ; \vec{z})$. Let us set $d = k+1$. What remains is to turn the decision problem of whether such a $\vec{z}$ exists into an ETR formula.

    We start by replacing the domain $\vec{z} \in S^{k+1}$ with $\vec{Z} \in \reals^{k+2}$ and adding in the ETR formula the constraint $\sum_{j=0}^{k+1} |Z_j| = 1$. Then, we turn all the aforementioned polynomial time computations (that result to $f$) into ETR form. We can use a standardized method to do so in a generic manner by efficiently expressing in ETR form any computation belonging to \np/. In particular, it is clear that any such computation can be turned in polynomial time into a Boolean satisfiability (SAT) formula, and sequentially transform it into a CNF formula. To turn this formula into an ETR expression is easily done in the following way (e.g., see \cite{Basu2006}). Consider the CNF formula $B$ over $m \in \naturals$ Boolean variables $\{ x_1, \dots, x_m \} $. This can be turned in polynomial time into an equisatisfiable ETR formula: $\exists \vec{X} \in \reals^{m} \cdot \bigwedge_{i=1}^{m} ((X_i = 0) \lor (X_i = 1)) \land B'$, where $B'$ is constructed in the following way. For each $i \in [m]$, let $y_i \in \{ x_i, \lnot x_i \}$ be a literal in $B$. We turn all disjunctions $y_j \lor y_k \lor \dots \lor y_\ell$ of $B$ into the inequality $Y_j + Y_k + \dots + Y_\ell > 0$ in $B'$, where for each $i \in [m]$, $Y_i = X_i$ if $y_i = x_i$ and $Y_i = 1-X_i$ if $y_i = \lnot x_i$. Finally, the auxiliary variables $F_i$, $F'_i$, $i \in [n]$, contain the values of $f_{i}(\vec{z})$'s and $f_{i}(-\vec{z})$'s computed from $B'$. 
    
    The above induce the following ETR expression, which is true if and only if there is an exact solution (\scut-path) with at most $k$ turns for the \scut-\pizza problem with $n$ measures.
	\begin{align*}
		\exists (\vec{Z}; \vec{X}; \vec{F}) \in \reals^{k+2 + m + 2n} \cdot \left( \sum_{j=0}^{k+1} |Z_j| = 1 \right) \land \bigwedge_{i=1}^{m} \left( (X_i = 0) \lor (X_i = 1) \right) \land B' \land \left( \bigwedge_{i=1}^{n} F_{i} = F'_{i} \right).
	\end{align*}

    This gives us the following theorem.
	
    \begin{theorem} \label{thm: ETR-incl}
		Deciding whether there exists a \scut-path with at most $k \in \naturals$ turns that is an exact solution for \scut-\pizza with $n \in \naturals$ mass distributions is in \etr/.
    \end{theorem}

	\section{Conclusions}

        For \eps-\spizza we have shown that finding a solution is \ppa/-complete for any $\eps \in \left[ 1/N^c , 1/5 \right)$, where $N$ is the input size and $c > 0$ is a constant. This result holds for both its continuous (even when the input contains only axis-aligned squares) and its discrete version. We have also shown that the same result holds for \eps-\scut-\pizza, where the \ppa/ membership holds even for inverse exponential \eps. One open question that remains is ``Can we prove \ppa/ membership of \eps-\spizza for inverse exponential \eps?''. For the decision variant of both these problems, we show that there exists a small constant \eps such that they are \np/-complete. For both these problems and their search/decision variants, a big open question is ``What is the largest constant \eps for which the problem remains \ppa/-hard and \np/-hard, respectively?''. The most interesting open question is ``Are there any algorithms that guarantee a solution in polynomial time for some constant $\eps \in [1/5, 1)$, even when slightly more lines (resp. turns in a \scut-path) are allowed?''.

        We have also shown that exact \scut-\pizza is \fixp/-hard. The interesting question that needs to be settled are ``Is the problem in the class \bu/ (defined in \cite{DFMS21}) similarly to exact \ch?'', and ``For what complexity class is the problem complete?''. Schnider in \cite{S21} showed that exact \spizza is \fixp/-hard for a more general type of input than ours. So, some natural questions are ``When the input consists of weighted polygons with holes, is the problem \fixp/-hard and/or inside \bu/?'', and ``Is it complete for any of the two classes?''. For a strong approximation version of \ch, \cite{BHH21} showed that the problem is $\bu/_{a}$-complete. We conjecture that the same holds for the two pizza sharing problems studied here.

        In the \scut-\pizza problem, a path is allowed to wrap around on either the $x$-axis or the $y$-axis, suggesting a cylindrical shape of the underlying space. It would be intriguing to study the case of a torus or even that of a plane.
        Another couple of questions that remain open are: ``What is the complexity of \eps-\spizza and \eps-\scut-\pizza when every mass distribution consists of a constant number of non-overlapping rectangles?'', and finally, ``What is the complexity of the pizza sharing problems when we ask to fairly split the plane into more than two equal parts?''.

	\appendix

	\section{Proof of Lemma \ref{lem: cont-to-discr}}
	\label{app: cont-to-discr}
	
	\myparagraph{Pixelation.}
	We will start with the task of pixelating the mass distributions. Consider the input of an \eps-\scut-\pizza or an \eps-\spizza instance, that is, $q \in \{n, 2n\}$ mass distributions, respectively, on $[0,1]^2$ consisting of weighted polygons with holes (see \cref{sec:preliminaries} for details on the input representation). Let the instance's input size be $N \geq 2n$ (by definition). As a first step, we perform a \emph{pixelation} procedure: each polygon will be turned into a union of smaller squares that will have approximately the same total area as the polygon.
	
	As shown in \cref{app: axis-aligned-tr}, it is possible to decompose a polygon into a union of disjoint non-obtuse triangles (\cref{prop: non-obtuse_triangle_decomp}). Each of those triangles' area is rational since it can be computed by adding and subtracting the areas of five right-angled triangles with rational coordinates, which additionally, are axis-aligned (\cref{prop: ax-al_ri-tr_decomp}). Furthermore, the cardinality of the non-obtuse triangles is a polynomial function in the input size of the polygon's description, that is, the coordinates of the points that define its corners and the value that defines its weight. Therefore, the exact area of any mass distribution can be computed in polynomial time.
	
	As we have discussed earlier, in order for our approximation parameter $\eps \in [0,1]$ to make sense, we consider normalized mass distributions, meaning that $\mu_i \left( [0,1]^2 \right) = 1$ for all $i \in [q]$. Notice that it can be the case that some mass distributions have total area constant (in which case their weight is constant), while some others might have total area exponentially small (and therefore exponentially large weight). Therefore, in our analysis, we make sure that the ``resolution'' we provide to any polygon $F$ is \emph{relative} to its actual area $\area(F)$ rather than its measure $\mu_i(F)$.

    \begin{figure}[htbp]
        \begin{subfigure}[b]{0.47\textwidth}
            \centering
            \includegraphics[width=0.7\linewidth]{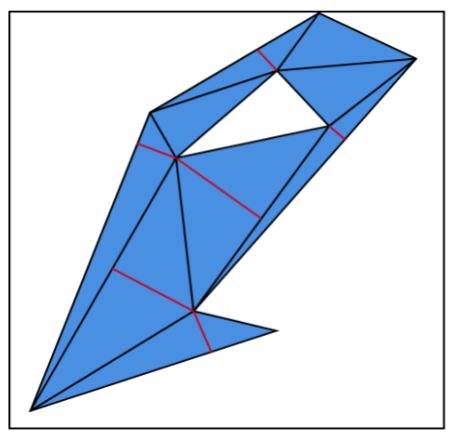}
            \caption{The triangulation with only non-obtuse triangles. After the standard triangulation, extra line segments (in red colour) are added to ensure non-obtuseness.} \label{fig:combined-a}
        \end{subfigure}%
        \hspace*{\fill}   
        \begin{subfigure}[b]{0.47\textwidth}
            \centering
            \includegraphics[width=0.68\linewidth]{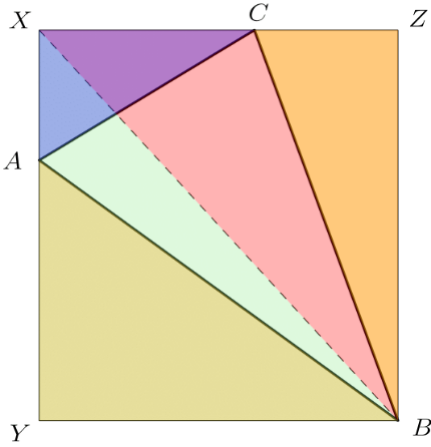}
            \caption{A non-obtuse triangle $\trngl{ABC}$ and its decomposition into axis-aligned right-angled triangles: $\area(\trngl{ABC}) = \area(\trngl{XYB}) + \area(\trngl{XBZ}) - \area(\trngl{AYB}) - \area(\trngl{XAC}) - \area(\trngl{CBZ})$.} \label{fig:combined-b}
        \end{subfigure}%
        \caption{Decomposing a polygon into a set of axis-aligned right-angled triangles.}
        \label{fig:combined}
    \end{figure}
	
	Consider some polygon $F$ of mass distribution $\mu_i$ for $i \in [q]$ and let it be triangulated into non-obtuse triangles, all with strictly positive area. We will focus on  one of $F$'s non-obtuse triangles, $\trngl{ABC}$ (see \cref{fig:combined-b}) with area $S := \area \left( \trngl{ABC} \right) > 0$ and perimeter $T > 0$. Notice that since $\trngl{ABC}$ is in $[0,1]^2$, we have $S \leq 1$ and $T \leq 3 \cdot \sqrt{2} < 5$.
	Suppose that among all triangles of all $\mu_i$'s, the minimum area triangle has area $\alpha$. The first step is to pixelate $\trngl{ABC}$. Let our \emph{pixels} be thought of as squares of size $t \times t$ for $t := \frac{1}{\ceil{15N^{1+c}/\alpha}}$, where $c > 0$ is any fixed constant. In other words, consider an axis-aligned square grid on $[0,1]^2$ with edge length $t$, where the closed region defined by four edges is called a pixel (see \cref{fig:pixels-1} for a depiction). 
	We create a pixel for $\trngl{ABC}$ if and only if the pixel's intersection with $\trngl{ABC}$ has strictly positive area. Then, the pixelated version of the triangle, denoted $\trngl{ABC}_p$, has area $S + S'$, where $S'$ is the excess area induced by the pixels intersected by the three sides of the triangle. By definition, $S'$ is at least $0$, and at most the area of pixels that intersect the three sides of $\trngl{ABC}$. Therefore, by denoting the number of such pixels for each side by $n_{AB}, n_{BC}, n_{CA}$ and referring to \cref{fig:combined-b}, we have $n_{AB} \leq \ceil{\frac{AY}{t}} + 1 + \ceil{\frac{YB}{t}} + 1 \leq \frac{AY}{t} + \frac{YB}{t} + 4$, and similarly for $n_{BC}, n_{CA}$. Now also notice that $t \leq \frac{\alpha}{15N^{1+c}} \leq \frac{S}{15} < \frac{S}{3 T}$, where the second inequality comes from the fact that $S \geq a$ by definition, and the third one is due to $T < 5$ as argued above. Then, we also have to use the known formula that connects $S$ and $T$, namely,
	\begin{align*}
		S = \sqrt{\frac{T}{2} \left(\frac{T}{2}-AB \right) \left(\frac{T}{2}-BC \right) \left(\frac{T}{2}-CA \right)} < \sqrt{\left(\frac{T}{2}\right)^4} = \frac{T^2}{4},
	\end{align*}
	which implies $\frac{S}{T} < \frac{T}{4}$, and therefore $t < \frac{T}{12}$.
 
    Putting everything together, we get 
	\begin{align}\label{eq: extra-pix}
		S' &\leq t^2 \cdot \left( n_{AB} + n_{BC} + n_{CA} \right) \nonumber  \\ 
        &\leq t^2 \cdot \left( \frac{AY}{t} + \frac{YB}{t} + \frac{BZ}{t} + \frac{ZC}{t} + \frac{CX}{t} + \frac{XA}{t} + 12 \right) \nonumber  \\
		&\leq t \cdot \left( (AY + YB) + (BZ + ZC) + (CX + XA) + 12t \right) \nonumber  \\
		&\leq  t \cdot \left(2 \cdot AB + 2 \cdot BC + 2 \cdot CA + 12t \right) \qquad \text{(since $AB, BC, CA$ are hypotenuses)} \nonumber   \\
		&=t \cdot \left(2 \cdot T + 12t \right) \nonumber \\
		&< 3 \cdot T \cdot t   \qquad \text{(since $t < \frac{T}{12}$)} \nonumber   \\
		&< \frac{\alpha}{N^{1+c}} \qquad \text{(since $T < 5$)} .
	\end{align}

	We want to bound the proportion of excess area due to the pixelation compared to the triangle's actual area, that is, $S'/S$. We have 
	\begin{align*}
		\frac{S'}{S} < \frac{\alpha}{S \cdot N^{1+c}} , 
	\end{align*}
	which, together with he fact that $S \geq \alpha$ (by definition of $\alpha$), gives the following.
	\begin{claim}\label{clm: pix-inflation}
		The pixelation of $\trngl{ABC}$ results in $\trngl{ABC}_p$, where $\area \left( \trngl{ABC}_p \right) < \left( 1+\frac{\alpha}{S \cdot N^{1+c}} \right) \cdot \area \left(\trngl{ABC} \right)$. In particular, $\area \left( \trngl{ABC}_p \right) < \left( 1+\frac{1}{N^{1+c}} \right) \cdot S $.
        
	\end{claim}
	
	Consider a straight line $\ell$ that cuts $\trngl{ABC}_p$, splitting it into two shapes $L_p$ and $R_p$ with areas $\area(L_p)$ and $\area(R_p)$, respectively. Let the corresponding two shapes that $\ell$ creates when intersecting $\trngl{ABC}$ be $L$ and $R$, with areas $\area(L)$ and $\area(R)$, respectively. Also, let $M$ be the set of pixels of $\trngl{ABC}_p$ that are intersected by $\ell$, and $M_L, M_R$ be a partition of $M$. When clear from the context, we will slightly abuse the notation by denoting $D$ the set of points in the union of squares on $[0,1]^2$ corresponding to pixel set $D$.
	
	\begin{claim}\label{clm: pix-intersection}
		The total area of $M$ is at most $\alpha/5N^{1+c}$.
	\end{claim}
	
	\begin{proof}    
		We start from the easy observation that the length of $\ell \cap \trngl{ABC}_p$ is at most $\sqrt{2} < 2$ since we are in $[0,1]^2$. Therefore, the number of pixels that $\ell$ intersects is at most $\frac{2}{t}$ resulting to their area being at most $t^2 \cdot \frac{2}{t} = 2t < \frac{\alpha}{5N^{1+c}}$.
	\end{proof}
	
	\begin{claim}\label{clm: pix-gap}
		For any disjoint $M_L, M_R$ with $M_L \cup M_R = M$, we have $| (\area(L) - \area(R)) - (\area(L_p \cup M_L) - \area(R_p \cup M_R) )| \leq  2\alpha/N^{1+c}$.
	\end{claim}
	
	\begin{proof}
		It suffices to show that $0 \leq \area(L_p \cup M_L) - \area (L) \leq 2\alpha/N^{1+c}$. The first inequality is easy to see, since $L \subseteq L_p \cup M_L$, which implies $\area (L) \leq \area (L_p \cup M_L)$. For the second inequality, we have
		\begin{align*}
			\area (L) &> \frac{\area(L_p)}{1+\alpha/SN^{1+c}} \qquad \text{(by \cref{clm: pix-inflation})}  \\
			&= \area(L_p) \cdot \left(1 - \frac{\alpha}{SN^{1+c} + \alpha} \right) \\
			&\geq \left( \area(L_p \cup M_L) - \area(M) \right) \cdot \left(1 - \frac{\alpha}{SN^{1+c}} \right) \\
			&\geq \area(L_p \cup M_L) - \frac{\alpha}{5N^{1+c}} - \frac{\alpha}{SN^{1+c}} \cdot \left( \area(L_p \cup M_L) - \frac{\alpha}{5N^{1+c}} \right) \qquad \text{(by \cref{clm: pix-intersection})} \\
			&\geq \area(L_p \cup M_L) - \frac{\alpha}{5N^{1+c}} - \frac{\alpha}{SN^{1+c}} \cdot \area \left( \trngl{ABC}_p \right)  \\
                &> \area(L_p \cup M_L) - \frac{\alpha}{5N^{1+c}} - \frac{\alpha}{SN^{1+c}} \cdot S \left(1 + \frac{1}{N^{1+c}}\right) \qquad \text{(by \cref{clm: pix-inflation})} \\
                &\geq \area(L_p \cup M_L) - \frac{\alpha}{5N^{1+c}} - \frac{3}{2} \frac{\alpha}{N^{1+c}}  \qquad \text{(since $N \geq 2$)} \\
			&\geq \area(L_p \cup M_L) - \frac{2\alpha}{N^{1+c}}.
		\end{align*}

		Similarly, $0 \leq \area(R_p \cup M_R) - \area (R) \leq 2\alpha/N^{1+c}$, or equivalently, $-2\alpha/N^{1+c} \leq - ( \area(R_p \cup M_R) - \area (R)) \leq 0$. Therefore,
		\begin{align*}
			-2\alpha/N^{1+c} \leq (\area(L_p \cup M_L) - \area (L)) - ( \area(R_p \cup M_R) - \area (R)) \leq 2\alpha/N^{1+c}
		\end{align*}
	\end{proof}
	
	Recall that, in an \eps-\spizza solution, at most $2n$ lines can intersect $\trngl{ABC}$ and $\trngl{ABC}_p$ (even though the standardized version of the problem requires at most $n$ straight lines, as we showed in \cref{thm:straight-pizza-PPA}, \ppa/-hardness holds even for at most $n+n^{1-\delta}$ lines for any constant $\delta \in (0,1]$). Similarly for an \eps-\scut-\pizza solution, since its \scut-path comprises of at most $n-1$ turns, i.e., $n$ straight line segments (and again, by \cref{thm:hvu-pizza-ppa-h} \ppa/-hardness holds even for at most $n+n^{1-\delta}$ line segments for any constant $\delta \in (0,1]$) By inductively applying \cref{clm: pix-intersection} and \cref{clm: pix-gap} $N$ times (recall that $N \geq 2n$), we get the following.
	\begin{lemma}\label{lem: pix-n-gap}
		Let at most $2n$ straight lines intersect $\trngl{ABC}_p$, and the side of each pixel be $\alpha/15N^{1+c}$ for any $c > 0$. Also, let $M$ be the set of its pixels that are intersected by the lines, and $M_L, M_R$ be an arbitrary partition of $M$. Then, $\area(M) \leq \alpha/5N^c$, and furthermore, $|(\area(L) - \area(R)) - (\area(L_p \cup M_L) - \area(R_p \cup M_R) )| \leq  2\alpha/N^c$.
	\end{lemma}
	
	\myparagraph{Turning pixels into points in general position with unique $x$- and $y$-coordinates.}
	So far, for \eps-\scut-\pizza and \eps-\spizza, with $q \in \{n, 2n\}$ mass distributions, respectively, we have described how to turn each distribution $\mu_i$, $i \in [q]$ into a pixelated version of it. 
	We will now turn each of those pixels into a set of points. Let $\mu_i$ consist of $b$ many polygons, and recall that each polygon has its own weight $w_{i,j}>0$, $j \in [b]$.
	Let $T^{i,j} \in \mathcal{F}_i$ be a non-obtuse triangle belonging to the $j$-th polygon of $\mu_i$, and $\mathcal{F}_i$ be the set of these triangles composing $\mu_i$. Suppose a pixel contains triangles (whose area is strictly positive) \{$T^{i,k} \}_{k \in D}$ for some $D \subseteq [b]$, and let us denote $w^{i}_{\max} := \max_{k \in D} w_{i,k}$. Observe that, due to our assumption that the mass distributions are normalised, we have $1 = \sum_{T^{i,j} \in \mathcal{F}_i} w_{i,j} \cdot \area(T^{i,j}) \geq \sum_{T^{i,j} \in \mathcal{F}_i} w_{i,j} \cdot \alpha$, therefore, 
	\begin{align}\label{eq: up-bound-w}
		w^{i}_{\max} \leq \sum_{T^{i,j} \in \mathcal{F}_i} w_{i,j} \leq 1/\alpha .
	\end{align}
	We will place $\ceil{w^{i}_{\max} \cdot N^c}$ points at the pixel's bottom-left corner, that is, all having the same position. Each of the pixels of mass distribution $i$ has no weight, as desired, and they form a set $P_i$. Notice that each of the $P_i$'s we created contains at most $\frac{2N^c}{t^2 \alpha} \leq \frac{512N^{2+3c}}{\alpha^3}$ points, i.e., polynomially many in the instance's description size and $1/\alpha$. That is because its pixels can be at most $\frac{1}{t} \cdot \frac{1}{t} \leq \left( \frac{15N^{1+c} + 1}{\alpha} \right)^2  \leq \left( \frac{16N^{1+c}}{\alpha} \right)^2 \leq \frac{256N^{2+2c}}{\alpha^2}$ many, with each pixel containing  at most  $\ceil{w^{i}_{\max} \cdot N^c} \leq \frac{N^c}{\alpha} + 1 \leq \frac{2 N^c}{\alpha}$ points.
	
	Observe, however, that in the discrete version of the pizza sharing instance we created, the points of the $q \in \{n, 2n\}$ point sets lie on vertices of a square grid with edge length $t = \frac{1}{\ceil{15N^{1+c}/\alpha}}$. These points are not guaranteed to be in general position or with unique $x$- and $y$-coordinates, and therefore, not all solutions of that instance can be translated back to a solution of the original corresponding continuous version of the instance. For the rest of this section, we will show how to turn this instance into one where the points in $P_1 \cup \dots \cup P_{q}$ are in general position and with unique $x$- and $y$-coordinates. Additionally, we will ensure that each point remains in its original pixel. 

    For each pixel, let us create a $k \times k$ square grid with edge length $\frac{t}{k+1}$, where $k := \frac{48 n^2 N^{2c}}{t^6 \alpha^2}$, which is placed at the center of each pixel. 
    The purpose is to place each point of $P_1 \cup \dots \cup P_{q}$ belonging to a pixel on the pixel's corresponding grid, such that all the aforementioned points are in general position and have unique $x$- and $y$-coordinates. 

    As shown above, for every $i \in [q]$, $|P_i| \leq \frac{2N^c}{t^2 \alpha}$. Therefore, the points that a pixel can contain are at most $q \cdot \frac{2N^c}{t^2 \alpha} \leq \frac{4nN^c}{t^2 \alpha} =: G$ many. Suppose now we place the points of the pixel in distinct vertices of the corresponding grid. 
    \begin{enumerate}
        \item Any pair of those defines a line, and to satisfy the ``general position'' condition, that line should not intersect any other point that is placed on any other pixel's grid. There are at most $\binom{G}{2}$ many such lines. For each line, we will forbid the placement of other points on it by removing the grid vertices it intersects throughout all pixels' grids. This means that, for each line, at most $k$ grid vertices have to be removed due to a pair of points of a single pixel. Therefore, overall, at most $\binom{G}{2} \cdot k \cdot \frac{1}{t^2}$ grid vertices have to be removed to satisfy the ``general position'' condition. 
        \item Each point defines one horizontal and one vertical line that intersects it. To satisfy the ``uniqueness of $x$- and $y$-coordinates'' condition, we have to forbid any other point from being placed on these two lines. To do so, it suffices that among all pixels' grids we remove the vertices that are intersected by these two lines. Each line removes at most $k$ grid vertices in a single pixel, so overall, at most $G \cdot 2 \cdot k \cdot \frac{1}{t^2}$ grid vertices have to be removed to satisfy the ``uniqueness of $x$- and $y$-coordinates'' condition.
    \end{enumerate}
    In total, at most $\binom{G}{2} \cdot k \cdot \frac{1}{t^2} + G \cdot 2 \cdot k \cdot \frac{1}{t^2} = \frac{k}{t^2}\frac{G^2 + 3G}{2} \leq \frac{2kG^2}{t^2}$ grid vertices have to be removed from each pixel's grid in order to respect the above conditions. For each pixel's grid, since it initially contained $k^2$ vertices, its remaining vertices that can be used for points of the pixel to be placed on are at least $k^2 - \frac{2kG^2}{t^2} = \frac{9 G^4}{t^4} - \frac{6 G^4}{t^4} = \frac{3 G^4}{t^4} \geq G$, where the first equality comes by our choice of $k := \frac{48 n^2 N^{2c}}{t^6 \alpha^2} = \frac{3 G^2}{t^2}$. Recall that each pixel contains at most $G$ points, therefore there are enough vertices for them to be placed on.

    So far we have shown that we can slightly perturb each point from the bottom left corner of a pixel so that it remains in the pixel, and furthermore, all points are in general position with unique $x$- and $y$-coordinates. It remains to show that we can do this perturbation in polynomial time. Indeed, it is easy to check that the following procedure achieves this task and requires only polynomially many steps: choose an arbitrary pixel and an arbitrary grid vertex in it to place one of the pixel's points on; from all pixels, remove all other grid vertices which have the same $x$- or $y$-coordinate with any of the pair's points (this can be done in polynomial time by exhaustively checking each of the $k^2/t^2$ grid vertices); next, while there is still a point of that pixel which has not been placed on its grid, place it on one of the non-removed vertices of its grid, then remove every other vertex from all pixels if (i) they have the same $x$- or $y$-coordinate, or (ii) if they are colinear with the pair formed by the new point and any of the older points (this again can be done by checking colinearity in polynomial time for all grid vertices); repeat the previous step for all pixels' points until all points from all pixels (polynomially many) have been placed on the respective grids.

    \begin{figure}[htbp]
			\begin{subfigure}[b]{0.47\textwidth}
				\centering
				\includegraphics[width=0.8\linewidth]{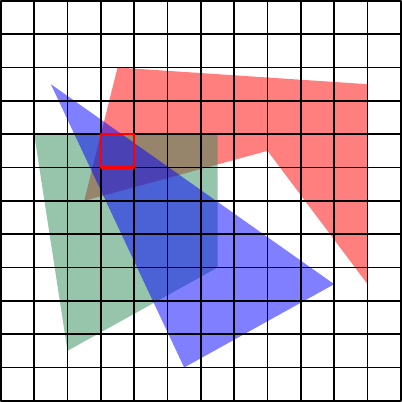}
				\caption{An example of the pixelated masses. All three masses overlap on the red pixel.} \label{fig:pixels-1}
			\end{subfigure}%
			\hspace*{\fill}   
			\begin{subfigure}[b]{0.47\textwidth}
				\centering
				\includegraphics[width=0.8\linewidth]{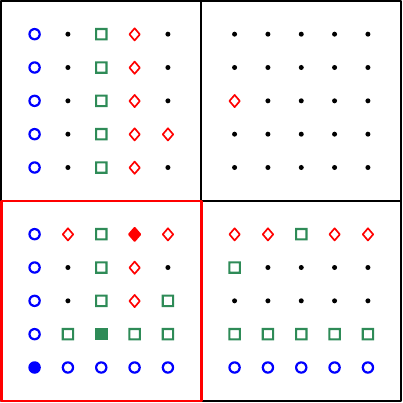}
				\caption{The bottom-left pixel corresponds to the red pixel of \cref{fig:pixels-1}.} \label{fig:pixels-2}
			\end{subfigure}%
			\caption{Turning pixels into points in general position with unique $x$- and $y$-coordinates. In \cref{fig:pixels-2}, three of the red pixel's points are placed one by one in its grid. First, the blue point (shaded disk) is placed on an arbitrary vertex; this ``forbids'' all points that follow to be placed on the blue hollow disks. Then the green point (shaded square) is placed on an arbitrary ``non-forbidden'' vertex; this, in turn, ``forbids'' the green hollow squares. Finally, the red point (shaded rhombus) is placed on one of the remaining vertices; this ``forbids'' the red hollow rhombuses. The black dots denote ``non-forbidden'' vertices where the next available point from the corresponding pixel can be placed on.}
			\label{fig:pixels}
		\end{figure}

    We have proven the following.
 
    \begin{lemma}\label{lem: skewed-grid}
        The perturbed points are in general position, they have unique $x$- and $y$-coordinates, and each has remained in its initial pixel.
    \end{lemma}

    We are now ready to prove the main lemma of this section.
	\begin{proof}[Proof of \cref{lem: cont-to-discr}]
		Consider the input of an \eps-\scut-\pizza or an \eps-\spizza instance, meaning, the description of $q \in \{n, 2n\}$ sets, respectively, of weighted polygons with holes on $[0,1]^2$. By definition, its size is $N \in \Omega(n)$. In time $\poly(N)$, we perform a triangulation of each polygon into non-obtuse triangles, therefore, in total we require $\poly(N)$ time for this task. Then, we perform the ``pixelation'' procedure, which requires, for each of $q$ mass distributions, checking whether each of $1/t^2 = \ceil{225N^{2+2c}/\alpha^2}$ pixels has a non-empty intersection with a triangle. This task can be performed in $\poly(N, 1/\alpha)$ time, since $c>0$ is a fixed constant. Next, the points of each point set $P_i$ created from the respective pixels are perturbed so that they are in general position, and they have unique $x$- and $y$-coordinates (\cref{lem: skewed-grid}). This procedure has as a result that Output (a) of \cref{def:discrete-straight-pizza} and \cref{def:discrete-sc-pizza} cannot be produced. Finally, notice that the number of points to be fairly divided is $\poly(N, 1/\alpha)$.
		
		We claim that the lines $\ell_1, \dots, \ell_m$ for some $m \leq 2n$ that are a solution to \eps-\dspizza or the line segments of the \scut-path solution of \eps-\dpizza (recall that in \cref{thm:straight-pizza-PPA} and \cref{thm:hvu-pizza-ppa-h} we allowed almost $2n$ lines and line segments, respectively) are also a solution to $(\eps - \eps')$-\spizza and $(\eps - \eps')$-\scut-\pizza, respectively, for any $\eps, \eps'$ with $6/N^c \leq \eps' < \eps \leq 1$, where $c>0$ is a constant.
		What remains is to show the correctness of this statement. Notice that, after ``pixelation'', we placed a set of at most $2 N^c/\alpha$ points at the bottom-left corner of the corresponding pixel. Then, we perturbed each point such that it remained in its initial pixel while ensuring that all points are in general position and unique $x$- and $y$-coordinates.
		
		Consider now a line (resp. a line segment) $\ell$ that is part of a solution of the $(\eps - \eps')$-\dspizza (resp. $(\eps - \eps')$-\dpizza) instance, and intersects the corresponding non-obtuse triangle from mass distribution $i \in [q]$, where $q = 2n$ (resp. $q = n$). The triangle has been pixelated, and its corresponding points belonging to $P_i$ have been created. As we discussed above, each point is inside its corresponding pixel. The line that cuts through the triangle can be thought of as intersecting a set $M$ of the pixels of the triangle's pixelated version. The points that correspond to the pixels of $L_p$ are clearly in the $L$-part of the cut; the points that correspond to the pixels of $R_p$ are clearly in the $R$-part of the cut. No matter what part of the cut the points corresponding to $M$ join, \cref{clm: pix-gap} and \cref{lem: pix-n-gap} apply. In the aforementioned results, notice that $L_p \cup M_L$ and $R_p \cup M_R$ correspond to regions that are defined by whole pixels, meaning that they are the union of pixel regions. Therefore, their respective areas are of the form $k_L \cdot t^2$ and $k_R \cdot t^2$, where $k_L , k_R \in \naturals$ represent the number of pixels on each part of the cut. By construction, if our triangle at hand belongs to $\mu_i$ and has weight $w_{i,j}$, then each of the $k_L$ pixels contains at least $\ceil{w_{i,j} \cdot N^c}$ points, and similarly for $k_R$. 
		
		Suppose we have turned the \eps-\spizza (resp. \eps-\scut-\pizza) to $(\eps - \eps')$-\dspizza (resp. $(\eps - \eps')$-\dpizza) as described above, so that all points are in general position (resp. have unique $x$- and $y$-coordinates). A solution for the latter problems always exists due to \cite{S21} and \cref{lem: discr-to-cont-sc-pizza}. Let us have a solution of any of the latter two problems, that is, a set of lines (resp. line segments) $\ell_1, \dots, \ell_m$ for $m \leq 2n$, that partition $[0,1]^2$ to \rplus and \rminus and for every $i \in [q]$, where $q \in \{ 2n, n \}$, we have $\left| |P_i \cap \rplus| - |P_i \cap \rminus| \right| \leq (\eps - \eps') \cdot |P_i|$. 
		
		Recall that $T^{i,j} \in \mathcal{F}_i$ is a non-obtuse triangle which belongs to the $j$-th polygon of $\mu_i$, and $\mathcal{F}_i$ is the set of such triangles that compose $\mu_i$. By $T^{i,j}_p$ we denote the pixelated version of $T^{i,j}$, while $M^{i,j}_{+}, M^{i,j}_{-}$ are $T^{i,j}_p$'s respective parts of the pixels intersected by lines, that join the $\rplus$ and the $\rminus$ sides, respectively.
		From our earlier analysis, we have an upper bound of $\frac{2N^c}{t^2 \alpha}$ for $|P_i|$, however, here we need a better one, namely where a factor of $\alpha$ is removed from the denominator. To that end, we will use the fact that $\sum_{T^{i,j} \in \mathcal{F}_i} w_{i,j} \cdot \area(T^{i,j}) = 1$, or equivalently, $\sum_{T^{i,j} \in \mathcal{F}_i} w_{i,j} \cdot N^{c} \cdot \area(T^{i,j}) = N^{c}$. We have
		\begin{align*}
			\sum_{T^{i,j} \in \mathcal{F}_i} w_{i,j} \cdot N^{c} \cdot \area(T^{i,j}) + \sum_{T^{i,j} \in \mathcal{F}_i} w^{i}_{\max} \cdot N^{c} \cdot \frac{\alpha}{N^{1+c}} \cdot \area(T^{i,j}) \leq \\ \leq N^{c} + w^{i}_{\max} \cdot N^{c} \cdot \frac{\alpha}{N^{1+c}},
		\end{align*}
		and since $\sum_{T^{i,j} \in \mathcal{F}_i} \area(T^{i,j}) \leq 1$, we get
		\begin{align*}
			\sum_{T^{i,j} \in \mathcal{F}_i} \ceil{w_{i,j} \cdot N^{c}} \cdot \area(T^{i,j}) + \sum_{T^{i,j} \in \mathcal{F}_i} \ceil{w^{i}_{\max} \cdot N^{c}} \cdot \frac{\alpha}{N^{1+c}} \cdot \area(T^{i,j}) \leq \\ \leq N^{c} + 1 + \left( w^{i}_{\max} \cdot N^{c} + 1 \right) \cdot \frac{\alpha}{N^{1+c}} .
		\end{align*}

		Now observe that, after pixelation, only the pixels at the boundary of each triangle $T^{i,j}$ can correspond to $\ceil{w^{i}_{\max} \cdot N^{c}}$ points instead of $\ceil{w_{i,j} \cdot N^{c}}$. Therefore, using the notation of \cref{eq: extra-pix}, where $S := \area(T^{i,j})$, only at most a fraction $S'/S$ can correspond to $\ceil{w^{i}_{\max} \cdot N^{c}}$ points. So, 
		\begin{align}\label{eq: points-ub}
			|P_i| &\leq \sum_{T^{i,j} \in \mathcal{F}_i} \ceil{w_{i,j} \cdot N^{c}} \cdot \frac{\area(T^{i,j})}{t^2} + \sum_{T^{i,j} \in \mathcal{F}_i} \ceil{w^{i}_{\max} \cdot N^{c}} \cdot \frac{\alpha}{N^{1+c}} \cdot \frac{\area(T^{i,j})}{t^2} \nonumber  \\
			&\leq \frac{1}{t^2} \cdot \left( N^{c} + 1 + \left( w^{i}_{\max} \cdot N^{c} + 1 \right) \cdot \frac{\alpha}{N^{1+c}} \right)  \nonumber \\
			&\leq \frac{1}{t^2} \cdot \left( N^{c} + 1 + \left( \frac{N^{c}}{\alpha} + 1 \right) \cdot \frac{\alpha}{N^{1+c}} \right)   \nonumber \\
			&\leq \frac{1}{t^2} \cdot \left( N^{c} + 1 + \frac{2}{N} \right)  \nonumber \\
			&\leq \frac{1}{t^2} \cdot \left( N^{c} + 3 \right) ,
		\end{align}
		where the second to last inequality comes from the fact that $1 \leq N^{c}/\alpha$.
		
		Putting everything together, we have
		\begin{align*}
			\left| \mu_i(\rplus) - \mu_i(\rminus) \right| &= \left| 
			\sum_{T^{i,j} \in \mathcal{F}_i} w_{i,j} \cdot \area(T^{i,j} \cap \rplus) - \sum_{T^{i,j} \in \mathcal{F}_i} w_{i,j} \cdot  \area(T^{i,j} \cap \rminus) \right| \\
			&= \left| \sum_{T^{i,j} \in \mathcal{F}_i} w_{i,j} \cdot \left( \area(T^{i,j} \cap \rplus) - \area(T^{i,j} \cap \rminus) \right) \right| \\
			&\leq \left| \sum_{T^{i,j}_p \in \mathcal{F}_i}  w_{i,j} \cdot (\area(T^{i,j}_p \cup M^{i,j}_{+}) - \area(T^{i,j}_p \cup M^{i,j}_{-})) \right| + \frac{2\alpha}{N^{c}} \cdot \sum_{T^{i,j}_p} w_{i,j} \\
			&\leq \left| \sum_{T^{i,j}_p \in \mathcal{F}_i}  w_{i,j} \cdot \left( \frac{|P_{i,j} \cap \rplus|}{\ceil{w_{i,j} \cdot N^{c}}} \cdot t^2 - \frac{|P_{i,j} \cap \rminus|}{\ceil{w_{i,j} \cdot N^{c}}} \cdot t^2 + 1 \cdot t^2 \right) \right| + \frac{2}{N^{c}} \\
			&\leq t^2 \cdot \left| \frac{1}{N^{c}} \cdot \sum_{T^{i,j}_p \in \mathcal{F}_i} |P_{i,j} \cap \rplus| - \frac{1}{N^{c}} \cdot \sum_{T^{i,j}_p \in \mathcal{F}_i} |P_{i,j} \cap \rminus| \right| + \frac{t^2}{\alpha}  + \frac{2}{N^{c}} \\
			&\leq t^2 \cdot \frac{1}{N^{c}} \cdot \left| |P_i \cap \rplus| - |P_i \cap \rminus| \right| + \frac{t^2}{\alpha} + \frac{2}{N^{c}} \\
			&\leq t^2 \cdot \frac{1}{N^{c}} \cdot (\eps - \eps')|P_i| + \frac{t^2}{\alpha}  + \frac{2}{N^{c}}  \\
			&\leq (\eps - \eps') \cdot \frac{1}{N^{c}} \cdot \left( N^{c} + 3 \right) + \frac{\alpha}{225 N^{2+2c}} + \frac{2}{N^{c}}  \qquad \text{(by \cref{eq: points-ub})} \\
			&\leq \eps - \eps' + \frac{3}{N^{c}} + \frac{\alpha}{225 N^{2+2c}} + \frac{2}{N^{c}}  \\
			&\leq \eps - \eps' + \frac{6}{N^{c}} \\
			&\leq \eps ,
		\end{align*}
		where the first inequality is acquired by the reverse triangle inequality in combination with \cref{lem: pix-n-gap}, the second inequality is due to the fact that the number of points in a pixel of $T^{i,j}_p$ is $\ceil{w^{i}_{\max} \cdot N^c}$, where $w^{i}_{\max} \geq w_{i,j}$, by definition, and the last inequality is by definition of $\eps'$.    
	\end{proof}

	\section{Exact computation of polygons' positive measure given a \scut-path} \label{app: exact-computation-measures}
	
	Consider an input of exact \scut-\pizza, i.e., the one of \cref{def:eps-hv-pizza}, where we additionally restrict the mass distributions to be weighted polygons with holes in the representation form described in \cref{sec:preliminaries}. We have so far ensured that our polygons are in $[0,1]^2$. For the computation of function $f$ described in the proof of \cref{thm: sc-pizza_exists} we need a way of computing the ``positive'' measure of a polygon, as dictated by a given $\vec{z} \in S^d$ (and its induced \scut-path). To simplify this computation, we will first use some well-known algorithm (e.g., \cite{GareyJPT78, AsanoAP86, Mehlhorn84, GhoshM91}) that triangulates polygons with holes in $O(N\log N)$ time (which is optimal), where $N$ is the input size, without inserting additional vertices. Then, we will triangulate the polygon further to end up with only non-obtuse triangles, which can be decomposed into axis-aligned right-angled triangles. \cref{alg: polygon_decomp_into_triangles} describes the two preprocessing steps. To prove the algorithm's correctness we need to obtain the preliminary results of the following section.
    
	\subsection{Computing areas of polygons via axis-aligned right-angled triangle decomposition} \label{app: axis-aligned-tr}
	
	Here we first show how an arbitrary triangle can be decomposed into right-angled triangles whose right angle is additionally axis-aligned. 

    Recall that the Borsuk-Ulam function $f : S^d \to \reals^n$ we defined in the proof of \cref{thm: sc-pizza_exists}, namely, $f(\vec{z}) = \mu(\rplus;\vec{z})$ captures the \rplus part of each of $n$ measures when cut by a \scut-path (induced by $\vec{z}$) with $d-1$ turns. From this, it is apparent that we need to be able to compute parts of the area of a polygon, depending on where the \scut-path cuts it. We first need to preprocess the input by: (i) decomposing each polygon into non-obtuse triangles (\cref{prop: non-obtuse_triangle_decomp}), and (ii) decomposing each such triangle into 5 axis-aligned right-angled triangles (\cref{prop: ax-al_ri-tr_decomp}). Then, we provide a Borsuk-Ulam function which, given a \scut-path, captures the measure found on the $\rplus$ region. The aforementioned decomposition, allows our function to be relatively simple, in the sense that it only needs to consider a single shape, that of axis-aligned right-angled triangles.
 
    We start by showing a simple polynomial time routine that achieves the first decomposition step.

    \begin{proposition} \label{prop: non-obtuse_triangle_decomp}
		Any polygon with holes can be decomposed into non-obtuse triangles in polynomial time.
	\end{proposition}

    \begin{proof}
    We first use a standard polynomial-time algorithm to triangulate the given polygon, for example, the technique of \cite{AsanoAP86}. Next, we check the obtuseness of each triangle $\trngl{ABC}$ of the triangulation by computing the squared lengths of its sides $AB^2, BC^2, AC^2$ (each is rational; a sum of squares of rationals), taking the largest one, w.l.o.g. $AC^2$ and then checking whether $AB^2 + BC^2 < AC^2$. If the inequality is not true then $\trngl{ABC}$ is non-obtuse and we proceed with the next available triangle. Otherwise, we add the line segment $BD$ that starts from $B$ and ends at $D$ on side $AC$, where $\widehat{BDC} = \widehat{ADB} = 90^{\circ}$.\footnote{
		We will denote by $\trngl{ABC}$ a triangle with vertices $A,B,C$ and, when clear
		from context, we will also use the same notation to indicate the \textit{area}
		of the triangle. Two intersecting line segments $AB$, $BC$ define two
		\textit{angles}, denoted $\widehat{ABC}$ and $\widehat{CBA}$. The order of the
		vertices implies a direction of the segments, i.e., in the former angle we have
		$AB$, $BC$ and in the latter we have $CB$, $BA$. We consider the direction of
		the segments and define the angle to be the intersection of the \textit{left}
		half-spaces of the segments. Therefore $\widehat{ABC} = 360^\circ -
		\widehat{CBA}$. This order will not matter if clear from context (e.g., in triangles).} The coordinates $(x_D, y_D)$ of $D$ are rationals since they are the solution of the following two equations: (a) one that dictates that $D$ is on $AC$: $\frac{y_{D} - y_{A}}{x_D - x_A} = \frac{y_A - y_C}{x_A - x_C}$, and (b) one that captures the fact that $BD$ and $AC$ are perpendicular: $\frac{y_B - y_D}{x_B - x_D} \cdot \frac{y_A - y_C}{x_A - x_C} = -1$. In fact, 
	\begin{align*}
		&x_D = \frac{Num}{Den}, \quad \text{and} \quad y_D = \frac{y_A - y_C}{x_A - x_C}(x_D - x_ A) + y_A ,
	\end{align*}
	\begin{align*}
		\text{where} \quad Num =& (y_A - y_D) [(y_B -y_A)(x_A - x_C) + (y_A - y_C) x_A] + x_B (x_A - x_C)^2 , \\
		\text{and} \quad Den =& (x_A -x_C) (x_A - x_C + y_A - y_C)	
	\end{align*}
    This results in two right-angled triangles $\trngl{ABD}$ and $\trngl{BCD}$. We then proceed with the next available triangle of the triangulation, until there is none left. 

    It is easy to see that the triangulation results in polynomial many triangles, and the above check and potential split of each triangle requires at most polynomial time. 
    \end{proof}
	
	At this point, the triangulation of each polygon consists of non-obtuse triangles (see \cref{fig:combined-a}).
	The next proposition decomposes further each non-obtuse triangle into axis-aligned right-angled triangles.
	
	\begin{proposition} \label{prop: ax-al_ri-tr_decomp}
		The area of any non-obtuse triangle can be computed using the areas of five axis-aligned right-angled triangles.
	\end{proposition}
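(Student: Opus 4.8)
The plan is to realise $\trngl{ABC}$ as the axis-aligned bounding box of its vertices with a few right-angled corner pieces deleted, and then to halve that box along a diagonal, ending up with exactly five axis-aligned right triangles. Write $x_{\min},x_{\max},y_{\min},y_{\max}$ for the extreme coordinates among $A,B,C$ and let $\mathcal R=[x_{\min},x_{\max}]\times[y_{\min},y_{\max}]$ be the bounding box, so that $\trngl{ABC}\subseteq\mathcal R$. The only step that uses the hypothesis is the following \emph{Claim}: every vertex of $\trngl{ABC}$ lies on $\partial\mathcal R$, and at least one of them is a corner of $\mathcal R$. To see this, suppose some vertex $V$ were strictly interior to $\mathcal R$. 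Then $V$ realises none of the four extreme coordinates, so the remaining two vertices $P,Q$ together realise all of them; checking the two sub-cases, $\{P,Q\}$ must be a pair of diagonally opposite corners of $\mathcal R$, so $PQ$ is a diagonal of $\mathcal R$. The circle on diameter $PQ$ is centred at the centre of $\mathcal R$ and passes through all four of its corners, hence $\mathcal R$ (their convex hull) lies in the closed disc it bounds, and every non-corner point of $\mathcal R$ lies in the open disc. Thus $V$ lies in the open disc on diameter $PQ$, giving $\widehat{PVQ}>90^\circ$ and contradicting non-obtuseness of $\trngl{ABC}$. In particular all three vertices lie on $\partial\mathcal R$, and we may relabel so that $B$ is a corner of $\mathcal R$.

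Now let $X$ be the corner of $\mathcal R$ opposite to $B$, and $Y,Z$ the other two corners. Since $\trngl{ABC}$ is convex and (by the Claim) all three of its vertices lie on $\partial\mathcal R$, every side of $\trngl{ABC}$ is a chord of $\mathcal R$, and the part of $\mathcal R$ lying strictly on the far side of that chord from $\trngl{ABC}$ is either empty or a right triangle based at a single corner of $\mathcal R$, with its two legs running along edges of $\mathcal R$ (hence axis-aligned). Moreover no chord can leave two adjacent corners of $\mathcal R$ strictly on its far side, since then a whole horizontal edge of $\mathcal R$ would lie strictly on that side while still containing a vertex of $\trngl{ABC}$. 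Consequently $\mathcal R\setminus\trngl{ABC}$ is a disjoint union of at most three axis-aligned right triangles, one at each corner of $\mathcal R$ other than $B$; generically the sides $AB$, $BC$, $CA$ cut off the corners $Y$, $Z$, $X$, yielding the flaps $\trngl{AYB}$, $\trngl{CBZ}$, $\trngl{XAC}$, and in the non-generic situations (a second vertex of $\trngl{ABC}$ at a corner, or a horizontal/vertical side) some of these flaps collapse to zero-area triangles. Finally, slicing $\mathcal R$ along the diagonal $BX$ gives $\mathcal R=\trngl{XYB}+\trngl{XBZ}$ as a sum of two axis-aligned right triangles, and since the flaps are pairwise disjoint and exactly fill $\mathcal R\setminus\trngl{ABC}$ we obtain
\[
\trngl{ABC}=\trngl{XYB}+\trngl{XBZ}-\trngl{AYB}-\trngl{XAC}-\trngl{CBZ},
\]
expressing $\area(\trngl{ABC})$ through the areas of five axis-aligned right-angled triangles, with this fixed sign pattern valid in every case once missing flaps are recorded as zero-area terms.

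The bulk of the write-up is routine bookkeeping: enumerating the configurations of $A,B,C$ on $\partial\mathcal R$, verifying in each that $\mathcal R\setminus\trngl{ABC}$ splits as asserted into at most three corner right triangles, and confirming the identity above (in particular that the zero-area terms always make the count exactly five). The one genuinely non-trivial ingredient, and the place where the non-obtuse hypothesis is indispensable, is the Claim that a non-obtuse triangle has no vertex in the interior of its own bounding box; the circle-on-a-diameter argument reduces it to a single inequality, and it is precisely what rules out the bad case of four removed corner pieces, which would otherwise require six axis-aligned right triangles rather than five.
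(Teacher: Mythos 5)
Your proof is correct and follows the same strategy as the paper: both establish that a non-obtuse triangle is tightly inscribed in its axis-aligned bounding box and then read off the identity $\trngl{ABC}=\trngl{XYB}+\trngl{XBZ}-\trngl{AYB}-\trngl{XAC}-\trngl{CBZ}$. The only cosmetic difference is in how the inscribing lemma is proved: the paper runs a case analysis culminating in the converse of the Pythagorean theorem, while you reach the same contradiction more synthetically via Thales' theorem (any interior point of the box lies inside the circle on a diagonal as diameter, hence sees that diagonal at an obtuse angle).
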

	
	\begin{proof}
		Consider a non-obtuse triangle $\trngl{ABC}$. A proof by picture is presented in \cref{fig:combined-b}, where we draw a segment from the top-left corner to the bottom-right one, and $\area(\trngl{ABC}) = \area(\trngl{XYB}) + \area(\trngl{XBZ}) - \area(\trngl{AYB}) - \area(\trngl{XAC}) - \area(\trngl{CBZ})$.
		
		The proof is immediate if we show that every non-obtuse triangle $\trngl{ABC}$ can be inscribed inside a rectangle such that all of its vertices touch the rectangle's perimeter and one of them touches a corner of the rectangle while each of the other two vertices touches one of the rectangle's non-incident sides to that corner. 
        To see this, consider a rectangle of minimum perimeter, which contains $\trngl{ABC}$ and its sides are parallel to the axes. Since its perimeter is minimum, each side touches at least one of the triangle's vertices, otherwise the perimeter could be reduced. And since the triangle has only three vertices, at least one of them has to be touching two sides of the rectangle, i.e., a corner of the rectangle. If only one triangle vertex touches a corner of the rectangle, then each of the other two vertices touches one of the non-incident sides of the rectangle's corner, otherwise the rectangle's perimeter can be reduced. If two triangle vertices are on corners of the rectangle, then the third vertex has to be on another corner (i.e., it is a right-angled triangle); otherwise, either the rectangle does not have minimum perimeter (the two corners have a common side), or it is obtuse (the two corners do not have a common side), both being contradictions. 
        
        Due to the above, the coordinates to each of the rectangle's corners are $(x_{L},y_{L}), (x_{L},y_{H}), (x_{H},y_{H})$, and $(x_{H},y_{L})$, where $x_{L}, x_{H}, y_{L}, y_{H}$ denote the minimum and maximum $x$- and $y$-coordinates of $\trngl{ABC}$'s vertices, respectively.
	\end{proof}

    Finally, using the above auxiliary results, \cref{alg: polygon_decomp_into_triangles} shows how to decompose each given polygon into axis-aligned right-angled triangles in polynomial time.
    
	\begin{algorithm}[h!]
		\caption{Preprocessing: decomposing polygons into axis-aligned right-angled triangles}  
		\label{alg: polygon_decomp_into_triangles}  
		\begin{algorithmic}[1]
			\REQUIRE{A polygon $P$ represented by its ordered vertices (see \cref{sec:preliminaries}).}
			\ENSURE{A set $H$ consisting of 5-tuples; each 5-tuple $s \in \left[ |H| \right]$ corresponds to a non-obtuse triangle $T_s$ such that $\bigsqcup_{s \in \left[ |H| \right]} \area(T_s) = \area(P)$; each tuple $( T_s^1, T_s^2, T_s^3, T_s^4, T_s^5 )$ consists of $5$ axis-aligned right-angled triangles such that $\area(T_s) = \area(T_s^1) + \area(T_s^2) - \area(T_s^3) - \area(T_s^4) - \area(T_s^5)$}
			
    		  \medskip
    		  \textbf{Preprocessing step 1:}
                \STATE{Run a poly-time algorithm to triangulate the given polygon (e.g., \cite{AsanoAP86}), and let $G$ be the set of the resulting triangles.}
    		\WHILE{there is an unchecked triangle $T \in G$}
                    \STATE{Check the obtuseness of $T$}
    			\IF{$T$ is obtuse}
                        \STATE{Split it into two right-angled triangles $T_{I}, T_{II}$ (Proof of \cref{prop: non-obtuse_triangle_decomp}).}
                        \STATE{$G \gets G \cup \{ T_{I}, T_{II} \}$}
    			\ENDIF
                \ENDWHILE
                
                \medskip
                \textbf{Preprocessing step 2:}
                \STATE{$H \gets \emptyset$}
                \WHILE{$G \neq \emptyset$}
                    \STATE{Consider a (non-obtuse) triangle $T \in G$.}
                    \STATE{Define the tuple $r := \left( (x_{L},y_{L}), (x_{L},y_{H}), (x_{H},y_{H}), (x_{H},y_{L}) \right)$, where $x_{L}, x_{H}, y_{L}, y_{H}$ denote the minimum and maximum $x$- and $y$-coordinates of $T$'s vertices, respectively.}
                    \STATE{Find a point in $r$ which corresponds to a non-right angle of $T$ and name that vertex $B$ (Guaranteed by the proof of \cref{prop: ax-al_ri-tr_decomp}).}
                    \STATE{Let $B$ be element $r(i)$ for some $i \in [4]$, and name the following points: $Y = r(i \pmod{4} + 1)$, $X = r(i+1 \pmod{4} + 1)$, and $Z = r(i+2 \pmod{4} + 1)$.}
                    \STATE{Name $A$ the vertex of $T$ located on the segment $YX$, and $C$ the vertex of $T$ on the segment $XZ$ (Both guaranteed to be in these segments by the proof of \cref{prop: ax-al_ri-tr_decomp}).}
                    \STATE{Define the tuple $v \gets (\trngl{XYB}, \trngl{XBZ}, \trngl{AYB}, \trngl{XAC}, \trngl{CBZ})$.}
                    \STATE{$H \gets H \cup \{ v \}$}
                    \STATE{$G \gets G \setminus \{ T \}$}
                \ENDWHILE			
		\end{algorithmic}
	\end{algorithm}

	\subsection{Constructing the Borsuk-Ulam function} \label{app: BU-func_comp}
	
	Here we show how to construct the Borsuk-Ulam function $f: S^d \to \reals^n$ given $n$ sets of weighted polygons, where $d, n \in \naturals$. We will focus on an arbitrary colour $i \in [n]$ and present the coordinate $f_{i}$. As discussed earlier, our function will capture the measure $\mu_i$ in the \rplus region of any given \scut-path. Furthermore, after the preprocessing step achieved by \cref{alg: polygon_decomp_into_triangles}, the function suffices to be able to capture the measure of simple shapes, namely axis-aligned right-angled triangles.
	
	Consider the $\tau \geq 1$ weighted polygons of the $i$-th colour, and let us 
	focus on a particular polygon $t \in [\tau]$. We have triangulated the polygon 
	into $m_t$ non-obtuse triangles. Consider one such triangle $T_s$, corresponding to some $s \in [m_t]$, and the \textit{virtual triangles} $T_{s}^{1}, T_{s}^{2}, T_{s}^{3}, T_{s}^{4}, 
	T_{s}^{5}$, which are the five axis-aligned right-angled triangles 
	that \cref{alg: polygon_decomp_into_triangles} gave as output (also see \cref{fig:combined-b}). W.l.o.g. we consider $T_{s}^1$ 
	and $T_{s}^2$ to be the \textit{positively contributing triangles} and the rest 
	to be the \textit{negatively contributing triangles}. For each of them, we will 
	be computing the positive measure determined by the \scut-path induced from the given point $\vec{z} \in S^d$ (see proof of \cref{thm: sc-pizza_exists}). By the axis-aligned right-angled triangle 
	decomposition described in the proof of \cref{prop: ax-al_ri-tr_decomp}, it suffices to show how to compute parts of areas of such a triangle, for all of its four possible \textit{orientations}: $Q_{I}, Q_{II}, Q_{III}, Q_{IV}$, where $Q_o$ is the orientation when, by shifting the triangle so that the vertex of the right angle is on $(0,0)$, the whole triangle is in the $o$-th quadrant.
	
	First, we identify the orientation of our triangle. For a fixed colour $i \in [n]$, for each possible category $Q_o$, $o \in \{ I, II, III, IV \}$ we show how to compute the term that an axis-aligned right-angled triangle $T_{s}^{v}$, $s \in [m_t]$, $v \in [5]$, contributes to the Borsuk-Ulam function $f_{i}(\vec{z})$, where $T_{s}$ is a non-obtuse triangle $\trngl{ABC}$ as in \cref{fig:combined-b}. We will show this for a $Q_I$ triangle with the help of \cref{fig: val-function}. The constructions for $Q_{II}, Q_{III}, Q_{IV}$ are omitted since they are symmetric to it.

    In what follows, for any point $W$ of $[0,1]^2$ we will denote by $x_{W}, y_{W}$ its coordinates. Suppose we are given the $Q_I$ triangle $\trngl{AYB}$ (as in \cref{fig:combined-b}). We are also provided with some \scut-path induced by $\vec{z} \in S^d$, and we focus on the strip $[y_j, y_{j+2}]$ for some $j \in \{ 0, 1, 3, 5, \dots, d-1 \}$ (resp. $j \in \{ 0, 1, 3, 5, \dots, d \}$) when $d$ is even (resp. odd), and $z_{j}, z_{j+1}$ which induce the vertical cut $x_j$ and define an \rplus and an \rminus region of the slice (see the proof of \cref{thm: sc-pizza_exists} for details). We also add all the artificial cuts needed in the bottom and top strips (see \cref{alg: cuts_to_paths}). We are only interested in the part of $\trngl{AYB}$ in the \rplus region, but since this could be either to the left or to the right of $x_j$, let us denote by $\area_j^{\ell}( \trngl{AYB} )$ and $\area_j^{r}( \trngl{AYB} )$ the areas of $\trngl{AYB}$ to the right and left part, respectively, of $x_j$ in the $j$-th slice. 

    \begin{figure}[t]
		\centering
		\includegraphics[scale=0.8]{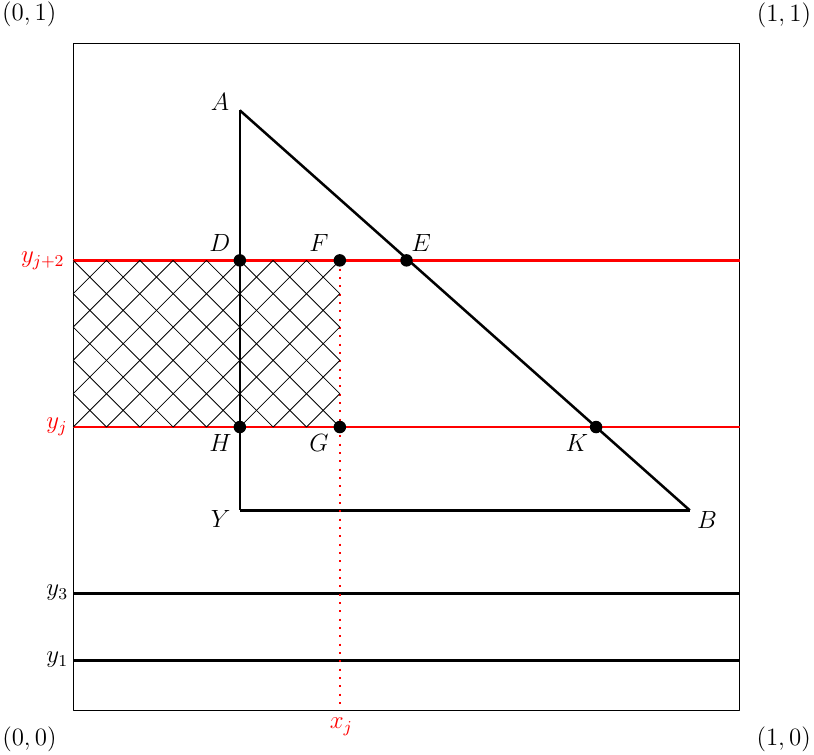}
		\caption{An example of a $Q_{I}$ triangle $\trngl{AYB}$. The given $\vec{z} \in S^d$ has $z_j < 0$ and $z_{j+1} > 0$, therefore, the \rplus part of $\trngl{AYB}$ is $\area_j^{r}( \trngl{AYB} ) = \area(EFGK)$ and its \rminus part is $\area_j^{\ell}( \trngl{AYB} ) = \area(FDHG)$.}
		\label{fig: val-function}
    \end{figure}

    Using \cref{fig: val-function}, we have $\area_j^{\ell}( \trngl{AYB} ) = \area(FDHG)$, and $\area_j^{r}( \trngl{AYB} ) = \area(EFGK)$. We have the following cases. (i) $x_j \in (0,1)$: if $z_j \geq 0$ and $z_{j+1} \leq 0$ (resp. $z_j \leq 0$ and $z_{j+1} \geq 0$), then $\area_j^{\ell}( \trngl{AYB} ) $ belongs to \rplus (resp. \rminus) and $\area_j^{r}( \trngl{AYB} )$ belongs to \rminus (resp. \rplus). (ii) $x_{j} \in \{ 0, 1 \}$: if $z_j + z_{j+1} \geq 0$ (resp. $z_j + z_{j+1} \leq 0$), then the part of $\trngl{AYB}$ that belongs to the $j$-th strip, denoted $\area_j( \trngl{AYB} ) $, belongs entirely to \rplus (resp. \rminus).\footnote{Notice that cases (i) and (ii) include the subcase $z_{j} = z_{j+1} = 0$. However then, the sign(s) of the (possibly two) parts of the $j$-th strip do not matter since the strip has width 0 and therefore does not contribute to the Borsuk-Ulam function.} So, w.l.o.g., we can say that $\area_j^{\ell}( \trngl{AYB} ) $ and $\area_j^{r}( \trngl{AYB} )$ have opposite signs (and it is possible that one of these areas is 0).

    Since the slices $y_j, y_{j+2}$, and cut $x_j$, in general, can have values that do or do not intersect $\trngl{AYB}$, we create truncated versions of them as follows:
    \begin{align*}
        y_{j}^{tr} := \max \{ y_B , \min \{ y_A, y_j \} \} 
        = \begin{cases}
            y_A, & y_j > y_A \\
            y_j, & y_j \in [y_B, y_A] \\
            y_B, & y_j < y_B,
        \end{cases}
    \end{align*}
    and similarly, $y_{j+2}^{tr} := \max \{ y_B , \min \{ y_A, y_{j+2} \} \}$, and $x_{j}^{tr} := \max \{ x_A , \min \{ x_K, x_j \} \}$. Given these, we need to define properly the $y$-coordinate of points $F, G$, and the $x$-coordinate of points $K, E$, so that the points stay on the boundary of the trapezoid $EDHK$. Observe that the line passing from points $A,B$ is described by $y = y_B + \frac{x_B - x}{x_B - x_A}(y_A - y_B)$, or equivalently, $x = x_B - \frac{y - y_B}{y_A - y_B}(x_B - x_A)$, so using these we get:
    \begin{align*}
        y_F &:= \max \left\{ y_B , \min \left\{ y_{j+2}^{tr}, y_B + \frac{x_B - x_{j}^{tr}}{x_B - x_A}(y_A - y_B) \right\} \right\}, \\ 
        y_G &:= \max \{ y_B, \min \{ y_{j}^{tr}, y_F \} \}, \\
        x_K &:= x_B - \frac{y_{j}^{tr} - y_B}{y_A - y_B}(x_B - x_A), \\
        x_E &:= x_B - \frac{y_{j+2}^{tr} - y_B}{y_A - y_B}(x_B - x_A).
    \end{align*}

    Now we are ready to compute the length of our line segments. We have, $EF = \max \{ 0, x_E - x_{j}^{tr} \}$, $GK = \max \{ 0 , x_K - x_{j}^{tr} \}$, $FG = y_F - y_G$, $HK = x_K - x_A$, $DE = x_E - x_A$, and $DH = y_{j+2}^{tr} - y_{j}^{tr}$. Using these, we can compute the quantities of interest:
    \begin{align*}
        \area_j^{r}( \trngl{AYB} ) &= \area(EFGK) = \frac{(GK + EF) \cdot FG}{2}, \\
        \area_j^{\ell}( \trngl{AYB} ) &= \area(FDHG) = \area(EDHK) - \area(EFGK) \\
        &= \frac{(HK + DE) \cdot DH}{2} - \frac{(GK + EF) \cdot FG}{2}.
    \end{align*}

    Using the above, we pick the element from $\left\{ \area_j^{\ell}( \trngl{AYB} ), \area_j^{r}( \trngl{AYB} ) \right\}$ that belongs to \rplus, and let us denote this quantity $p_{s}^{v}(j)$. This quantity represents the part that \emph{only the $j$-th strip} contributes to the positive measure of the Borsuk-Ulam function due to triangle $T_s^v$, for some $s \in [m_t]$ and $v \in [5]$. Consequently, the positive measure that the \emph{entire (unweighted) non-obtuse triangle $T_s$} contributes to the Borsuk-Ulam function according to the \scut-path induced by $\vec{z}$ is 
	\begin{align*}
		q_{s} := \sum_{j \in J} \left( p_{s}^{1}(j) + p_{s}^{2}(j) - p_{s}^{3}(j) - p_{s}^{4}(j) - p_{s}^{5}(j) \right),
	\end{align*}
    where $J := \{ 0, 1, 3, 5, \dots, d-1 \}$ (resp. $J := \{ 0, 1, 3, 5, \dots, d \}$) when $d$ is even (resp. odd). 
    
	Finally, recall that colour $i \in [n]$ has $\tau$ many weighted polygons, each of weight $w_{t}$, $t \in [\tau]$. Also, each polygon has been decomposed into $m_t$ many non-obtuse triangles. Then, $i$'s positive measure (i.e., the $i$-th coordinate of the Borsuk-Ulam function) is
	\begin{align*}
		f_{i}(\vec{z}) = \sum_{t=1}^{\tau} w_t \sum_{s=1}^{m_t} q_s.
	\end{align*}

\bigskip	
\section*{Acknowledgements} 
We thank the anonymous reviewers for comments and suggestions that helped simplify some proofs and improve the presentation of the paper. The first author was supported by EPSRC Grant EP/X039862/1 ``NAfANE: New Approaches for Approximate Nash Equilibria''. The second author was supported by EPSRC Grant EP/W014750/1 ``New Techniques for Resolving Boundary Problems in Total Search''.

	\bibliographystyle{alphaurl}
	\bibliography{references}

\end{document}